\newcommand{\version}{October 26, 2007}
\newtheorem{thm}{Theorem}[section]
\newtheorem{lemma}{Lemma}[section]
\newtheorem{cor}{Corollary}[section]
\newcommand{\Ds}{\mathscr{D}}
\newcommand{\Rh}{\mathbb R}
\newcommand{\Ch}{\mathbb C}
\newcommand{\Zh}{\mathbb Z}
\newcommand{\Lc}{\mathcal{L}}
\newcommand{\pd}{\partial}
\newcommand{\vol}{\mathop{\mathrm{vol}}\nolimits}
\newcommand{\comment}[1]
{
\vspace{2mm}
\noindent\textit{Comment. }#1
\vspace{3mm}
}
\def\beq{\begin{equation}}                     %
\def\eeq{\end{equation}}                       %
\begin{document}
\setcounter{page}{0}
\title{
\begin{flushright}
{\small Imperial/TP/07/DMB/01}
\\
 \vspace{-5mm}{\small SPhT-T07/134}
\\
 \vspace{-6mm} {\small hep-th/yymmnnn}
\end{flushright}
\vspace{1cm} \textbf{T-duality, Gerbes and Loop Spaces}}
\author{
{Dmitriy M.~Belov$\,{}^a$,\;  Chris~M. Hull$\,{}^{a,b}$ and \;Ruben Minasian$\,{}^c$} \vspace{6mm}
\\
\emph{\normalsize ${}^a$ The Blackett Laboratory, Imperial College London}
\vspace{-2mm}
\\
\emph{\normalsize Prince Consort Road, London SW7 2AZ, UK}
\vspace{4mm}
\\
\emph{\normalsize ${}^b$ The Institute for Mathematical Sciences, Imperial College London}
\vspace{-2mm}
\\
\emph{\normalsize 48 Prince's Gardens, London SW7 2PE, U.K.}
\vspace{4mm}
\\
\emph{\normalsize ${}^c$ Service de Physique Th\'eorique,
CEA/Saclay  }
\vspace{-2mm}
\\
\emph{\normalsize 91191 Gif-sur-Yvette Cedex, France }
}

\date {~}

\maketitle
\thispagestyle{empty}

\begin{abstract} We revisit sigma models on target spaces given by a
principal torus fibration $X\to M$, and show how treating the $2$-form $B$
as a gerbe connection captures the gauging obstructions and the global
constraints on the T-duality. We show that a gerbe connection on $X$,
which is invariant with respect to the torus action, yields an affine
double torus fibration $Y$ over the base space $M$ --- the generalization
of the correspondence space. We construct a symplectic form on the
cotangent bundle to the loop space $LY$ and study the relation of its
symmetries to T-duality. We find that geometric T-duality is possible
if and only if the torus symmetry is generated by Hamiltonian vector
fields. Put differently, the obstruction to T-duality is the
non-Hamiltonian action of the symmetry group.
\end{abstract}

\vspace{1cm} $~~$ \version

\clearpage

\tableofcontents

\section{Introduction}

$T$-duality is a perturbative symmetry of string theory and has played an important role in a wide number of applications, ranging from the study of WZW models to flux compactifications.
One curious aspect  of this duality is that due to mixing of the metric and the $B$ field under its action, it may connect   backgrounds which are very different   not only geometrically but also topologically.

Conventional $T$-duality is defined for certain backgrounds with
isometries
in which there is an action of an $n$-dimensional torus on the  target space $X$ preserving the metric and the
  $3$-form $H$,  which is locally given by the exterior derivative of the $B$-field.
  The equations of motion of the two-dimensional field theory then have global symmetries.
   The standard procedure for deriving the duality  starts by  gauging these  isometries, i.e. by making the symmetries local by coupling to world-sheet gauge fields. Then the target space of the model is enlarged to a space $Y$ by adding
  fields which provide the $n$ extra coordinates of $Y$ and which couple as
  Lagrange multipliers.
  One can then either eliminate the extra coordinates to recover the original model with target $X$, or integrate out the fibre coordinates of the torus fibration of $X$ to obtain the $T$-dual theory, and the
two      $T$-dual models  define the same quantum theory. Performing  the calculations classically (locally) give the well-known formulae for the transformation of the target space metric and $B$-field. However some steps of the procedure outlined above can have global obstructions, which we revisit in this paper. See
\cite{Hull:1989jk, Hull:1990ms} for the obstructions to gauging sigma models with WZ term,   e.g. \cite{Giveon:1991jj, Rocek:1991ps, Alvarez:1993qi}
 for $T$-duality from gauging sigma-models,  \cite{Hull:2006qs}
for the obstructions to $T$-duality.

\paragraph{Obstructions to gauging and $T$-duality.}
A well studied case of   geometric $T$-duality is that of a  target space $X$ that is a principal circle fibration over a base manifold $M$. The enlarged space of the sigma model --- the correspondence space $Y$ --- can be thought of as arising from the geometrization of the $B$-field and can be represented as two independent circle fibrations over $M$ (see \cite{Bouwknegt:2003vb, Bouwknegt:2003zg, Bouwknegt:2003wp} for detailed discussions and generalizations).
Two independent projections   give the two dual geometries  - the original $X$ or the dual $\tilde X$. The two manifolds are in general topologically distinct, since in the process of dualization we are exchanging 
the curvature of the original $S^1$ bundle with
the integral of the $3$-from $H$ along the circle fibre.
This picture can be extended to a higher dimensional case of principal torus fibre $\mathbb{T}^n$, provided  the right constraints on the $B$-field are imposed.

Consider a fibration $\pi : X \to M$ with   fibre $\mathbb{T}^n$ and   connection $\Theta_I$ ($I=1,...,n$) given by  a globally well defined smooth $1$-form on $X$ with values
in $\mathfrak{t}:=\mathrm{Lie}\,\mathbb{T}^n\cong \Rh^n$. A closed $3$-form $H$ that is invariant with respect to the torus action can be written globally as
\begin{equation}
\label{iso-H-intro}
H = \pi^*  H_3 +  \pi^* H^I_2 \wedge \Theta_I  + \frac{1}{2}
 \pi^* H^{IJ}_1\wedge \Theta_I \wedge \Theta_J  + \frac{1}{6}  \pi^* H^{IJK}_0
\Theta_I \wedge \Theta_J \wedge \Theta_K  \, ,
\end{equation}
where $H_j \in \Omega^j(M; \Lambda^{3-j}\mathfrak{t})$ for $j=0,1,2,3$.\footnote{Our choice for the position of the indices $I$ ($I=1,...,n$) is somewhat unconventional - they are placed down on objects that take values in $\Lambda^{\bullet} \mathfrak{t}$ and up - on ojects in   $\Lambda^{\bullet} \mathfrak{t}^*$. This way the formulae appear to be less cluttered with different indices. Whenever
this will not cause a confusion,  the index $I$ will be suppressed.}
Gauging the sigma model  is not obstructed provided \cite{Hull:1989jk}
\begin{equation}
\label{equiv}
 \imath(K^I)\,H  = d v^I \, , \qquad  \imath(K^I)\, v^J + \imath(K^J)\, v^I = 0 \, ,
\end{equation}
where the vector fields $K^I$ ($I,J = 1, ..., n$) are the torus generators on $X$ (the Lie derivative of $\Theta$ with respect to $K^I$ vanishes), $v^I$ are globally well-defined one-forms and $ \imath(K^I)$ denotes the contraction with  the vector $K^I$. These conditions can be obtained by demanding that the gauged sigma-model action involves globally defined forms \cite{Hull:1989jk, Hull:1990ms}  and are equivalent to requiring that  the $3$-form $H$ has an equivariant extension $\bar{H}$ \cite{Witten:1991mm, FigueroaO'Farrill:1994dj}. (This means $D \bar{H} =0,$ where $D= d + \phi_I\, \imath(K^I)$ and $\phi_I$ are   two-form generators of  $\mathrm{Lie}\,\mathbb{T}^n$.   Imposing  $\bar{H}|_{\phi=0} = H$, allows us to write $\bar{H} = H - \phi_I v^I$  iff (\ref{equiv}) holds.)

 More recently   it was shown that  when the gauging and the addition of Lagrange multipliers is done together rather than in steps, these conditions can be weakened significantly  \cite{Hull:2006qs, Hull:2006va}, to become
\begin{equation}
  H^{IJK}_0 = 0 \, , \qquad H^{IJ}_1 = d B_0^{IJ} \, ,
\end{equation}
where $B_0^{IJ}$ is globally defined. The basic picture of the correspondence space still holds --- the geometrization of the $B$-field can still lead to a correspondence space $Y$ with a double-torus fibration over the base $M$. It becomes important to identify the correct connection that upon $T$-duality gets exchanged with the connection on the torus bundle. Its curvature is in the same de Rham cohomology class as the $2$-form obtained from $H$ by a single contraction with a torus generator, $H^I_2=\imath({K_I})\,H$ (and this class is no longer required to be trivial!).

One can see that, in the absence of  $B_0^{IJ}$, the $2$-form $ H^I_2 $ is closed and can be thought of as the curvature of a connection $\tilde \Theta$ on the dual principal torus fibre over $M$, ${\tilde \pi}: {\tilde X} \to M$, and we may indeed pass to the sigma model on the extended target space given by the fibrewise product $Y= X \times_M {\tilde X}$.  $T$-duality acts to  interchange $\Theta$ and $\tilde \Theta$.

\paragraph{The generalized correspondence space.}
Much of our understanding of $T$-duality is based on the relation with gauged sigma-models, so it is interesting to investigate further the cases in which gauging is not possible.
We will focus here on one of the simplest obstructed cases, that in which $B_0^{IJ}$ is not globally defined but $B$ is invariant under the torus action. While being the simplest obstructed case, it is sufficiently nontrivial to illustrate some of the problems one encounters in attempting to perform an obstructed $T$-duality.
To discuss $T$-duality in the more general case one has to specify how the torus group
acts on the gerbe connection (the $B$-field), see \cite{Hull:2006qs} and
appendix \ref{app:t-gerbe} of this paper for more details.
The topological aspects of $T$-duality with nontrivial
$B$-fields have been discussed in
\cite{Mathai:all,Mathai:2004qc,Mathai:2004qq,Mathai:2005fd}.

As when discussing the global aspects of WZ models \cite{Felder:1988sd, Gawedzki:1987ak}, it will be crucial for our discussion to treat the $2$-form as a gerbe connection.
As before, its geometrization leads to a new enlarged space $Y$. As we shall see, upon imposing certain conditions on the gerbe structure, $Y$ has two different descriptions. It can either be viewed  as a principal torus fibration over the original torus fibration $X$ with a well defined connection form $\Theta_{\#}$, or  as an affine $2n$-dimensional torus fibration over the base $M$.  As we shall see the affine connection $( \Theta, \tilde \Theta)$ has the following gluing conditions on twofold overlaps $M_{\alpha \beta}$:
\begin{equation}
\begin{pmatrix}
\tilde{\Theta}_{\alpha}
\\
\Theta_{\alpha}
\end{pmatrix}
=
\begin{pmatrix}
\mathbbmss{1} & m_{\alpha\beta}
\\
0 & \mathbbmss{1}
\end{pmatrix}
\begin{pmatrix}
\tilde{\Theta}_{\beta}
\\
\Theta_{\beta}
\end{pmatrix} \,
\end{equation}
where $m^{IJ}_{\alpha\beta}$ are skewsymmetric integral valued matrices satisfying cocycle conditions on triple overlaps (see subsection \ref{sec:gerbe} for details).  They parameterize  the non-triviality of the $B$-field. When  $m^{IJ}_{\alpha\beta}$ can be set to zero, $B_0^{IJ}$ is a globally defined smooth function (and the $T$-duality is geometric).

The principal difference between the two connections $ \Theta_{\#}$, $ \tilde \Theta$ becomes clear when
the lifting of the original $\mathbb{T}^n$ action to $Y$ is considered ---
even for well-defined $B_0^{IJ}$ . When using the connection
$\Theta_{\#}$, the torus group in general lifts only to the universal
covering group $\Rh^n$ (which will be the case even if $B_0^{IJ}$ is
constant provided the matrix $B_0^{IJ}$ has irrational values at some points on $M$, which will necessarily be the case if it is a non-constant function), while for
$\tilde{\Theta}$ it always lifts to $\mathbb{T}^n$.  As mentioned, when
$B_0^{IJ}$ is well defined, the two connections have curvatures which are
in the same de Rham cohomology class. When it is not, this discrepancy
becomes one of the principal difficulties. In this situation it is
impossible to perform the $T$-duality in the standard way at the level of
the sigma model.  In such cases, it has been proposed that $T$-duality is nonetheless possible and
gives a  T-fold \cite{Hull:2004in}.

\paragraph{T-duality as a symmetry of a loop space.}
The phase space of the sigma model on the target $X$, given by the cotangent bundle to the loop space $LX$, has a natural symplectic structure with a closed $2$-form
\begin{equation}
\omega_X=\oint_{S^1}d\sigma\,\bigl[\delta p
+\imath(\pd_{\sigma}x)H\bigr] \, ,
\label{omegaX-intro}
\end{equation}
where $\delta$ is the differential on the loop space. We  shall see how this structure extends upon enlarging the target space from the principal torus fibration $X$ to the generalized correspondence space $Y$. The new symplectic form on $T^*LY$, $\omega_Y$,  has a natural $O(n,n, \mathbb{Z})$ action. In the unobstructed case this action simply leads to a new derivation of the old results. There are two different torus actions with two different symplectic reductions leading either back to the original sigma model on $X$ or the dual one on $\tilde X$ with the exchange of the first Chern classes of the torus fibrations and the fibrewise integrals of $h$ --- the characteristic class of the gerbe connection. When $B_0^{IJ}$ is not globally defined, the situation changes radically.  The natural extension of the construction for this case uses the globally well defined connection  $\Theta_{\#}$  (in a way similar to the reduction of a centrally extended current algebra on a torus fibre as explained in the Appendix \ref{app:reductionCourant}). However, as already mentioned, when using
$\Theta_{\#}$ the original  torus $\mathbb{T}^n$ acts as $\Rh^n$ in $T^*LY$ and one has to   deal with non-closed orbits in $Y$. This problem manifests itself in the fact that the action of the torus $\mathbb{T}^n$ on $\omega_Y$ is no longer hamiltonian. Not surprisingly, the obstruction is given by $H_1^{IJ}$.  As we shall see (Theorem \ref{thm:last}) there exists a way of writing the symplectic form   on $T^*LY$ using the affine connection $\tilde \Theta$ (see section~\ref{sec:loop} for details).

\vspace{0.5cm}\noindent The structure of the paper is as follows. In section \ref{sec:sigma}, we rederive the $T$-duality obstructions and describe the construction of the enlarged target space. The way the original gerbe structure defined the structure of this space is discussed there (and in Appendices  \ref{app:WZ}, \ref{app:t-gerbe}; Appendix \ref{app:u1} discusses the toy example of a $U(1)$ bundle reduction). $T$-duality in sigma models is discussed in section \ref{sec:stdTdual}.  In section \ref{sec:loop}, we discuss the construction of the phase space on the enlarged sigma model and the action of the obstructed $T$-duality (with the current algebra being discussed in Appendix \ref{app:reductionCourant}).


\section{Sigma-models on principal torus bundles}
\label{sec:sigma}\setcounter{equation}{0}

\subsection{Review of the obstructions  to T-duality}
\label{sec:sigmaT}
\paragraph{Principal torus bundle.}
Let $X$ be a principal torus bundle with   fibre $\mathbb{T}^n$:
\begin{equation*}
\mathbb{T}^n\hookrightarrow X\stackrel{\pi}{\longrightarrow}M.
\end{equation*}
\begin{wrapfigure}{l}{150pt}
\vspace{-17pt}
\includegraphics[width=130pt]{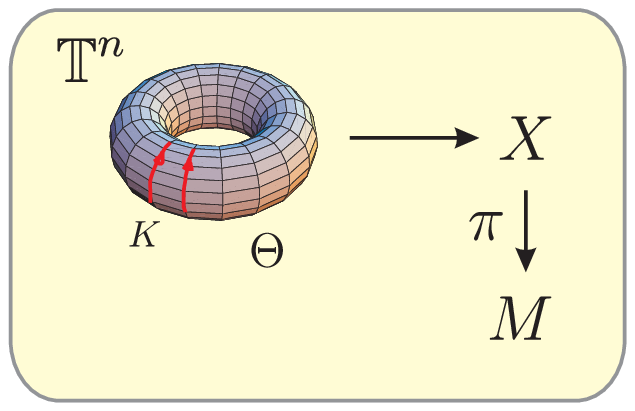}
\end{wrapfigure}
A connection on $X$ is a globally well defined smooth $1$-form $\Theta$ on $X$ with values
in $\mathfrak{t}:=\mathrm{Lie}\,\mathbb{T}^n\cong \Rh^n$.
Let $K\in \Gamma(TX\otimes \mathfrak{t}^*)$ be a fundamental vector field --- the
generator of the $\mathbb{T}^n$-action on $X$.
The connection $\Theta$ is characterized by
\begin{equation*}
\imath({K})\,\Theta=\mathbbmss{1}\in \mathfrak{t}^*\otimes \mathfrak{t}.
\end{equation*}
and the  equivariance condition
\begin{equation*}
\mathcal{L}(K)\,\Theta=0
\end{equation*}
where $\Lc(K)$ denotes the Lie derivative with respect to the vector field $K$.
These two conditions imply that $d\Theta=\pi^* F$ is a horizontal form,
$F\in\Omega^2_{\Zh}(M;\mathfrak{t})$.

It is convenient to choose a basis on $\mathfrak{t}$, so that one can
think of the connection $\Theta$ as a collection of one-forms $\{\Theta_I\}$, $I=1,\dots,n$.
We denote the corresponding fundamental vector fields by $\{\frac{\pd}{\pd\theta_I}\}$.
Note that, given a connection $\Theta$,
\begin{equation*}
\Theta_I\wedge \imath(\tfrac{\pd}{\pd\theta_I})\quad\text{and}\quad
1-\Theta_I\wedge \imath(\tfrac{\pd}{\pd\theta_I})
\end{equation*}
 are
the projection operators onto the vertical and horizontal forms respectively.
Similarly, we can decompose the differential $d$ into  a horizontal differential $\pi^*d_M$
and the vertical one $d_{\pd/\pd\theta}$:
\begin{subequations}
\begin{equation}
d=\bigl[1-\Theta_I\wedge\imath(\tfrac{\pd}{\pd\theta_I})\bigr]d
+\Theta_I\wedge\imath(\tfrac{\pd}{\pd\theta_I})\,d
=\pi^*d_M+d_{\pd/\pd\theta}
\end{equation}
where $d_M$ is the differential on $M$.
On the horizontal forms $\omega_{\texttt{hor}}$, $\imath(\frac{\pd}{\pd\theta_I})\omega_{\texttt{hor}}=0$,
one has $\imath(\frac{\pd}{\pd\theta_I})d\omega_{\texttt{hor}}=
\Lc(\frac{\pd}{\pd\theta_I})\,\omega_{\texttt{hor}}$ with $\Lc(\frac{\pd}{\pd\theta_I})\,\omega_{\texttt{hor}}$
also horizontal, and therefore
\begin{equation}
d\omega_{\texttt{hor}}=(\pi^*d_M)\,\omega_{\texttt{hor}}
+\Theta_I\wedge \Lc(\tfrac{\pd}{\pd\theta_I})\,\omega_{\texttt{hor}}.
\end{equation}
\label{hor_diff}
\end{subequations}
The lift $\pi^*d_M$ of the differential on $M$ is not nilpotent: rather
$(\pi^*d_M)^2\omega_{\texttt{hor}}=-F_I\wedge \Lc(\frac{\pd}{\pd\theta_I})\,\omega_{\texttt{hor}}$.

In the next section we will use \textit{a local description} of the torus
bundle. The following notation will be used. We choose an open cover $\{M_{\alpha}\}$
of the base $M$ by contractible open sets. We denote by $\theta_{\alpha\,I}$ ($I=1,\dots,n$),
$0\leqslant
\theta_{\alpha\,I}<1$, coordinates in the torus
fibre over the patch $M_{\alpha}$ with the gluing condition on twofold overlaps $\{M_{\alpha\beta}\}$
\begin{subequations}
\begin{equation}
\theta_{\alpha}|_{M_{\alpha\beta}}-\theta_{\beta}|_{M_{\alpha\beta}}=-\lambda_{\alpha\beta}
\label{thetagluing}
\end{equation}
where $\{\lambda_{\alpha\beta}\}$ are functions on twofold overlaps with values
in $\mathfrak{t}$ satisfying the cocycle condition on threefold overlaps: $\lambda_{\alpha\beta}
+\lambda_{\beta\gamma}+\lambda_{\gamma\alpha}=0$.
Then locally the connection $\Theta$ can be written as
\begin{equation}
\Theta|_{M_{\alpha}}=d\theta_{\alpha}+\pi^*A_{\alpha}
\quad\text{and}\quad
A_{\alpha}\bigr|_{M_{\alpha\beta}}-A_{\beta}\bigr|_{M_{\alpha\beta}}=d\lambda_{\alpha\beta}
\label{Agluing}
\end{equation}
\end{subequations}
where $A_{\alpha}$ is a $1$-form on $M_{\alpha}$ with values in $\mathfrak{t}$.

\paragraph{Restrictions on the $3$-form $H$.}
We are interested in sigma models on a target space $X$, given by a principal torus fibration,
and a Wess-Zumino term defined by a $2$-form gauge field $B$.
To be more precise, $B$ is a gerbe connection.
The implications of this description are important and will be  explained in the next subsection.
For the moment,  we are  interested in the curvature of the gerbe connection
--- a globally well defined smooth closed $3$-form $H\in\Omega^3_{\Zh}(X)$.

Since $\pi:X\to M$ is a principal torus bundle we have a free torus action on $X$.
The Wess-Zumino term is invariant with respect to this torus action (more precisely, the
holonomies\footnote{The holonomy of a $2$-form gauge field $B$
over a $2$-cycle $\Sigma$ is, roughly speaking,  $\exp (2 \pi i \int_{\Sigma} B) $ and is  defined   in a way similar to
the holonomy of a $1$-form gauge field --- see   Appendix~\ref{app:WZ} for details.
The
holonomy of a gerbe connection is an exponential of a Wess-Zumino
term: $\mathrm{Hol}(B,\Sigma)=\exp[2\pi i \mathit{WZ}(B,\Sigma)]$. } of the gerbe connection $B$ over $2$-cycles in $X$
are invariant with respect to the torus action) iff $\imath(\frac{\pd}{\pd\theta_I})H$ is an exact form.
This is a necessary condition for gauging the sigma model \cite{Hull:1989jk}.
However the conditions for $T$-duality are less restrictive: $\Lc(\frac{\pd}{\pd\theta_I})H=0$,
i.e. $\imath(\frac{\pd}{\pd\theta_I})H$ is a closed $2$-form but not necessarily an exact one \cite{Hull:2006qs}.
 Such a $3$-form $H$
can be written globally as
\begin{equation}
\label{iso-H}
H = \pi^*  H_3 + \langle \pi^* H_2, \Theta \rangle + \frac{1}{2}
\langle \pi^* H_1, \Theta \wedge \Theta \rangle + \frac{1}{6}  \langle \pi^* H_0,
\Theta \wedge \Theta \wedge \Theta \rangle \, ,
\end{equation}
where $H_j \in \Omega^j(M; \Lambda^{3-j}\mathfrak{t})$ for $j=0,1,2,3$
and $ \langle \cdot, \cdot \rangle$ denotes the natural pairing $\mathfrak{t}^*\otimes\mathfrak{t}\to\Rh$.
We use the same notation for the linear extension of this pairing to   antisymmetric
powers of $\mathfrak{t}$ and $\mathfrak{t}^*$.
For example, $\langle H_1,\Theta\wedge\Theta\rangle = H_1^{IJ}\wedge\Theta_I\wedge\Theta_J$.
The closure of $H$ implies the following equations on $\{H_j\}$
\begin{subequations}
\begin{equation}
\label{eq-H}
dH_j + \langle H_{j-1}, F \rangle =0,
\end{equation}
or using the basis we have
\begin{alignat}{2}
d_MH_3+H_2^I\wedge F_I&=0,&\qquad\qquad\qquad
d_MH_2^I+H_1^{IJ}\wedge F_J&=0,
\notag
\\
d_MH_1^{IJ}+H_0^{IJK}F_K&=0,
&\qquad
d_MH_0^{IJK}&=0.
\label{eq-Hb}
\end{alignat}
\end{subequations}

\paragraph{Double fibration.}
The contraction of the invariant $3$-form $H$ with the fundamental vector field $K$ defines
a closed $2$-form $F_{\#}\in  \Omega^2_{\Zh}(M; \mathfrak{t})$ on $X$ with integral periods
(provided that the  fundamental vector field $K$ is properly normalized, as  we will now
assume).
Using the basis in $\mathfrak{t}$ it can be written as
\begin{equation}
F_{\#}^I:=\imath(\tfrac{\pd}{\pd\theta_I})H=H_2^I-H_1^{IJ}\wedge \Theta_J
+\frac12\,H_0^{IJK}\Theta_J\wedge\Theta_K.
\label{Fhash}
\end{equation}
We would like now to geometrize this form, i.e. think of it
as a curvature of a connection $\Theta_{\#}$ on  a
principal torus bundle $\mathbb{T}^n_{\#}\hookrightarrow Y\stackrel{p}{\longrightarrow}X$
with $\mathrm{Lie}\,\mathbb{T}^n_{\#} = \mathfrak{t}^*$:\\
\vspace{-0.8cm}
\begin{wrapfigure}{l}{150pt}
\vspace{-7pt}
\includegraphics[width=130pt]{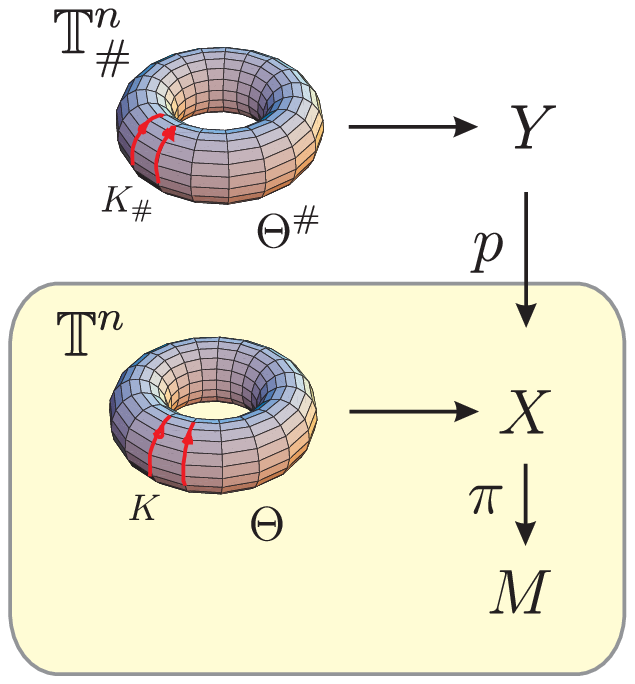}
\end{wrapfigure}
\begin{equation}
\label{theta-hash}
p^*(F_{\#}^I)= d \Theta_{\#}^I.
\end{equation}
To this end one has to construct a $2$-cocycle representing the first Chern
class of the torus bundle, such that its image in the de Rham cohomology is
$[F_{\#}]_{dR}$.

\paragraph{Torus actions on the double fibration.}
The total space $Y$ of the double fibration has a natural action of
the torus $\mathbb{T}^n_{\#}$.
It is natural to ask whether the original torus $\mathbb{T}^n$
acts on $Y$.
The connection $\Theta_{\#}$ on $Y\to X$ allows one to lift
the action of $\mathrm{Lie}\,\mathbb{T}^n$:
a fundamental vector field $K\in \Gamma(TX\otimes \mathfrak{t}^*)$ can be
lifted to $Y$ as a horizontal\footnote{This means that
$\imath({K_{\texttt{hor}}})\Theta_{\#}=0$.} vector field $K_{\texttt{hor}}\in \Gamma(TY\otimes \mathfrak{t}^*)$.
Note that these horizontal vector fields do not commute automatically.
Indeed, the commutator of two such fields $K_{\texttt{hor}}$ and $K'_{\texttt{hor}}$
is given by their contraction with the curvature
of   $\Theta_{\#}$:
\begin{equation*}
[K_{\texttt{hor}}, K'_{\texttt{hor}}] = [K,K']_{\texttt{hor}}+\imath({K_{\texttt{hor}}})
\imath({K'_{\texttt{hor}}})F_{\#} \, ,
\end{equation*}
or explicitly
\begin{equation}
[(\tfrac{\pd}{\pd\theta_I})_{\texttt{hor}},(\tfrac{\pd}{\pd\theta_J})_{\texttt{hor}}]
=-H_0^{IJK}\tfrac{\pd}{\pd\theta_{\#}^K}.
\label{horizz}
\end{equation}
Thus the vanishing of $H_0 \in \Omega^0_{\Zh}(M; \Lambda^{3}\mathfrak{t})$  is the necessary
condition for the action of $\mathrm{Lie}\,\mathbb{T}^n$ to remain abelian after lifting to $Y$.

Having lifted the action of
the Lie algebra we have not necessarily lifted the action of the Lie group as well.
To lift the torus action $\mathbb{T}^n$ to $Y$ in addition to $H_0^{IJK}=0$
we have to verify that the orbits of $(\frac{\pd}{\pd\theta_I})_{\texttt{hor}}$ are closed for
all $I=1,\dots,n$.
If this is so, then we have an action of the double torus $\mathbb{T}^n\times
\mathbb{T}^n_{\#}$ on $Y$. Otherwise, if $H_0^{IJK}=0$ but not all orbits are closed
we have the action of $\Rh^k \times \mathbb{T}^{n-k}\times\mathbb{T}^n_{\#}$ on $Y$
for some $k$ between $1$ and $n$.

A free action of $\mathbb{T}^n\times\mathbb{T}^n_{\#}$ on $Y$ means that
 $Y$ itself is a double torus fibration over
$M$. In particular,
\begin{equation}
[F_{\#}^I]_{dR}=[\pi^*H_2^I-\pi^*H_1^{IJ}\wedge\Theta_J]_{dR}
\label{Fh}
\end{equation}
must be a pullback of some de Rham cohomology class on $M$.
The first term in this expression is clearly a pullback.
However the second one is not in general a pullback
from $M$. Suppose that $H_1$ is exact, i.e. there exists a globally well defined
smooth $B_0\in \Omega^0(M;\Lambda^2 \mathfrak{t}^*)$
such that $H_1=dB_0$. Then we can rewrite \eqref{Fh} as
\begin{equation*}
[F_{\#}^I]_{dR}=\pi^*[H_2^I+B_0^{IJ}F_J]_{dR}.
\end{equation*}
Thus the necessary conditions for having a free action of the double torus on $Y$ are
$H_0^{IJK}=0$ and $H_1^{IJ}$ is exact. In this case $Y$ itself is a principal $\mathbb{T}^n
\times\mathbb{T}^n_{\#}$ bundle over $M$.
One can choose on $Y$ a connection $\tilde{\Theta}$ which respects the fact that
$Y$ is principal double torus bundle over $M$:
\begin{equation}
\tilde{\Theta}^I=\Theta_{\#}^I-B_0^{IJ}\Theta_J.
\label{tTh}
\end{equation}
It is this $\mathbb{T}^n
\times\mathbb{T}^n_{\#}$ bundle over $M$ that is referred to as the doubled torus bundle in \cite{{Hull:2004in},Hull:2006qs}.

We can summarize our discussion by the following
\begin{thm}
The contraction of the invariant $3$-form $H$ \eqref{iso-H}
with the fundamental vector field defines a closed $2$-form $F_{\#}$ with integral periods on $X$.
One can think of it as a curvature of a connection $\Theta_{\#}$ on a principal
torus bundle $\mathbb{T}^n_{\#}\hookrightarrow Y\stackrel{p}{\longrightarrow}X$: $p^*F_{\#}=d\Theta_{\#}$.
\begin{enumerate}
\item[a)] The action of $\mathrm{Lie}\,\mathbb{T}^n$, the Lie algebra of
the original torus, is abelian
on $Y$ iff $H_0^{IJK}=0$.
\item[b)] If in addition $H_1^{IJ}=dB_0^{IJ}$ is an exact form on $X$, then
$Y$ is a principal double torus bundle on $M$ with connections $\Theta$ and $\tilde{\Theta}$
defined by \eqref{tTh}.
\end{enumerate}
\end{thm}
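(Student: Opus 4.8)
The plan is to establish the three claims in sequence, using the local description around \eqref{Fhash}--\eqref{horizz} together with the structure equations for $\Theta_\#$. First, for the preliminary statement: $F_\#^I=\imath(\tfrac{\pd}{\pd\theta_I})H$ is closed because Cartan's formula, $dH=0$ and $\Lc(\tfrac{\pd}{\pd\theta_I})H=0$ give $dF_\#^I=\Lc(\tfrac{\pd}{\pd\theta_I})H-\imath(\tfrac{\pd}{\pd\theta_I})dH=0$; for integrality one flows a $2$-cycle $\Sigma\subset X$ along $\tfrac{\pd}{\pd\theta_I}$ for unit time (the flow is $1$-periodic by the normalization $0\leqslant\theta_{\alpha I}<1$), obtaining a $3$-cycle $C$ with $\int_CH=\int_\Sigma\imath(\tfrac{\pd}{\pd\theta_I})H=\int_\Sigma F_\#^I$, which is an integer since $H\in\Omega^3_{\Zh}(X)$. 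Hence $[F_\#^I]_{dR}$ lies in the image of $H^2(X;\Zh)$; choosing an integral lift produces a principal $\mathbb{T}^n_\#$-bundle $p:Y\to X$ realizing it, and starting from any connection on $Y$ and adding a global $\mathfrak t^*$-valued $1$-form representing the discrepancy arranges $d\Theta_\#=p^*F_\#$ on the nose.

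For part a), let $(K_I)_{\texttt{hor}}$ be the $\Theta_\#$-horizontal lift of $K_I=\tfrac{\pd}{\pd\theta_I}$, i.e. the unique lift annihilated by $\Theta_\#$. The structure equation for the curvature of $\Theta_\#$ displayed just before \eqref{horizz}, combined with $[K_I,K_J]=0$ and the explicit expression \eqref{Fhash}, gives \eqref{horizz}: $[(K_I)_{\texttt{hor}},(K_J)_{\texttt{hor}}]=-H_0^{IJK}\tfrac{\pd}{\pd\theta_\#^K}$. Since $H_0^{IJK}$ is locally constant on $M$ by the last equation of \eqref{eq-Hb} and the fields $\tfrac{\pd}{\pd\theta_\#^K}$ are pointwise independent (freeness of the $\mathbb{T}^n_\#$-action), this bracket vanishes identically iff $H_0\equiv0$, which proves the ``if'' direction. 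For the converse I would note that the horizontal lifts realise a central extension $0\to\mathfrak t^*\to\widehat{\mathfrak g}\to\mathrm{Lie}\,\mathbb{T}^n\to0$ of Lie algebras acting on $Y$, with constant, skew cocycle $H_0$; replacing the lift by another (necessarily differing by a vertical field) shifts the cocycle by a coboundary, and over the abelian algebra $\mathrm{Lie}\,\mathbb{T}^n$ a constant skew cocycle is a coboundary only if it vanishes, so no lift of $\mathrm{Lie}\,\mathbb{T}^n$ is abelian unless $H_0=0$.

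For part b), assume $H_0=0$ and $H_1^{IJ}=d_MB_0^{IJ}$ with $B_0$ a global section of $\Lambda^2\mathfrak t^*$ over $M$, and define $\tilde\Theta$ by \eqref{tTh}. Because $B_0^{IJ}\Theta_J$ is pulled back from $X$, $\tilde\Theta$ is again a connection on $Y\to X$; a direct computation using $d\Theta_\#=p^*F_\#$, $H_0=0$, $d\Theta_J=\pi^*F_J$ and $d_MB_0=H_1$ gives $d\tilde\Theta^I=q^*\bigl(H_2^I+B_0^{IJ}F_J\bigr)$ with $q=\pi\circ p$, a form pulled back from $M$ (and closed, by \eqref{eq-Hb}). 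Consequently the $\tilde\Theta$-horizontal lifts $(K_I)_{\texttt{hor}}^{\tilde\Theta}$ satisfy $\imath\bigl((K_I)^{\tilde\Theta}_{\texttt{hor}}\bigr)\tilde\Theta=0$ and $\imath\bigl((K_I)^{\tilde\Theta}_{\texttt{hor}}\bigr)d\tilde\Theta=0$ (they project to vertical fields on $X$, hence to zero on $M$), so they preserve $\tilde\Theta$, commute with one another and with $\tfrac{\pd}{\pd\theta_\#}$. It then remains to see that the orbits of these lifts are closed, so that the lifted $\mathrm{Lie}\,\mathbb{T}^n$ integrates to $\mathbb{T}^n$ rather than $\Rh^n$: the time-$1$ flow of $(K_I)^{\tilde\Theta}_{\texttt{hor}}$ covers $\mathrm{id}_X$, hence is the gauge transformation of $Y\to X$ given by the $\tilde\Theta$-holonomy around the $I$-th fibre circles, and since $d\tilde\Theta$ is pulled back from $M$ (so varying these circles sweeps a surface mapping to a curve in $M$) this holonomy is locally constant on $M$, while the integrality of $H$ forces it to be trivial. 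Equivalently, $c_1(Y\to X)=\pi^*[\,H_2+B_0F\,]$ descends to $M$ over $\Zh$, so $Y\cong X\times_M\tilde X$ for the $\mathbb{T}^n_\#$-bundle $\tilde X\to M$ with that first Chern class; then $\mathbb{T}^n\times\mathbb{T}^n_\#$ acts freely on $Y$ with quotient $M$, and $(\Theta,\tilde\Theta)$ (with $\Theta$ meaning $p^*\Theta$) is a connection for this double torus bundle.

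The only genuinely delicate step is this last one in part b): the closing-up of the lifted torus action, equivalently the integrality of $[\,H_2^I+B_0^{IJ}F_J\,]$ on $M$. It does not follow from $[F_\#^I]$ being integral on $X$ alone, since $\pi^*$ need not reflect integrality (for instance $H^2$ may collapse in passing from $M$ to $X$); rather it uses the full strength of $H\in\Omega^3_{\Zh}(X)$ through fibre integration --- a slant-product argument pairing $H$ with $2$-cycles of $M$ thickened by fibre circles --- together with $H_1=dB_0$ to strip the vertical legs off $F_\#$. I would isolate this as a lemma and prove it via the Leray spectral sequence (Gysin pushforward) for $\mathbb{T}^n\hookrightarrow X\stackrel{\pi}{\to}M$.
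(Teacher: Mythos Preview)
Your argument follows the same line as the paper's, which presents the proof as the discussion preceding the theorem rather than in a separate proof environment: closedness and integrality of $F_\#$ are asserted (you supply the Cartan/sweep-out argument), part~a) is read off from the curvature identity \eqref{horizz}, and part~b) is argued by showing that $[F_\#]_{dR}$ descends to $M$ once $H_1$ is exact and then introducing $\tilde\Theta$ via \eqref{tTh}.

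Where you go beyond the paper is in two places. First, your converse for a) via the central-extension cocycle is more than the paper supplies; in context the statement is about the canonical horizontal lift, so the commutator formula \eqref{horizz} already gives both directions. Second, and more substantively, you correctly flag the integrality/orbit-closure issue in b): the paper only checks that the de~Rham class of $F_\#$ is a pullback from $M$, and passes directly to the conclusion that $Y$ is a principal $\mathbb{T}^n\times\mathbb{T}^n_\#$ bundle over $M$. Your observation that $\pi^*$ need not reflect integrality, and that one must invoke the full strength of $H\in\Omega^3_{\Zh}(X)$ via fibre integration (or, equivalently, the explicit gerbe construction of $Y$ in Corollary~\ref{cor:1}) to pin down the integral Chern class on $M$, is a genuine refinement. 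The paper effectively defers this point to the later gerbe analysis in \S\ref{sec:gerbe}, where $Y$ is built with a specific cocycle coming from the gerbe data rather than from an arbitrary integral lift of $[F_\#]_{dR}$; your Leray/slant-product sketch is a reasonable alternative if one wants the theorem to stand on its own.
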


\comment{In the next subsection we will show that if $H_1$ is \textit{not} exact
then $Y$ is an affine torus bundle over $M$ with very specific gluing functions.}

\subsection{Gerbes}
\label{sec:gerbe}
The general sigma model includes a Wess-Zumino defined by a
 $2$-form gauge field $B$. When discussing the global properties of the sigma model it is important to treat
 $B$ as a gerbe connection. In this subsection we first review the definition of a gerbe on a general manifold $X$, and then consider in detail what happens when  $X$ is a principal torus bundle and the curvature of the gerbe connection is invariant with respect to the torus action. The geometrization of the gerbe for this case will be the key to the following discussion.

The main results of this subsection are Corollary~\ref{cor:1}
and Corollary~\ref{cor:2} which
state that an invariant gerbe connection on a principal torus bundle defines:
\begin{enumerate}
\item[a)] a principal torus bundle $p:Y\to X$ with connection;
\item[b)] an affine (double) torus bundle over $M$ with an affine connection.
\end{enumerate}

\paragraph{Gerbe.} We use the formulation of a gerbe   presented in section 1.2 in
\cite{Hitchin:2005uu, Hitchin:2005in}. Choose an open
\begin{wrapfigure}{l}{145pt}
\vspace{-7pt}
\includegraphics[width=135pt]{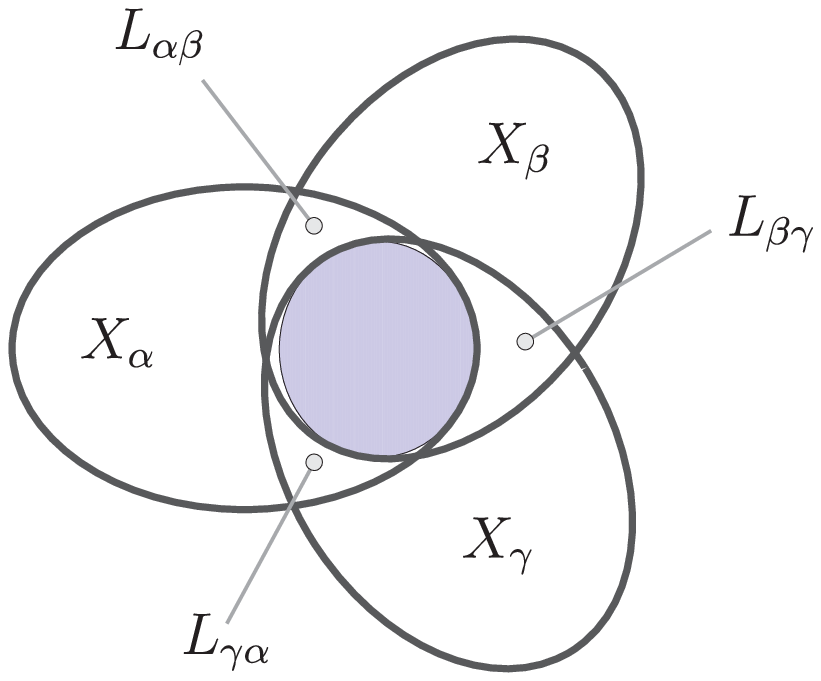}
\end{wrapfigure}
 covering $\{X_{\alpha}\}$ of $X$. Note
that these open sets do not need to be contractible. \textit{A gerbe} is defined
as the following structure: a line bundle $L_{\alpha\beta}$ on each
twofold intersection $X_{\alpha\beta}=X_{\alpha}\cap X_{\beta}$;
an isomorphism $L_{\alpha\beta}\cong L_{\beta\alpha}$;
a trivialization $f_{\alpha\beta\gamma}:X_{\alpha\beta\gamma}\to U(1)$ of the line bundle
$L_{\alpha\beta}\otimes L_{\beta\gamma}\otimes L_{\gamma\alpha}$
on each threefold intersection $X_{\alpha\beta\gamma}$; $f_{\alpha\beta\gamma}$ is a
cocycle, i.e. $\delta f_{\alpha\beta\gamma\delta}
=f_{\alpha\beta\gamma}f_{\beta\gamma\delta}^{-1}f_{\gamma\delta\alpha}
f_{\delta\alpha\beta}^{-1}=1$ on each fourfold intersection $X_{\alpha\beta\gamma\delta}$.

\textit{A gerbe with connection} is a gerbe plus a connection $A_{\alpha\beta}$
on\footnote{Connections $A_{\alpha\beta}$ on line bundles $L_{\alpha\beta}$
should not be confused with the connection $A_{\alpha}$ on a principal torus bundle
defined in subsection~\ref{sec:sigmaT}.}  the line bundle $L_{\alpha\beta}$ in each $X_{\alpha\beta}$
such that the section $f_{\alpha\beta\gamma}$ is covariantly constant with respect to
the induced connection on $L_{\alpha\beta}\otimes L_{\beta\gamma}\otimes L_{\gamma\alpha}$:
\begin{equation}
A_{\alpha\beta}+A_{\beta\gamma}+A_{\gamma\alpha}=
\frac{1}{2\pi i}f_{\alpha\beta\gamma}^{-1}\,df_{\alpha\beta\gamma} \, ,
\label{equivariance}
\end{equation}
and a two form (gerbe connection) $B_{\alpha}\in\Omega^2(X_{\alpha})$ such that
$B_{\alpha}-B_{\beta}=dA_{\alpha\beta}$ on $X_{\alpha\beta}$.

The gauge group of the gerbe is generated by a group of line bundles with connection.
Given a line bundle $L$ with connection $A$ we shift $L_{\alpha\beta}\mapsto
L|_{X_{\alpha\beta}}\otimes L_{\alpha\beta}$,
$A_{\alpha\beta}\mapsto A|_{X_{\alpha\beta}}+A_{\alpha\beta}$ and
$B_{\alpha}\mapsto B_{\alpha}+F|_{X_{\alpha}}$ where $F\in
\Omega^2_{\Zh}(X)$ is the curvature of the connection on $L$.
The gauge equivalence classes of gerbe connections form an abelian group
--- the Chiger-Simons cohomology $\check{H}^3(X)$ (for a pedagogical
introduction to Chiger-Simons cohomology see section~2 in \cite{Freed:2006yc}).

\paragraph{Gerbe on a principal torus bundle.}
As we have seen in section~\ref{sec:sigmaT}, a necessary condition
for $T$-duality is the invariance of the curvature $H$ with respect to
the torus action. Thus $H$ can be written as
\begin{equation}
H=\pi^*H_3+\langle \pi^*H_2,\Theta\rangle
+\frac12\langle \pi^*H_1,\Theta\wedge\Theta\rangle
+\frac16\langle \pi^*H_0,\Theta\wedge\Theta\wedge\Theta\rangle
\label{H}
\end{equation}
where $H_j$ for $j=0,1,2,3$ is a smooth $j$-form on $M$.
The torus bundle $X$ can be covered
by open sets $\{X_{\alpha}=\mathbb{T}^n\times M_{\alpha}\}$ where
$\{M_{\alpha}\}$ is an open covering of the base manifold $M$.

\vspace{3mm}
\noindent\textit{Structure on a coordinate patch. } In each coordinate patch $\mathbb{T}^n\times M_{\alpha}$ a gerbe connection $B_{\alpha}$
can be written as
\begin{equation}
B_{\alpha}=B_{2\alpha}+\langle B_{1\alpha},\Theta\rangle+\frac12\langle B_{0\alpha},\Theta\wedge\Theta\rangle,
\label{B}
\end{equation}
where $B_{2\alpha}$, $B_{1\alpha}$ and $B_{0\alpha}$ are horizontal
$2$-, $1$- and $0$-forms on $X_{\alpha}$.  Note that there is no $\pi^*$ in
front of these forms in the equation (\ref{B})
since a priori they can depend on the torus coordinates.
Locally the curvature $H_j|_{X_{\alpha}}$ can be written as
\begin{subequations}
\begin{align} \pi^*H_3|_{X_{\alpha}}&=(\pi^*d_M)B_{2\alpha}-\langle B_{1\alpha},\pi^*F\rangle;
\\
\pi^*H_2^{I}|_{X_{\alpha}}&=(\pi^*d_M)B_{1\alpha}^I
+\Lc(\tfrac{\pd}{\pd\theta_I})B_{2\alpha}-B_{0\alpha}^{IJ}\wedge\pi^*F_J;
\\
\pi^*H_1^{IJ}|_{X_{\alpha}}&=(\pi^*d_M)B_{0\alpha}^{IJ}
-\Lc(\tfrac{\pd}{\pd\theta_I})B_{1\alpha}^{J}
+\Lc(\tfrac{\pd}{\pd\theta_J})B_{1\alpha}^{I};
\\
\pi^*H_0^{IJK}|_{X_{\alpha}}&=\tfrac{\pd}{\pd\theta_I}B_{0\alpha}^{JK}
+\tfrac{\pd}{\pd\theta_J}B_{0\alpha}^{KI}
+\tfrac{\pd}{\pd\theta_K}B_{0\alpha}^{IJ}
\label{Hj}
\end{align}
\end{subequations}
where $\pi^*d_M$ is the horizontal exterior derivative defined in (\ref{hor_diff}).
Note that the left hand sides of the equations above do not
depend on the torus coordinates, and  thus the right hand sides should not depend on them either.

\vspace{3mm}
\noindent\textit{Structure on a twofold intersection. }
On twofold intersections $\{X_{\alpha\beta}=\mathbb{T}^n\times M_{\alpha\beta}\}$
the gerbe connections $\{B_{\alpha}\}$ are glued
by $1$-forms $\{A_{\alpha\beta}\}$ which can be written as
\begin{equation}
A_{\alpha\beta}=a_{\alpha\beta}+\langle h_{\alpha\beta},\Theta\rangle,
\label{Adecomp}
\end{equation}
where $a_{\alpha\beta}$ and $h_{\alpha\beta}$ are horizontal $1$- and $0$-forms on $X_{\alpha}$
respectively. There is no $\pi^*$ in this expression since a priori both $a_{\alpha\beta}$
and $h_{\alpha\beta}$ can depend on the torus coordinates.
The gluing condition yields
\begin{subequations}
\begin{align}
B_{2\alpha}-B_{2\beta}\bigl|_{M_{\alpha\beta}}&=(\pi^*d_M)\,a_{\alpha\beta}+\langle h_{\alpha\beta},\pi^*F\rangle,
\\
B_{1\alpha}^I-B_{1\beta}^I\bigl|_{M_{\alpha\beta}}&=(\pi^*d_M)h_{\alpha\beta}^I-\Lc(\tfrac{\pd}{\pd\theta_I})\,a_{\alpha\beta},
\\
B_{0\alpha}^{IJ}-B_{0\beta}^{IJ}\bigl|_{M_{\alpha\beta}}&=\tfrac{\pd}{\pd\theta_I}h^{J}_{\alpha\beta}
-\tfrac{\pd}{\pd\theta_J}h^{I}_{\alpha\beta}.
\end{align}
\label{BBgluing}
\end{subequations}

\vspace{3mm}
\noindent\textit{Structure on a threefold intersection. }
On threefold intersections  $\{X_{\alpha\beta\gamma}=\mathbb{T}^n\times M_{\alpha\beta\gamma}\}$
we are given   sections $f_{\alpha\beta\gamma}:X_{\alpha\beta\gamma}\to U(1)$ satisfying
the cocycle condition on fourfold intersections.
The connections $\{A_{\alpha\beta}\}$ must be such that $f_{\alpha\beta\gamma}$ is
covariantly constant \eqref{equivariance}:
\begin{equation}
a_{\alpha\beta}+a_{\beta\gamma}+a_{\gamma\alpha}\bigl|_{M_{\alpha\beta\gamma}}
=\frac{1}{2\pi i}\,(\pi^*d_M)\log f_{\alpha\beta\gamma}
\quad\text{and}\quad
h_{\alpha\beta}^I+h_{\beta\gamma}^I+h_{\gamma\alpha}^I
\bigl|_{M_{\alpha\beta\gamma}}=\frac{1}{2\pi i}\,\frac{\pd}{\pd\theta_I}\log f_{\alpha\beta\gamma}.
\label{fabc}
\end{equation}

\paragraph{T-duality constraints.}
Recall that the contraction of the fundamental vector field $\frac{\pd}{\pd\theta_I}$
with the form $H$ yields a closed $2$-from $F_{\#}^I$ on $X$ with integral periods.
In section~\ref{sec:sigmaT} we interpreted this form as a curvature of
a connection $\Theta_{\#}$ on a principal torus bundle $p:Y\to X$.
To perform the $T$-duality one has to construct this torus bundle and connection on it explicitly.
The torus bundle is defined by a $2$-cocycle on $X$. From equation \eqref{fabc}
it follows that the information contained in $\{h_{\alpha\beta}^I\}$
should be used to construct such a cocycle.
Moreover
locally $H|_{X_{\alpha}}=dB_{\alpha}$ so it is natural to ask whether $B$ \textit{alone}
defines the connection $\Theta_{\#}$. From equation
\begin{equation*}
F_{\#}^I\bigr|_{X_{\alpha}}=\imath(\tfrac{\pd}{\pd\theta_I})\,dB_{\alpha}
=\Lc(\tfrac{\pd}{\pd\theta_I})B_{\alpha}-d\,\imath(\tfrac{\pd}{\pd\theta_I})B_{\alpha}
\end{equation*}
it follows that the necessary condition for this is the invariance of $B_{\alpha}$
under the torus action: $\Lc(\tfrac{\pd}{\pd\theta_I})B_{\alpha}=0$ in all patches $X_{\alpha}$.
In particular, this condition implies that $H_0^{IJK}=0$.
If $\Lc(\tfrac{\pd}{\pd\theta_I})B_{\alpha}\ne0$ in some of the patches, then
one has to introduce   extra structure  into the formulation; to simplify the discussion, we will restrict ourselves to the case in which $B$ is invariant here.\footnote{See  Appendix \ref{app:t-gerbe} for an outline of a discussion of a  more general torus action on the gerbe connection $B$.}

The invariance of $B_{\alpha}$ with respect to the torus action restricts the
possible dependence of $\{h_{\alpha\beta}^I\}$,
 $\{a_{\alpha\beta}\}$ and $\{f_{\alpha\beta\gamma}\}$ on the torus coordinates:
the right hand sides of \eqref{BBgluing} must be pullbacks from the base.
The result can be summarized by the following

\begin{thm}
The gluing conditions for the gerbe connection $B_{\alpha}$ which are compatible with
the $\mathbb{T}^n$-invariance $\Lc(\frac{\pd}{\pd\theta_I})B_{\alpha}=0$ are
\begin{subequations}
\begin{align}
B_{0\alpha}^{IJ}-B_{0\beta}^{IJ}&=m^{IJ}_{\alpha\beta},
\\
B_{1\alpha}^I-B_{1\beta}^I&=d_M\tilde{h}_{\alpha\beta}^I
+m_{\alpha\beta}^{IJ}\bigl(A_{\beta}-\tfrac12\,d_M\lambda_{\beta\alpha}\bigr)_J,
\\
B_{2\alpha}-B_{2\beta}&=
[d_M\tilde{a}_{\alpha\beta}+\langle\tilde{h}_{\alpha\beta},F\rangle]
+\frac12\,
\bigl\langle m_{\alpha\beta},(A_{\beta}-\tfrac12\,d_M\lambda_{\beta\alpha})\wedge
(A_{\beta}-\tfrac12\,d_M\lambda_{\beta\alpha})\bigr\rangle
\end{align}
\label{BBBgluing}
\end{subequations}
where $\{m_{\alpha\beta}^{IJ}\}$ are skewsymmetric integral valued matrices satisfying
the cocycle condition on threefold overlaps, $\{\tilde{h}_{\alpha\beta}^I\}$
are functions (skew-symmetric in $\alpha,\beta$) defined on twofold overlaps $\{M_{\alpha\beta}\}$ and satisfying
the following condition on threefold overlaps
\begin{equation}
m_{\alpha\beta}+m_{\beta\gamma}+m_{\gamma\alpha}=0
\quad\text{and}\quad
\tilde{h}_{\alpha\beta}^I+\tilde{h}_{\beta\gamma}^I+\tilde{h}_{\gamma\alpha}^I\bigl|_{M_{\alpha\beta\gamma}}
=-\frac{1}{2}\,\bigl[m_{\alpha\beta}^{IJ}\lambda_{\beta\gamma\,J}
-m_{\gamma\beta}^{IJ}\lambda_{\beta\alpha\,J}\bigr].
\end{equation}
$\{\tilde{a}_{\alpha\beta}\}$ are $1$-forms defined on twofold overlaps and satisfying
the following condition on threefold overlaps:
\begin{multline}
\tilde{a}_{\alpha\beta}+\tilde{a}_{\beta\gamma}+
\tilde{a}_{\gamma\alpha}\bigl|_{M_{\alpha\beta\gamma}}
=\frac{1}{2\pi i}\,d_M\log \tilde{f}_{\alpha\beta\gamma}
-\frac{1}{12}\,d_M\bigl[\lambda_{\beta\alpha}(m_{\alpha\beta}+m_{\gamma\beta})\lambda_{\beta\gamma}\bigr]
\\
-\frac18\bigl(
\lambda_{\beta\alpha}m_{\gamma\beta}d_M\lambda_{\beta\alpha}
+\lambda_{\beta\alpha}m_{\gamma\alpha}d_M\lambda_{\beta\gamma}
+\lambda_{\beta\gamma}m_{\beta\alpha}d_M\lambda_{\beta\gamma}
+\lambda_{\beta\gamma}m_{\gamma\alpha}d_M\lambda_{\beta\alpha}
\bigr)
\end{multline}
where $\tilde{f}_{\alpha\beta\gamma}:M_{\alpha\beta\gamma}\to U(1)$ and it satisfies the
following condition on fourfold overlaps:
\begin{equation}
\tilde{f}_{\alpha\beta\gamma}\tilde{f}_{\beta\gamma\delta}^{-1}
\tilde{f}_{\gamma\delta\alpha}\tilde{f}_{\delta\alpha\beta}^{-1}=
\exp\left[
-\frac{2\pi i}{6}
\bigl(
\lambda_{\delta\gamma}m_{\delta\beta}\lambda_{\alpha\delta}
-\lambda_{\beta\gamma}m_{\alpha\delta}\lambda_{\delta\gamma}
+\lambda_{\beta\delta}m_{\gamma\delta}\lambda_{\delta\alpha}
\bigr)
\right].
\label{4fold}
\end{equation}
\end{thm}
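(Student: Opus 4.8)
The plan is to systematically integrate the $\mathbb{T}^n$-invariance condition $\Lc(\frac{\pd}{\pd\theta_I})B_{\alpha}=0$ through the gluing data, working down from the highest-degree component. First I would impose invariance on \eqref{BBgluing}(c): since $B_{0\alpha}^{IJ}-B_{0\beta}^{IJ}=\frac{\pd}{\pd\theta_I}h^J_{\alpha\beta}-\frac{\pd}{\pd\theta_J}h^I_{\alpha\beta}$ must be independent of the torus coordinates while each $B_{0\alpha}^{IJ}$ is (by invariance of $B_\alpha$) already independent, the left side is a pullback; call it $m^{IJ}_{\alpha\beta}$, automatically skewsymmetric. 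Integrality follows from the fact that $h_{\alpha\beta}$ is only defined modulo the periods of the line bundle $L_{\alpha\beta}$, equivalently from \eqref{fabc} and the cocycle property of $f_{\alpha\beta\gamma}$ — the quantity $\frac{\pd}{\pd\theta_I}h^J_{\alpha\beta}$ need not itself be single-valued on the torus, only its appearance in $B_{0\alpha}-B_{0\beta}$; the jump in $h$ around a cycle feeds into $\log f_{\alpha\beta\gamma}$, forcing $m_{\alpha\beta}\in\Zh$. The cocycle condition $m_{\alpha\beta}+m_{\beta\gamma}+m_{\gamma\alpha}=0$ is immediate from the telescoping of $B_{0\alpha}-B_{0\beta}$ on triple overlaps.

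Next I would solve the resulting differential constraint on the $\theta$-dependence of $h^I_{\alpha\beta}$. Having fixed the antisymmetrized $\theta$-derivative to be the constant $m^{IJ}_{\alpha\beta}$, and using \eqref{thetagluing} to express the torus coordinates in terms of a chosen patch, the general solution is $h^I_{\alpha\beta}=\tilde h^I_{\alpha\beta}(x)+m^{IJ}_{\alpha\beta}(\theta_{\beta}-\frac12\lambda_{\beta\alpha})_J$ or the analogous affine expression; here the $\frac12$ and the combination with $\lambda_{\beta\alpha}$ are forced by requiring the answer to be compatible with the symmetrization $h_{\alpha\beta}=-h_{\beta\alpha}$ and with $\Theta|_{M_\alpha}=d\theta_\alpha+\pi^*A_\alpha$. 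Substituting this into the $\theta$-dependent piece $\Lc(\frac{\pd}{\pd\theta_I})a_{\alpha\beta}$ in \eqref{BBgluing}(b) — which invariance of $B_{1\alpha}$ now fixes — and re-expressing $d\theta_\beta$ via $\Theta_\beta-A_\beta$, I would read off \eqref{BBBgluing}(b), with the shift $A_\beta-\frac12 d_M\lambda_{\beta\alpha}$ appearing because the $\theta$-dependence of $h_{\alpha\beta}$ couples to $\pi^*F$ through \eqref{BBgluing}(b). The same mechanism at one degree lower, starting from \eqref{BBgluing}(a) and the now-determined $\theta$-dependence of $a_{\alpha\beta}$ (with $a_{\alpha\beta}=\tilde a_{\alpha\beta}+(\text{quadratic in }\theta\text{ with coefficient }m_{\alpha\beta})$), produces \eqref{BBBgluing}(a).

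The triple-overlap conditions are then obtained by plugging these explicit forms into \eqref{fabc} and the line-bundle cocycle relation \eqref{equivariance}. For the $\tilde h_{\alpha\beta}$ relation: summing $h^I_{\alpha\beta}+h^I_{\beta\gamma}+h^I_{\gamma\alpha}$ over the cycle, the $\tilde h$ parts add to (a multiple of) $\frac{\pd}{\pd\theta_I}\log f_{\alpha\beta\gamma}$ by \eqref{fabc}, but the affine $m$-dependent parts, after using $\lambda_{\alpha\beta}+\lambda_{\beta\gamma}+\lambda_{\gamma\alpha}=0$ to convert everything to a common patch, leave a residual quadratic-in-$\lambda$ term — this is precisely the stated $-\frac12[m^{IJ}_{\alpha\beta}\lambda_{\beta\gamma\,J}-m^{IJ}_{\gamma\beta}\lambda_{\beta\alpha\,J}]$, and absorbing the $\theta$-independent remainder into a redefinition of $\log f$ gives $\tilde f_{\alpha\beta\gamma}$. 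The $\tilde a_{\alpha\beta}$ relation and finally \eqref{4fold} follow by iterating once more: the cocycle condition $\delta f_{\alpha\beta\gamma\delta}=1$, translated through the relation between $f$ and $\tilde f$, becomes the stated cubic-in-$\lambda$ anomaly on fourfold overlaps. I expect the main obstacle to be bookkeeping the $\frac12$, $\frac18$, $\frac{1}{12}$ combinatorial factors and the precise index placements in the cubic $\lambda m \lambda$ expressions: these come from symmetrizing affine (non-homogeneous) functions of $\theta$ and from the repeated application of $\lambda$-cocycle identities, and getting every sign and permutation right — rather than any conceptual difficulty — is where the real work lies. A useful consistency check at the end is that setting all $m_{\alpha\beta}=0$ must reduce \eqref{BBBgluing} and the triple/fourfold conditions back to the ordinary gerbe gluing data of subsection~\ref{sec:gerbe} with $B_0$ globally defined.
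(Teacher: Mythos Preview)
Your approach is essentially the same as the paper's: both cascade the invariance condition through the gluing equations \eqref{BBgluing} from highest to lowest vertical degree, determining the $\theta$-dependence of $h_{\alpha\beta}$ and then $a_{\alpha\beta}$, and finally feed the explicit forms into \eqref{fabc} and the cocycle condition on $f_{\alpha\beta\gamma}$ to extract the triple- and fourfold-overlap relations. Two small corrections are worth flagging. First, $a_{\alpha\beta}$ is at most \emph{linear} in the torus coordinates, not quadratic: once $h$ is affine in $\theta$, $(\pi^*d_M)h^I_{\alpha\beta}$ is already $\theta$-independent, so the requirement that $B_{1\alpha}^I-B_{1\beta}^I$ be horizontal forces $\Lc(\tfrac{\pd}{\pd\theta_I})a_{\alpha\beta}$ to be $\theta$-independent, i.e.\ $a_{\alpha\beta}(\theta_\beta)=\pi^*\tilde a_{\alpha\beta}+\langle\pi^*\rho_{\alpha\beta},\theta_\beta+\tfrac12\lambda_{\beta\alpha}\rangle$ for a horizontal $1$-form $\rho_{\alpha\beta}$; the $B_2$-gluing \eqref{BBgluing}(a) then fixes $\rho^I_{\alpha\beta}=-\tfrac12\,m^{IJ}_{\alpha\beta}(A_\beta-\tfrac12 d_M\lambda_{\beta\alpha})_J$, which is what produces \eqref{BBBgluing}(c) (not (a) --- note the labelling of \eqref{BBBgluing} runs opposite to \eqref{BBgluing}). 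Second, the paper's integrality argument is more direct than the one you sketch: $m^{IJ}_{\alpha\beta}$ is simply the $\Theta_I\wedge\Theta_J$ component of the curvature $F_{\alpha\beta}=dA_{\alpha\beta}$ of the line bundle $L_{\alpha\beta}$, and integrality follows from the integral periods of $F_{\alpha\beta}$ over the torus $2$-cycles.
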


\vspace{3mm}

Before proving the theorem let us discuss the implications of the result.
The invariance of the gerbe connection with respect to the torus action,  $\Lc(\tfrac{\pd}{\pd\theta_I})B_{\alpha}=0$, while not being the most general case,
allows for  gluing functions that are sufficiently nontrivial. In particular, $H_1^{IJ}$ can represent
a nontrivial de Rham cohomology class. The corresponding integral cohomology class
\begin{wrapfigure}{l}{160pt}
\includegraphics[width=140pt]{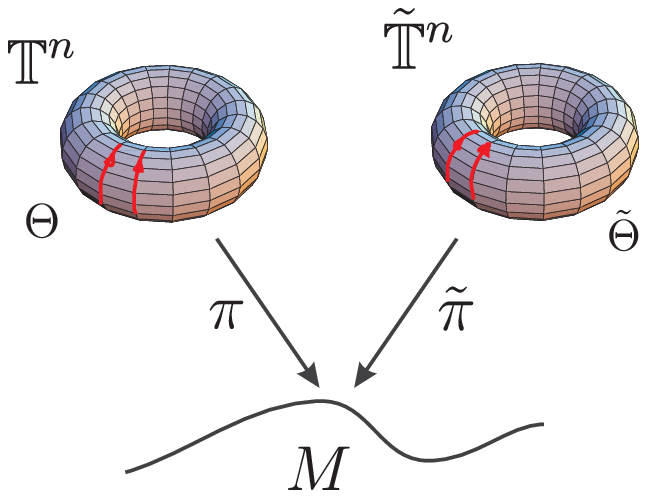}
\end{wrapfigure}
is represented by a cocycle $\{m_{\alpha\beta}\}$.
If $\{m_{\alpha\beta}\}$ is a coboundary (so it can be set to zero) then
$B_0$ is a globally well defined smooth function, $B_{1\alpha}$ has gluing
functions $\{\tilde{h}_{\alpha\beta}^I\}$ corresponding to a connection on a principal torus bundle:

\begin{cor}
\label{cor:0}
If $\{m_{\alpha\beta}\}$ is a coboundary then $Y$ is a principal double torus fibration.
The gluing functions are $\lambda_{\alpha\beta}$ and $\tilde{h}_{\alpha\beta}$:
\begin{equation}
\theta_{\alpha}-\theta_{\beta}=-\lambda_{\alpha\beta}
\quad\text{and}\quad
\tilde{\theta}_{\alpha}-\tilde{\theta}_{\beta}=-\tilde{h}_{\alpha\beta}.
\end{equation}
The connection one forms are $\Theta_I$ and $\tilde{\Theta}^I=d\tilde{\theta}^I_{\alpha}+B_{1\alpha}^I$.
$Y$ can also be thought of as a fibrewise product of two principal torus bundles, $X$ and $\tilde X$, defined by the gluing functions $\lambda_{\alpha\beta}$ and $\tilde{h}_{\alpha\beta}$ respectively.
\end{cor}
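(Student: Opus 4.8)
The plan is to pass to the gerbe gauge $m_{\alpha\beta}=0$, in which the gluing conditions \eqref{BBBgluing} of the preceding theorem become transparent, and then to match the resulting local data against the double fibration $p:Y\to X$ of Section~\ref{sec:sigmaT}. Since $\{m^{IJ}_{\alpha\beta}\}$ is a coboundary of locally constant integral skew matrices, write $m^{IJ}_{\alpha\beta}=c^{IJ}_{\alpha}-c^{IJ}_{\beta}$ with $c_{\alpha}\in\Lambda^{2}\mathfrak t^{*}$ locally constant; the shift $B_{0\alpha}\mapsto B_{0\alpha}-c_{\alpha}$, together with the compensating shifts of $B_{1\alpha}$, $B_{2\alpha}$, $A_{\alpha\beta}$ and $f_{\alpha\beta\gamma}$ forced by \eqref{BBBgluing} and the accompanying conditions on triple and fourfold overlaps, is a gerbe gauge transformation that achieves $m_{\alpha\beta}=0$ and leaves $Y$ unchanged as a bundle. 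In this gauge \eqref{BBBgluing} collapses to $B^{IJ}_{0\alpha}=B^{IJ}_{0\beta}$ --- so $B_{0}\in\Omega^{0}(M;\Lambda^{2}\mathfrak t^{*})$ is a globally defined function --- together with $B^{I}_{1\alpha}-B^{I}_{1\beta}=d_{M}\tilde h^{I}_{\alpha\beta}$ and $B_{2\alpha}-B_{2\beta}=d_{M}\tilde a_{\alpha\beta}+\langle\tilde h_{\alpha\beta},F\rangle$, while $\tilde h_{\alpha\beta}+\tilde h_{\beta\gamma}+\tilde h_{\gamma\alpha}=0$ on triple overlaps, so $\{\tilde h^{I}_{\alpha\beta}\}$ is a genuine $\mathfrak t^{*}$-valued cocycle of the type that defines a principal torus bundle. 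In particular $H^{IJ}_{1}=dB^{IJ}_{0}$ is exact, so part (b) of the theorem of Section~\ref{sec:sigmaT} already tells us that $Y$ is a principal double torus bundle over $M$ with connections $\Theta$ and $\tilde\Theta$; what is left is to identify the transition functions and the local form of $\tilde\Theta$.

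For this I would make the geometrization of $F_{\#}$ explicit. On $p^{-1}(X_{\alpha})$, since $\Lc(\pd/\pd\theta_{I})B_{\alpha}=0$, one has $F^{I}_{\#}|_{X_{\alpha}}=\imath(\pd/\pd\theta_{I})\,dB_{\alpha}=-d\,\imath(\pd/\pd\theta_{I})B_{\alpha}$, and from \eqref{B} the contraction is $\imath(\pd/\pd\theta_{I})B_{\alpha}=-B^{I}_{1\alpha}+B^{IJ}_{0\alpha}\Theta_{J}$; hence $B^{I}_{1\alpha}-B^{IJ}_{0\alpha}\Theta_{J}$ is a local primitive of $p^{*}F^{I}_{\#}$, and a connection on $p:Y\to X$ is $\Theta^{I}_{\#}=d\tilde\theta^{I}_{\alpha}+B^{I}_{1\alpha}-B^{IJ}_{0\alpha}\Theta_{J}$ for suitable fibre coordinates $\tilde\theta_{\alpha}$. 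Comparing on $X_{\alpha\beta}$ and using $B_{0\alpha}=B_{0\beta}$ and $B^{I}_{1\alpha}-B^{I}_{1\beta}=d_{M}\tilde h^{I}_{\alpha\beta}$, the global well-definedness of $\Theta_{\#}$ forces $\tilde\theta^{I}_{\alpha}-\tilde\theta^{I}_{\beta}=-\tilde h^{I}_{\alpha\beta}$, consistently with the triple-overlap relation on $\tilde h$. Thus $\{\tilde h_{\alpha\beta}\}$ are precisely the transition functions of the second torus factor, and the connection $\tilde\Theta$ of part (b) of the theorem of Section~\ref{sec:sigmaT} --- the one respecting the double-torus-bundle structure over $M$, equivalently whose curvature is a pullback from $M$, cf.\ \eqref{tTh} --- reads $\tilde\Theta^{I}=d\tilde\theta^{I}_{\alpha}+B^{I}_{1\alpha}$ in these coordinates, with curvature $d\tilde\Theta^{I}=\tilde\pi^{*}(H^{I}_{2}+B^{IJ}_{0}F_{J})$.

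Finally I would assemble the picture. On $Y$ the fibre coordinates $(\theta_{\alpha},\tilde\theta_{\alpha})$ glue by $\theta_{\alpha}-\theta_{\beta}=-\lambda_{\alpha\beta}$ and $\tilde\theta_{\alpha}-\tilde\theta_{\beta}=-\tilde h_{\alpha\beta}$, and the two blocks are decoupled: both $\lambda_{\alpha\beta}$ and $\tilde h_{\alpha\beta}$ are functions on $M$, the first independent of $\tilde\theta$ and the second of $\theta$. Hence the horizontal lift of the free $\mathbb T^{n}$-action on $X$, together with the $\mathbb T^{n}_{\#}$-action along the $p$-fibres, is a free $\mathbb T^{n}\times\mathbb T^{n}_{\#}$-action with quotient $M$ and local trivialisations $(\theta_{\alpha},\tilde\theta_{\alpha},\cdot)$, so $Y\to M$ is a principal $\mathbb T^{n}\times\mathbb T^{n}_{\#}$-bundle with transition cocycle $(\lambda_{\alpha\beta},\tilde h_{\alpha\beta})$; since a principal torus bundle is determined by its transition cocycle, $Y$ is exactly the fibrewise product over $M$ of the bundle with cocycle $\{\lambda_{\alpha\beta}\}$, namely $X$, and the bundle with cocycle $\{\tilde h_{\alpha\beta}\}$, namely $\tilde X$, the connection being $(\Theta,\tilde\Theta)$ with $\tilde\Theta^{I}=d\tilde\theta^{I}_{\alpha}+B^{I}_{1\alpha}$. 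The step I expect to be the real obstacle is the first one: checking carefully that ``$\{m_{\alpha\beta}\}$ a coboundary'' can be absorbed into the gerbe gauge group, tracking all the compensating shifts forced by \eqref{BBBgluing}--\eqref{4fold}, and --- relatedly --- that the $2$-cocycle used in Section~\ref{sec:sigmaT} to construct $Y$ is in that gauge literally $\{\tilde h_{\alpha\beta}\}$ and not merely a cohomologous cocycle; once that is granted, the bundle-theoretic conclusions are routine bookkeeping with the gluing formulae.
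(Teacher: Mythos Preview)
Your proposal is correct and follows the same line as the paper: set $m_{\alpha\beta}=0$ by a gerbe gauge transformation, read off from \eqref{BBBgluing} that $B_0$ is global and $B_{1\alpha}-B_{1\beta}=d_M\tilde h_{\alpha\beta}$ with $\tilde h$ a genuine cocycle, and then identify the connection $\Theta_{\#}$ (your second paragraph is exactly the content of Corollary~\ref{cor:1} specialised to $m_{\alpha\beta}=0$). The paper in fact treats Corollary~\ref{cor:0} as an immediate consequence of the theorem and does not spell out a separate proof, so your write-up is more detailed than what the paper provides; the step you flag as delicate (absorbing a coboundary $m_{\alpha\beta}$ into the gerbe gauge) is precisely what the paper means by ``so it can be set to zero''.
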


In general, $\{m_{\alpha\beta}\}$ is a nontrivial cocycle and $Y$ is a double fibration ---
a principal torus bundle $Y\to X$ over a principal torus bundle $X\to M$:
\begin{cor}
\label{cor:1}
The following functions defined on twofold overlaps $\{X_{\alpha\beta}\}$
\begin{equation}
\lambda_{\#\alpha\beta}(\theta_{\beta})=\tilde{h}_{\alpha\beta}
-m_{\alpha\beta}(\theta_{\beta}+\tfrac12\,\lambda_{\beta\alpha})
\end{equation}
satisfy the cocycle condition on threefold overlaps $\{X_{\alpha\beta\gamma}\}$.
This cocycle defines a principal torus bundle $p:Y\to X$ by the gluing condition
$\theta_{\#\alpha}-\theta_{\#\beta}=-\lambda_{\#\alpha\beta}$. The connection
$\Theta_{\#}$ on $Y$ can locally be written as
\begin{equation}
\Theta_{\#}^I\bigr|_{X_{\alpha}}=d\theta_{\#\alpha}^I+B_{1\alpha}^I-B_{0\alpha}^{IJ}\Theta_J.
\end{equation}
\end{cor}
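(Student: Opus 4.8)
The plan is to carry out a direct, three-step verification using only the gluing relations \eqref{BBBgluing}, the torus data \eqref{thetagluing}--\eqref{Agluing} of $X\to M$, the local expressions \eqref{Hj} for $H$, and the definition \eqref{Fhash} of $F_{\#}$.

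\emph{Step 1 (cocycle condition).} On a threefold overlap $X_{\alpha\beta\gamma}$ the three functions $\lambda_{\#\alpha\beta}(\theta_{\beta})$, $\lambda_{\#\beta\gamma}(\theta_{\gamma})$, $\lambda_{\#\gamma\alpha}(\theta_{\alpha})$ are expressed in three different fibre coordinates, so the first move is to rewrite all of them in, say, $\theta_{\beta}$ using $\theta_{\gamma}=\theta_{\beta}+\lambda_{\beta\gamma}$ and $\theta_{\alpha}=\theta_{\beta}+\lambda_{\beta\alpha}$ (from \eqref{thetagluing}) together with $\lambda_{\beta\gamma}=-\lambda_{\gamma\beta}$; this gives $\lambda_{\#\beta\gamma}=\tilde h_{\beta\gamma}-m_{\beta\gamma}(\theta_{\beta}+\tfrac12\lambda_{\beta\gamma})$ and $\lambda_{\#\gamma\alpha}=\tilde h_{\gamma\alpha}-m_{\gamma\alpha}(\theta_{\beta}+\lambda_{\beta\alpha}+\tfrac12\lambda_{\alpha\gamma})$. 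Adding the three, the coefficient of $\theta_{\beta}$ is $-(m_{\alpha\beta}+m_{\beta\gamma}+m_{\gamma\alpha})=0$ by the cocycle condition on $\{m_{\alpha\beta}\}$, and the $\theta$-independent remainder is a sum of $\tilde h_{\alpha\beta}+\tilde h_{\beta\gamma}+\tilde h_{\gamma\alpha}$ and bilinear $m\lambda$ terms; substituting the threefold identity for $\{\tilde h_{\alpha\beta}\}$ (and using $m_{\gamma\beta}=-m_{\beta\gamma}$, $m_{\gamma\alpha}=-m_{\alpha\beta}-m_{\beta\gamma}$, and the $\lambda$-cocycle) one finds all $m\lambda$ terms cancel in pairs, so $\lambda_{\#\alpha\beta}+\lambda_{\#\beta\gamma}+\lambda_{\#\gamma\alpha}=0$.

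\emph{Step 2 (the bundle $p:Y\to X$).} Since each $m_{\alpha\beta}$ is integral, replacing $\theta_{\beta}$ by $\theta_{\beta}+k$ with $k\in\Zh^n$ shifts $\lambda_{\#\alpha\beta}$ by $-m_{\alpha\beta}k\in\Zh^n$, so the relation $\theta_{\#\alpha}-\theta_{\#\beta}=-\lambda_{\#\alpha\beta}$ is a consistent gluing of $\mathbb{T}^n_{\#}$-valued fibre coordinates; combined with Step~1 it is a genuine transition cocycle and hence defines a principal $\mathbb{T}^n_{\#}$-bundle $p:Y\to X$. (When $\{m_{\alpha\beta}\}$ is a coboundary, $\lambda_{\#\alpha\beta}=\tilde h_{\alpha\beta}$ is pulled back from $M$, which recovers Corollary~\ref{cor:0}.)

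\emph{Step 3 (the connection).} Put $\Theta_{\#}^I|_{X_{\alpha}}=d\theta_{\#\alpha}^I+B_{1\alpha}^I-B_{0\alpha}^{IJ}\Theta_J$. One checks first that these patchwise $1$-forms agree on $X_{\alpha\beta}$: using $d(\theta_{\#\alpha}-\theta_{\#\beta})=-d\lambda_{\#\alpha\beta}$ with $d\theta_{\beta}=\Theta-\pi^*A_{\beta}$ from \eqref{Agluing}, the $B_{1}$-gluing in \eqref{BBBgluing}, and $B_{0\alpha}-B_{0\beta}=m_{\alpha\beta}$, the $m\,\Theta$-, $m\,A_{\beta}$-, $m\,d_M\lambda$- and $d_M\tilde h$-terms cancel in pairs, so $\Theta_{\#}$ is a globally well defined $1$-form on $Y$. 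It is a connection because $B_{1\alpha},B_{0\alpha},\Theta$ are pulled back from $X$, whence $\imath(\tfrac{\pd}{\pd\theta_{\#}^I})\Theta_{\#}^J=\delta^J_I$ and $\Lc(\tfrac{\pd}{\pd\theta_{\#}^I})\Theta_{\#}=0$. Finally the curvature follows from \eqref{Hj}: invariance of $B_{\alpha}$ (which also forces $H_0^{IJK}=0$) collapses those relations to $dB_{1\alpha}^I=\pi^*H_2^I+B_{0\alpha}^{IJ}\wedge\pi^*F_J$ and $dB_{0\alpha}^{IJ}=\pi^*H_1^{IJ}$, so with $d\Theta_J=\pi^*F_J$ one gets $d\Theta_{\#}^I=dB_{1\alpha}^I-dB_{0\alpha}^{IJ}\wedge\Theta_J-B_{0\alpha}^{IJ}\,\pi^*F_J=\pi^*H_2^I-\pi^*H_1^{IJ}\wedge\Theta_J=p^*F_{\#}^I$ by \eqref{Fhash}, as required.

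\emph{Main obstacle.} No step is conceptually deep; the delicate points are the bookkeeping in Step~1 — keeping straight which fibre coordinate each $\lambda_{\#}$ is written in, and matching the many half-integer $m\lambda$ terms using skew-symmetry of $m$ and $\tilde h$ — and the integrality remark in Step~2, which is precisely what makes $Y$ a bundle with compact fibre $\mathbb{T}^n_{\#}$ rather than only with fibre $\Rh^n$.
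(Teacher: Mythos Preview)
Your proof is correct and is precisely the direct verification the paper leaves implicit: the paper states Corollary~\ref{cor:1} as an immediate consequence of the gluing relations in the preceding theorem and gives no separate argument, so your three-step check (cocycle, well-definedness of the $\mathbb{T}^n_{\#}$-fibre gluing via integrality of $m_{\alpha\beta}$, and global patching plus curvature of $\Theta_{\#}$) is exactly what is needed. Each cancellation you describe goes through as written.
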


The same space $Y$ can be represented in a slightly different geometrical form:
$Y$ is an affine double torus bundle (for an introduction to affine torus bundles see section 4.1.1 in \cite{pdb})
with very special gluing functions:
\begin{cor}
\label{cor:2}
The gerbe reduction \eqref{BBBgluing} defines an affine $\mathbb{T}^{n}\times\mathbb{T}^n$-torus bundle over $M$:
the gluing conditions for coordinates on twofold overlaps are
\begin{equation}
\begin{pmatrix}
\tilde{\theta}_{\alpha}+\frac12\,\tilde{h}_{\alpha\beta}
\\
\theta_{\alpha}+\frac12\lambda_{\alpha\beta}
\end{pmatrix}
=
\begin{pmatrix}
\mathbbmss{1} & m_{\alpha\beta}
\\
0 & \mathbbmss{1}
\end{pmatrix}
\begin{pmatrix}
\tilde{\theta}_{\beta}+\frac12\,\tilde{h}_{\beta\alpha}
\\
\theta_{\beta}+\frac12\lambda_{\beta\alpha}
\end{pmatrix}.
\end{equation}
The corresponding affine connection has the form
$\Theta_{\alpha}=d\theta_{\alpha}+A_{\alpha}$ and
$\tilde{\Theta}_{\alpha}=d\tilde{\theta}_{\alpha}+B_{1\alpha}$
with the gluing condition on twofold overlaps
\begin{equation}
\begin{pmatrix}
\tilde{\Theta}_{\alpha}
\\
\Theta_{\alpha}
\end{pmatrix}
=
\begin{pmatrix}
\mathbbmss{1} & m_{\alpha\beta}
\\
0 & \mathbbmss{1}
\end{pmatrix}
\begin{pmatrix}
\tilde{\Theta}_{\beta}
\\
\Theta_{\beta}
\end{pmatrix}.
\end{equation}
It also follows that $\Theta_{\#}^I|_{Y_{\alpha}}=\tilde{\Theta}_{\alpha}^I-B_{0\alpha}^{IJ}\,\Theta_J$ is globally well
defined $1$-form on the total space of the affine torus bundle.\footnote{In fact, similar affinisation happens when one considers a simpler case of  reduction of a $U(1)$ bundle on a principal torus bundle $X$. The basic steps are the same, but the derivation is much lighter and is presented in Appendix \ref{app:u1}.}
\end{cor}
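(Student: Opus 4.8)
The plan is to derive Corollary~\ref{cor:2} directly from the gerbe gluing data \eqref{BBBgluing} together with the circle-bundle gluing \eqref{thetagluing}--\eqref{Agluing} established in subsection~\ref{sec:sigmaT}. First I would fix the \emph{shifted} fibre coordinates $\hat\theta_\alpha:=\theta_\alpha+\tfrac12\lambda_{\alpha\beta}$ ``centred'' on each overlap, and likewise introduce a dual coordinate $\tilde\theta_\alpha$ on the bundle $Y$ via the cocycle $\{\lambda_{\#\alpha\beta}\}$ of Corollary~\ref{cor:1}, together with its own centring shift $\tfrac12\tilde h_{\alpha\beta}$. The point of these half-shifts is that they absorb precisely the $\lambda$-dependent and $\tfrac12 d_M\lambda$-dependent correction terms that clutter \eqref{BBBgluing}; this is the standard device (used e.g.\ in \cite{pdb}) that turns an awkward affine cocycle into the clean upper-triangular form. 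I would then simply compute $\tilde\theta_\alpha+\tfrac12\tilde h_{\alpha\beta}-[\tilde\theta_\beta+\tfrac12\tilde h_{\beta\alpha}]$ using $\theta_{\#\alpha}-\theta_{\#\beta}=-\lambda_{\#\alpha\beta}$ and the formula for $\lambda_{\#\alpha\beta}$, and check that the answer is $m_{\alpha\beta}(\theta_\beta+\tfrac12\lambda_{\beta\alpha})$, i.e.\ exactly the off-diagonal entry of the claimed $2\times2$ transition matrix, while the bottom row reproduces \eqref{thetagluing} in centred form.

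Next I would verify that this $GL(2n,\Zh)$-valued (in fact unipotent) assignment $g_{\alpha\beta}=\bigl(\begin{smallmatrix}\mathbbmss1&m_{\alpha\beta}\\0&\mathbbmss1\end{smallmatrix}\bigr)$ satisfies the cocycle condition $g_{\alpha\beta}g_{\beta\gamma}g_{\gamma\alpha}=\mathbbmss1$ on triple overlaps --- but since the matrices are unipotent this reduces to $m_{\alpha\beta}+m_{\beta\gamma}+m_{\gamma\alpha}=0$, which is exactly the first threefold-overlap condition in the Theorem. One must also confirm that the affine (inhomogeneous) part of the transition --- the shifts acting on the coordinates --- is consistent on triple overlaps; this is where the threefold conditions on $\{\tilde h_{\alpha\beta}\}$ and on $\{\lambda_{\alpha\beta}\}$ enter, and the half-factors in $\tfrac12\tilde h$, $\tfrac12\lambda$ are designed so that the quadratic terms $m_{\alpha\beta}\lambda_{\beta\gamma}$ match. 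So the affine-bundle statement is really a bookkeeping consequence of the Theorem once the right coordinates are chosen.

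For the connection statement I would differentiate the coordinate gluing: from $\tilde\theta_\alpha=\tilde\theta_\beta+m_{\alpha\beta}\theta_\beta+(\text{const in fibre})$ one gets $d\tilde\theta_\alpha=d\tilde\theta_\beta+m_{\alpha\beta}\,d\theta_\beta$, and then adding the local connection pieces --- $\Theta_\alpha=d\theta_\alpha+A_\alpha$ with $A_\alpha-A_\beta=d\lambda_{\alpha\beta}$ from \eqref{Agluing}, and $\tilde\Theta_\alpha=d\tilde\theta_\alpha+B_{1\alpha}$ with the $B_{1\alpha}-B_{1\beta}$ rule from \eqref{BBBgluing}(b) --- one checks that the $d\lambda$ and $d_M\tilde h$ inhomogeneities cancel, leaving exactly $\tilde\Theta_\alpha=\tilde\Theta_\beta+m_{\alpha\beta}\Theta_\beta$ and $\Theta_\alpha=\Theta_\beta$. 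Finally, the global well-definedness of $\Theta_\#^I|_{Y_\alpha}=\tilde\Theta_\alpha^I-B_{0\alpha}^{IJ}\Theta_J$ is immediate from the two facts just proved plus $B_{0\alpha}^{IJ}-B_{0\beta}^{IJ}=m_{\alpha\beta}^{IJ}$: the $m_{\alpha\beta}\Theta$ picked up by $\tilde\Theta_\alpha$ is cancelled by the $m_{\alpha\beta}\Theta$ coming from the jump in $B_{0\alpha}$ (using $\Theta_\alpha=\Theta_\beta$), and this is consistent with Corollary~\ref{cor:1}. I expect the main obstacle to be purely combinatorial: keeping the antisymmetrizations, the orderings of the $\mathfrak t$-valued $\lambda$'s against the matrices $m$, and the various $\tfrac12,\tfrac18,\tfrac1{12}$ factors straight so that the quadratic and cubic correction terms in \eqref{BBBgluing} and in the $\tilde a$, $\tilde f$ conditions cancel on triple and fourfold overlaps --- there is no conceptual difficulty once the centred coordinates are in place, but the algebra must be done carefully, and it is essentially the content of the proof of the preceding Theorem re-expressed geometrically.
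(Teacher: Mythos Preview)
Your approach is correct and is precisely the derivation the paper leaves implicit: Corollary~\ref{cor:2} is stated in the paper without a separate proof, as it follows directly from Corollary~\ref{cor:1} and the gluing data of the preceding Theorem by the computation you outline. In particular, identifying $\tilde\theta_\alpha$ with the fibre coordinate $\theta_{\#\alpha}$ of Corollary~\ref{cor:1}, using $\theta_{\#\alpha}-\theta_{\#\beta}=-\lambda_{\#\alpha\beta}=-\tilde h_{\alpha\beta}+m_{\alpha\beta}(\theta_\beta+\tfrac12\lambda_{\beta\alpha})$ and the skew-symmetry $\tilde h_{\alpha\beta}=-\tilde h_{\beta\alpha}$, immediately yields the upper row of the coordinate gluing matrix; the connection gluing and the global well-definedness of $\Theta_\#$ then follow from \eqref{BBBgluing}(a),(b) exactly as you describe.
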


\vspace{5mm}
\begin{proof}[Proof of the theorem:]
From the invariance of $B_{\alpha}$
on the torus coordinates it follows that the right hand sides of equations
\eqref{BBgluing} do not depend on the torus coordinates. On the other hand
the curvature $F_{\alpha\beta}$ of the connection $A_{\alpha\beta}$
\begin{equation*}
F_{\alpha\beta}=\dots+\frac12\,m_{\alpha\beta}^{IJ}
\,\Theta_I\wedge\Theta_J
\quad\text{where}\quad
m_{\alpha\beta}^{IJ}=\frac{\pd}{\pd\theta_I}h^J_{\alpha\beta}
-\frac{\pd}{\pd\theta_J}h^I_{\alpha\beta}
\end{equation*}
must have integral periods. In particular, $m_{\alpha\beta}^{IJ}$ is an integral valued
matrix. So locally (it means one has to cover the torus $\mathbb{T}^n$ by patches)
we can write $h_{\alpha\beta}^I(\theta_{\beta})$ as
\begin{equation*}
h_{\alpha\beta}^J(\theta_{\beta})=\pi^*\tilde{h}_{\alpha\beta}^J
+\frac12\,m_{\alpha\beta}^{IJ}\bigl(\theta_{\beta}+\tfrac12\lambda_{\beta\alpha}\bigr)_{\!I}.
\end{equation*}
Note that $h_{\alpha\beta}(\theta_{\beta})=-h_{\beta\alpha}(\theta_{\alpha})$ provided
$\tilde{h}_{\alpha\beta}$ is skewsymmetric in $\alpha,\beta$.
From the gluing condition for $B_{1\alpha}$ \eqref{BBgluing} we
know that
\begin{equation*}
\pi^*B_{1\alpha}^I-\pi^*B_{1\beta}^I=\pi^*(d_M\tilde{h}_{\alpha\beta}^I)
+\frac12\,m_{\alpha\beta}^{JI}\bigl(\tfrac12\lambda_{\beta\alpha}-A_{\beta}\bigr)_{\!J}
-\Lc(\tfrac{\pd}{\pd\theta_I})\,a_{\alpha\beta}
\end{equation*}
should not depend on the torus coordinates. This means that $a_{\alpha\beta}$
is at most linear a function of the torus coordinates:
\begin{equation*}
a_{\alpha\beta}(\theta_{\beta})=\pi^*\tilde{a}_{\alpha\beta}
+\langle\pi^*\rho_{\alpha\beta},\theta_{\beta}+\tfrac12\lambda_{\beta\alpha}\rangle
\end{equation*}
where $\tilde{a}_{\alpha\beta}$ and $\rho_{\alpha\beta}$ are smooth $1$-forms on $M_{\alpha\beta}$.
Again $a_{\alpha\beta}(\theta_{\beta})=-a_{\beta\alpha}(\theta_{\alpha})$ provided
$\tilde{a}_{\alpha\beta}$ and $\rho_{\alpha\beta}$ are skewsymmetric in $\alpha,\beta$.
From the gluing condition for $B_{2\alpha}$:
\begin{equation*}
B_{2\alpha}-B_{2\beta}=\pi^*[d\tilde{a}_{\alpha\beta}+\langle\tilde{h}_{\alpha\beta},F\rangle]
+\bigl(d\rho_{\alpha\beta}^I+\tfrac12 m_{\alpha\beta}^{IJ}F_J\bigr)(\theta_{\beta}+\tfrac12\lambda_{\beta\alpha})_I
+\langle\rho_{\alpha\beta},A_{\beta}-\tfrac12 d\lambda_{\beta\alpha}\rangle.
\end{equation*}
one concludes that
\begin{equation*}
\rho_{\alpha\beta}^I=-\frac12\, m_{\alpha\beta}^{IJ}\bigl(A_{\beta}
-\tfrac12 d\lambda_{\beta\alpha}\bigr)_{\! J}\,.
\end{equation*}
Combining the equations above we obtain the gluing conditions for the gerbe connection \eqref{BBBgluing}.

To obtain the cocycle conditions one has to study equations \eqref{fabc}. From the first
equation in \eqref{fabc} we learn that
\begin{equation*}
f_{\alpha\beta\gamma}(\theta_{\beta})=
\exp\left[-\frac{2\pi i}{4}\bigl(\theta_{\beta}+\tfrac13\lambda_{\beta\alpha}
+\tfrac13\lambda_{\beta\gamma}\bigr)\bigl(m_{\alpha\beta}\lambda_{\beta\gamma}
-m_{\gamma\beta}\lambda_{\beta\alpha}\bigr)\right]
\,\pi^*\tilde{f}_{\alpha\beta\gamma}
\end{equation*}
where $\tilde{f}_{\alpha\beta\gamma}$ satisfies the usual symmetric properties: $\tilde{f}_{\beta\alpha\gamma}=
\tilde{f}_{\alpha\beta\gamma}^{-1}$ etc.
Straightforward calculation yields the relation \eqref{4fold} on fourfold overlaps.
\end{proof}

\comment{Note that the connection $A_{\alpha\beta}$ \eqref{Adecomp}
for the invariant gerbe connection has the form
\begin{equation}
A_{\alpha\beta}(\theta_{\beta})=
\pi^*\tilde{a}_{\alpha\beta}+\langle \pi^*\tilde{h}_{\alpha\beta},\Theta\rangle
+\frac12\,m_{\alpha\beta}^{IJ}\bigl(\theta_{\beta}+\tfrac12\lambda_{\beta\alpha}\bigr)_{\!I}\,
\bigl(d\theta_{\beta}+\tfrac12 d\lambda_{\beta\alpha}\bigr)_{\!J}.
\end{equation}
}

\section{T-duality in string sigma models}
\label{sec:stdTdual}\setcounter{equation}{0}
In this section we discuss $T$-duality for principal
torus bundles with nontrivial $H$-flux (with vanishing $H_0^{IJK}$).
First, we consider the case when $B_0^{IJ}$ is a smooth function on $M$
and present the standard derivation of the $T$-duality
on the level of function integral.
Second, we discuss the problems with the generalization for the case when $B_0$
is not globally defined.

\subsection{Sigma model on a principal torus bundle}
\label{sec:Tdualsigma}
In this section we review the construction of the sigma model with
a target space   $X$ that is a principal torus bundle $\pi:X\to M$.
The space $\mathrm{Map}(\Sigma,X)$ of maps from $\Sigma$ to $X$
has itself a structure of a fibre bundle\footnote{Strickly speaking, it is
not a conventional fibre bundle but rather   one that is defined for Fr\'echet manifolds (see e.g.  \cite{Felder:1988sd, Gawedzki:1987ak}).}:
\begin{equation}
\Gamma(x^*X)\hookrightarrow
\mathrm{Map}(\Sigma,X)\to \mathrm{Map}(\Sigma,M).
\label{MapF}
\end{equation}
This fibre bundle is defined as follows:
given a map $\chi \in \mathrm{Map}(\Sigma,X)$ we define $x=\pi\circ \chi$ as
the composition of this map followed by the projection onto the base manifold $M$. This defines
the map $x:\Sigma\to M$. Now we can restrict the principal torus bundle $X\to M$ to the
image of $\Sigma$ and then pull it back to $\Sigma$. The fibre in \eqref{MapF}
is exactly the space of sections of the resulting torus bundle.
 It is convenient to write the sigma model
functional integral in the following factorized form
\begin{subequations}
\begin{align}
\mathcal{Z}(\mathfrak{g})&=\int_{\mathrm{Map}(\Sigma,M)} \hspace{-1.2cm}\Ds x^{\mu}(\sigma)\,\exp\left[-\pi\int_{\Sigma}g_{\mu\nu}(x)\,dx^{\mu}\wedge *_{\mathfrak{g}}dx^{\nu}
+\frac{1}{4\pi}\int_{\Sigma}\vol(\mathfrak{g})\mathcal{R}(\mathfrak{g})\,\Phi(x)\right]
\,\Psi(x(\sigma));
\\
\Psi(x)&=\int_{\Gamma(x^*X)}
\hspace{-0.8cm}
\Ds \theta_I(\sigma)\,
\exp\left[
-\pi\int_{\Sigma}h^{IJ}(x)\,\Theta_I\wedge *_{\mathfrak{g}}\Theta_J\right]\,
\notag
\\
&\hspace{6cm}\times\mathrm{Hol}\bigl(x^*B_2+\langle x^*B_1,\Theta\rangle
+\tfrac12\langle x^*B_0,\Theta\wedge\Theta\rangle, \Sigma\bigr).
\label{Phi(x)}
\end{align}
\end{subequations}
Here $\Psi(x(\sigma))$ is a function of the map $x$ to the base space $M$,
$g_{\mu\nu}$ is a metric on the base $M$, $h_{IJ}(x)$ is an invariant metric
on the torus fibre over   $x\in M$, $\mathfrak{g}$ is a metric
on the worldsheet $\Sigma$, $\Phi(x)$ is the dilaton. $\mathrm{Hol}(B,\Sigma)$
denotes the holonomy of the gerbe connection $B$ on $\Sigma$ --- the exponential of the Wess-Zumino
term, see Appendix~\ref{app:WZ}.
Note that the dilaton is only a function
of the base coordinates. In string theory, one also has to calculate the integral over the space
of 2d metrics, however we are not going to discuss this integral here.

\subsection{$T$-duality}
In this section we review the derivation of $T$-duality in a sigma model on
a Riemann  surface $\Sigma$ with
  target space  a principal torus bundle $\pi:X\to M$.

The main result of this subsection can be summarized by the following
\begin{thm}
\label{thm:31}
If $B_0^{IJ}$ is globally well defined then
the functional \eqref{Phi(x)} can also be written as
\begin{multline}
\Psi(x(\sigma))=\left[\frac{\det h^{IJ}(x)}{\det\tilde{h}_{IJ}(x)}\right]^{\chi(\Sigma)/2}
\int_{\Gamma(x^*{\tilde X})}\hspace{-0.8cm}\Ds \tilde{\theta}^I(\sigma)\,
\exp\left[
-\pi\int_{\Sigma}\tilde{h}_{IJ}(x)\,\tilde{\Theta}^I\wedge *_{\mathfrak{g}}\tilde{\Theta}^J \right]
\\
\times\mathrm{Hol}\bigl( x^*\tilde{B}_2-\langle \tilde{\Theta},x^*A\rangle
+\frac12\langle\tilde{\Theta}\wedge\tilde{\Theta},\tilde{B}_0\rangle , \Sigma \bigr)
\label{anotherPhi(x)}
\end{multline}
where $\tilde X$ is defined in Corollary \ref{cor:0}, $\chi(\Sigma)$ is the Euler character of the surface $\Sigma$,
\begin{equation*}
\tilde{h}=(h-B_0h^{-1}B_0)^{-1}\quad\text{and}\quad
\tilde{B}_0=-\tilde{h}B_0h^{-1}
\end{equation*}
are
symmetric and antisymmetric parts of the matrix $(h+B_0)^{-1}$ respectively, and
\begin{equation*}
\tilde{\Theta}^I=d\tilde{\theta}^I+x^*B_1\quad\text{and}\quad
\tilde{B}_2=B_2+\langle B_1,A\rangle.
\end{equation*}
\end{thm}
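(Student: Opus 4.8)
The plan is to derive \eqref{anotherPhi(x)} from \eqref{Phi(x)} by the standard Roček--Verlinde gauging/dualization trick, performed fibrewise over each point $x\in\mathrm{Map}(\Sigma,M)$. Concretely, I would start from the fibre integral $\Psi(x)$ in \eqref{Phi(x)} and gauge the $\mathbb{T}^n$ isometry of the torus fibre: introduce a worldsheet gauge field $C=(C^I)$ valued in $\mathfrak t$ together with a Lagrange multiplier field $\tilde\theta=(\tilde\theta^I)$ valued in the dual torus, replace $\Theta_I=d\theta_I+x^*A_I$ by the gauged one-form $\hat\Theta_I=d\theta_I+x^*A_I+C_I$, and add the coupling $\exp[2\pi i\int_\Sigma \tilde\theta^I\,dC_I]$ to the integrand (the precise normalization chosen so that integrating out $\tilde\theta$ forces $C$ to be flat with trivial holonomy, hence pure gauge, returning \eqref{Phi(x)}). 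The one subtlety at this first step is the Wess-Zumino holonomy term: since $B$ is a gerbe connection and $B_0^{IJ}$ is globally well defined, the gerbe is (after the gauge transformation of Corollary~\ref{cor:0}) trivialized along the fibre enough that $\mathrm{Hol}$ can be written as $\exp[2\pi i\int_\Sigma(x^*B_2+\langle x^*B_1,\Theta\rangle+\tfrac12\langle x^*B_0,\Theta\wedge\Theta\rangle)]$ with honest global forms, so the gauging simply amounts to the substitution $\Theta\to\hat\Theta$ inside this expression; I would check that the result is gauge invariant modulo the Lagrange-multiplier coupling, using $H_0^{IJK}=0$ and the exactness $H_1^{IJ}=dB_0^{IJ}$.

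Next I would do the Gaussian integral over $C$. After substitution the exponent is quadratic in $C$: the metric term contributes $h^{IJ}(x)\,\hat\Theta_I\wedge *_{\mathfrak g}\hat\Theta_J$, the WZ term contributes $\langle x^*B_1+\tfrac12 x^*B_0\,\hat\Theta\,\text{-type terms},\dots\rangle$ linear and quadratic in $C$, and the Lagrange term contributes $\tilde\theta^I\,dC_I=-d\tilde\theta^I\wedge C_I$ after integration by parts on the closed surface $\Sigma$. Collecting, $C$ appears through the combination $(h+B_0)$ as the quadratic form — this is where the matrix $(h+B_0)^{-1}$ enters — and the linear term in $C$ is built from $d\tilde\theta^I+x^*B_1^I$, i.e. exactly $\tilde\Theta^I$. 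Completing the square and performing the Gaussian integral produces: (i) the determinant prefactor, which evaluates on a surface of Euler characteristic $\chi(\Sigma)$ to $[\det h/\det\tilde h]^{\chi(\Sigma)/2}$ because the zero-mode/functional-determinant bookkeeping on $\Sigma$ gives one power of the finite-dimensional Gaussian determinant per unit of $\chi$ (this is the one place a little care with the measure $\Ds\theta$ versus $\Ds\tilde\theta$ and the ghost-free scalar determinant is needed); (ii) the dual kinetic term $\tilde h_{IJ}\,\tilde\Theta^I\wedge*_{\mathfrak g}\tilde\Theta^J$; and (iii) the dual WZ term. For the algebra I would simply invoke the identity that $(h+B_0)^{-1}=\tilde h-\tilde h B_0 h^{-1}$ with $\tilde h=(h-B_0h^{-1}B_0)^{-1}$, which identifies the symmetric and antisymmetric parts with $\tilde h$ and $\tilde B_0$ as stated.

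Finally I would assemble the surviving terms that do not involve $C$ into the claimed dual holonomy. The leftover pieces are $x^*B_2$, the cross term $\langle x^*B_1,d\theta+x^*A\rangle$ which after using the definitions combines with $x^*B_2$ into $\tilde B_2=B_2+\langle B_1,A\rangle$, the term $-\langle\tilde\Theta,x^*A\rangle$ coming from how $d\tilde\theta$ pairs against the original connection $x^*A$ once $C$ is eliminated, and the purely antisymmetric quadratic-in-$\tilde\Theta$ remainder $\tfrac12\langle\tilde\Theta\wedge\tilde\Theta,\tilde B_0\rangle$. Re-exponentiating these as a single gerbe holonomy $\mathrm{Hol}\bigl(x^*\tilde B_2-\langle\tilde\Theta,x^*A\rangle+\tfrac12\langle\tilde\Theta\wedge\tilde\Theta,\tilde B_0\rangle,\Sigma\bigr)$ and recognizing that the remaining Gaussian integral over $\tilde\theta$ with kinetic term $\tilde h_{IJ}\tilde\Theta^I\wedge*_{\mathfrak g}\tilde\Theta^J$ is precisely the fibre integral for the dual torus bundle $\tilde X$ of Corollary~\ref{cor:0} (whose connection is $\tilde\Theta^I=d\tilde\theta^I+x^*B_1$) gives \eqref{anotherPhi(x)}.

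The main obstacle I anticipate is not the Gaussian algebra but the global/topological bookkeeping: (a) making precise, at the level of gerbe holonomies rather than local two-forms, that gauging the fibre $\mathbb{T}^n$ is consistent (which requires $H_0=0$ and exactness of $H_1$, precisely the hypotheses), and that the Lagrange multiplier $\tilde\theta$ is genuinely a map to the dual torus so that summing over its winding sectors enforces the quantization of the holonomy of $C$; and (b) extracting the determinant prefactor with the correct $\chi(\Sigma)$ power, which hinges on treating the functional measures $\Ds\theta$ and $\Ds\tilde\theta$ as normalized scalar measures on $\Sigma$ so that the infinite-dimensional determinants cancel up to the finite-dimensional factor counted by the index/Euler characteristic. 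Both points are standard in the WZW/T-duality literature (cf. the treatments cited in the introduction), so I would handle them by reference while doing the local computation explicitly.
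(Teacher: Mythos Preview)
Your approach is essentially the paper's: introduce a worldsheet gauge field and a Lagrange multiplier $\tilde\theta$, verify gauge invariance of the combined functional, then gauge-fix $\theta$ and Gauss out the gauge field to land on the dual fibre integral. The Gaussian algebra you sketch and the identification of $\tilde h,\tilde B_0$ as the symmetric/antisymmetric parts of $(h+B_0)^{-1}$ are exactly what the paper uses (it defers the details to Witten's lecture notes, as you do).

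Two points where the paper is sharper and which you should tighten. First, you frame this as the ``standard Ro\v{c}ek--Verlinde trick'', but the paper stresses that under the hypotheses of the theorem the equivariant extension of $H$ need \emph{not} exist (the $2$-form $\imath(K^I)H$ is closed but typically not exact, since its class is the first Chern class of the dual bundle $\tilde X$), so gauging \emph{alone} is obstructed; it is only the combined gauging-plus-Lagrange-multiplier step that is well defined. You effectively do this, but your narrative hides the point. Second, the paper's Lagrange multiplier term is $\langle d\tilde\theta,\,d\theta-\Lambda\rangle$, not your $\tilde\theta^I\,dC_I$; the extra $\langle d\tilde\theta,d\theta\rangle$ piece is not cosmetic. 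With $\tilde\theta$ a section of the nontrivial bundle $x^*\tilde X$ (gluing $\tilde\theta_\alpha-\tilde\theta_\beta=-\tilde h_{\alpha\beta}$), neither your coupling nor the individual pieces of the gauged WZ term are globally defined; the paper's Lemma~\ref{lemma:1} checks patch-by-patch that precisely this choice of Lagrange term cancels the gerbe transition data, and that the cancellation \emph{forces} $\tilde\theta$ to have the gluing functions of Corollary~\ref{cor:0}. This is the content you flag in your obstacle~(a), but it is not something one can just ``handle by reference'': it is the mechanism by which the dual torus bundle $\tilde X$ emerges from the gerbe, and the paper carries it out.
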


\begin{cor}
Under classical $T$-duality the set $(F_I,H_3,\tilde{F}^I)$
maps to the set $(\tilde{F}^I,H_3,F_I)$ where $\tilde{F}=H_2+B_0F$
is the curvature of the connection $\tilde{\Theta}$,
and
\begin{equation*}
\Phi(x)\mapsto \Phi(x)+\frac12\log\left[\frac{\det h^{IJ}(x)}{\det\tilde{h}_{IJ}(x)}\right].
\end{equation*}
\end{cor}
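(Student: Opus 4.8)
The plan is to carry out the standard "gauging and Lagrange multiplier" computation of $T$-duality directly on the integrand of \eqref{Phi(x)}, keeping careful track of the holonomy factors as in \cite{Felder:1988sd, Gawedzki:1987ak}. First I would replace the fibre coordinates $\theta_I$ by worldsheet one-forms $C_I$ coupled to $\tilde\theta^I$ as Lagrange multipliers: the key point is that, since $B$ is a gerbe connection and $B_0^{IJ}$ is globally well defined, the holonomy $\mathrm{Hol}\bigl(x^*B_2+\langle x^*B_1,\Theta\rangle+\tfrac12\langle x^*B_0,\Theta\wedge\Theta\rangle,\Sigma\bigr)$ can be rewritten, using the gerbe gluing data of the Theorem of section~\ref{sec:gerbe} (with $m_{\alpha\beta}$ a coboundary, so we are in the situation of Corollary~\ref{cor:0}), as an honest expression in the connection $\tilde\Theta^I=d\tilde\theta^I+x^*B_1$ on $\tilde X$ times $\exp$ of local terms; this is exactly what makes the dual fibration $\tilde X$ of Corollary~\ref{cor:0} appear. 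Integrating out $C_I$ (equivalently, performing the Gaussian path integral over the gauge field) recovers \eqref{Phi(x)}, while integrating out $\theta_I$ first and then identifying $\tilde\Theta$ produces \eqref{anotherPhi(x)}.

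The concrete steps are: (i) introduce the first-order form of the action, schematically
\[
-\pi\int_\Sigma h^{IJ}\,C_I\wedge *_{\mathfrak g}C_J
+\mathrm{Hol}\bigl(x^*B_2+\langle x^*B_1,C\rangle+\tfrac12\langle x^*B_0,C\wedge C\rangle,\Sigma\bigr)\times(\text{coupling to }\tilde\theta),
\]
where now $C$ is an unconstrained $\mathfrak t$-valued one-form on $\Sigma$ and the coupling to $\tilde\theta^I$ is such that the $\tilde\theta$-integral forces $C=\Theta$ up to large gauge transformations (this is where the sum over torus-bundle sectors, i.e. the space $\Gamma(x^*\tilde X)$, comes from, and the gerbe structure is needed to get the quantization of the $\tilde\theta$ periods right); (ii) do the Gaussian integral over $C$. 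The quadratic form in $C$ is $h+B_0$ (symmetric part $h$ from the metric term, antisymmetric part $B_0$ from the $B_0$ term in the $B$-field), so completing the square gives the inverse $(h+B_0)^{-1}$, whose symmetric and antisymmetric parts are precisely $\tilde h=(h-B_0h^{-1}B_0)^{-1}$ and $\tilde B_0=-\tilde h B_0 h^{-1}$; (iii) read off the dualized metric term $-\pi\int_\Sigma \tilde h_{IJ}\,\tilde\Theta^I\wedge *_{\mathfrak g}\tilde\Theta^J$ and the dualized holonomy $\mathrm{Hol}\bigl(x^*\tilde B_2-\langle\tilde\Theta,x^*A\rangle+\tfrac12\langle\tilde\Theta\wedge\tilde\Theta,\tilde B_0\rangle,\Sigma\bigr)$, with $\tilde B_2=B_2+\langle B_1,A\rangle$ emerging from the cross terms between the $B_2$, $B_1$ and $A$ pieces; (iv) collect the functional determinant $\det(h+B_0)$ from the Gaussian integral and the zero-mode/conformal-anomaly contributions, which combine into $[\det h^{IJ}/\det\tilde h_{IJ}]^{\chi(\Sigma)/2}$ using $\det(h+B_0)=\det h\cdot\det\tilde h^{-1}$ (a linear-algebra identity for the symmetric/antisymmetric decomposition of the inverse) and the fact that the power of the determinant is controlled by the Euler characteristic $\chi(\Sigma)$ via the scaling of the path-integral measure.

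The main obstacle will be step (i)–(iii) at the level of the \emph{holonomy} rather than the naive Lagrangian: one must show that the gerbe holonomy $\mathrm{Hol}(B,\Sigma)$, with $B$ decomposed as in \eqref{B} and glued as in the Theorem of section~\ref{sec:gerbe}, transforms under fibre integration into the holonomy of the dual gerbe on $\tilde X$ built from $\tilde B_2$, $-\langle\tilde\Theta,A\rangle$ and $\tilde B_0$, with no anomalous phase left over (this is the content that distinguishes the gerbe treatment from the classical one, and it is exactly where the global constraint "$B_0^{IJ}$ globally defined", i.e. $m_{\alpha\beta}$ a coboundary, is used — if $m_{\alpha\beta}\neq 0$ there would be an obstruction and the theorem would fail). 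Concretely this amounts to checking that the local WZ data produced by completing the square patch together consistently, i.e. that the $\tilde a_{\alpha\beta}$, $\tilde h_{\alpha\beta}$, $\tilde f_{\alpha\beta\gamma}$ cocycle relations of section~\ref{sec:gerbe} (with $m_{\alpha\beta}=0$) are precisely what is needed; once this bookkeeping is done, the determinant and dilaton statements of the Corollary are then immediate from tracking the measure factors and the $\chi(\Sigma)$-dependence. I would relegate the detailed patch-by-patch check to an appendix (the excerpt already points to Appendices~\ref{app:WZ} and \ref{app:t-gerbe} for the gerbe-holonomy machinery) and present the Gaussian-integral computation in the main text.
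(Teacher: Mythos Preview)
Your proposal is correct and follows essentially the same route as the paper: the Corollary is not proved separately but is read off from Theorem~\ref{thm:31}, whose proof in the paper is precisely the gauging-plus-Lagrange-multiplier computation you outline (the paper's $\Theta-\Lambda$ is your $C$, and your gerbe-patching check is the content of Lemma~\ref{lemma:1}). The only cosmetic difference is that the paper keeps the original fibre coordinate $\theta$ and gauge field $\Lambda$ explicit and then gauge-fixes $\theta$ before doing the Gaussian integral over $\Lambda$, whereas you pass directly to an unconstrained one-form; the determinant/zero-mode analysis giving the $[\det h/\det\tilde h]^{\chi(\Sigma)/2}$ factor is in both cases deferred to Witten's lectures~\cite{qft}.
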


\begin{proof}
If $X$ were a product space then one could prove the theorem in the standard way:
gauge the torus symmetry, add lagrange multipliers to impose the condition that the gauge-fields be pure gauge, change the order of integration
and integrate out the original torus variables \cite{Giveon:1991jj,Rocek:1991ps}.
However
when the equivariant extension of $H$ does not exist, and equations (\ref{equiv}) are not satisfied
this approach does not work: \textit{it is impossible to gauge the sigma model}.
It was shown in \cite{Hull:2006qs} that although it is impossible to gauge the sigma
model it makes sense to gauge and add the lagrange multiplier in one step.
This result is explained in detail in Lemma~\ref{lemma:1}.
The extended sigma model is defined by the functional integral:
\begin{multline}
\Psi(x(\sigma))=\int \Ds \theta_I\Ds \Lambda_I\Ds\tilde{\theta}^I\,
\exp\left[
-\pi\int_{\Sigma}h^{IJ}(x)(\Theta_I-\Lambda_I)\wedge *(\Theta_J-\Lambda_J)
\right]\times
\\
\text{``}\exp\text{''}\left[
2\pi i\int_{\Sigma}\bigl(x^*B_2+\langle x^*B_1,\Theta-\Lambda\rangle
+\frac12\langle x^*B_0,(\Theta-\Lambda)\wedge(\Theta-\Lambda)\rangle+\langle d\tilde{\theta},d\theta-\Lambda\rangle\bigr)
\right].
\label{proof1}
\end{multline}
where $\Lambda_I$ is a globally well defined smooth $1$-form on $\Sigma$,\footnote{In \cite{Hull:2006qs}, the combination $x^*(A - \Lambda)$ is denoted   $C$.}  and $\tilde{\theta}^I(\sigma)$
is a section of the pullback of the principal torus bundle with   fibre $\tilde{\mathbb{T}}^n$
as described in section~\ref{sec:sigma}. The gluing conditions on the twofold overlaps
for $\tilde{\theta}$ are exactly those described in Corollary~\ref{cor:0}.
The exponential in \eqref{proof1} is invariant with respect to the gauge transformations
$\theta(\sigma)\mapsto \theta(\sigma)+\phi(\sigma)$ and $\Lambda\mapsto \Lambda+d_{\Sigma}\phi$.
We can rewrite the last term in \eqref{proof1} in a slightly different way:
\begin{multline*}
\Psi(x(\sigma))=\int \Ds \theta_I\Ds \Lambda_I\Ds\tilde{\theta}^I\,
\exp\left[
-\pi\int_{\Sigma}h^{IJ}(x)(\Theta_I-\Lambda_I)\wedge *(\Theta_J-\Lambda_J)
\right]
\\
\times\mathrm{Hol}(B_2+\langle B_1,A\rangle
-\langle \tilde{\Theta},A\rangle,\Sigma)
\exp\left[2\pi i\int_{\Sigma}\bigl(\langle \tilde{\Theta},\Theta-\Lambda\rangle
+\frac12\langle B_0,(\Theta-\Lambda)\wedge(\Theta-\Lambda)\rangle\bigr)
\right].
\end{multline*}
Note that the last line can be rewritten as an integral over a $3$-disk $D$ with   boundary $\Sigma$:
\begin{equation}
e^{2\pi i\int_{\Sigma}(\dots)}
=e^{2\pi i\int_{D}(x^*H_3+\langle x^*H_2,\Theta-\Lambda\rangle+
\frac12\langle H_1,(\Theta-\Lambda)\wedge(\Theta-\Lambda)\rangle+\langle
\tilde{\Theta}-B_0(\Theta-\Lambda),d\Lambda\rangle)}.
\end{equation}
Note that the exponential on the right hand side contains only globally well defined quantities.

Imposing a gauge-fixing condition on
  $\theta_I$ and integrating over $\Lambda$ yields \eqref{anotherPhi(x)} (see Lecture~7 of E.~Witten in \cite{qft} for details).
\end{proof}

\comment{
\begin{enumerate}
\item
Note that the component $B_2$ of the gerb connection \eqref{B} is
not invariant under $T$-duality transformations (see also \cite{Fidanza:2003zi}). It   transforms
as in Theorem~\ref{thm:31}.

\item Theorem~\ref{thm:31} can be generalized to cover the action of
the whole $T$-duality group $O(n,n;\Zh)$. The pair $(F_I,\tilde{F}^I)$
transforms as a vector of $O(n,n;\Zh)$ while the matrix $h^{IJ}+B_0^{IJ}$ transforms
by fractional linear transformations.
\end{enumerate}
}

\begin{lemma}
\label{lemma:1}
Assuming $B_0$ is globally well defined the
functional $\Psi(x)$ in \eqref{proof1} as a function
of the gerbe connection $B$ descends to a well defined function
of the gauge equivalent classes of gerbe connections
(or in short $\Psi(x)$ is gauge invariant).
\end{lemma}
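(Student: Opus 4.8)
The plan is to show that the functional integral $\Psi(x)$ defined in \eqref{proof1} is unchanged --- up to an overall map-independent constant, or better exactly as a holonomy --- when the gerbe connection $B$ is acted on by the gauge group of gerbes, i.e. shifted by a line bundle $L$ with connection $\nabla$ and curvature $F_L\in\Omega^2_{\Zh}(X)$. Concretely, under such a gauge transformation $B_{2\alpha}\mapsto B_{2\alpha}+F_L|_{X_\alpha}$ while the gluing data $A_{\alpha\beta}, f_{\alpha\beta\gamma}$ shift by the transition data of $L$; decomposing $F_L$ according to \eqref{iso-H} (it is an invariant closed integral $2$-form, since we are in the torus-invariant setting, so $F_L=\pi^*F_{L,2}+\langle\pi^*F_{L,1},\Theta\rangle+\tfrac12\langle\pi^*F_{L,0},\Theta\wedge\Theta\rangle$ with appropriate integrality), the induced shifts on $B_{2\alpha},B_{1\alpha}^I,B_{0\alpha}^{IJ}$ are read off exactly as in \eqref{BBBgluing}. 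The point to prove is that the combined object appearing inside ``$\exp$'' in \eqref{proof1},
\[
\mathit{WZ}\bigl(x^*B_2+\langle x^*B_1,\Theta-\Lambda\rangle+\tfrac12\langle x^*B_0,(\Theta-\Lambda)\wedge(\Theta-\Lambda)\rangle,\Sigma\bigr)+\int_\Sigma\langle d\tilde\theta,d\theta-\Lambda\rangle,
\]
changes by an integer under this shift, so that its exponential is invariant.

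First I would recall (citing Appendix~\ref{app:WZ}) that the Wess--Zumino term of a gerbe connection over a closed surface $\Sigma$ is well defined mod $\Zh$, and that adding to the gerbe connection the curvature $F_L$ of a genuine line bundle changes $\mathit{WZ}$ by $\int_\Sigma x^*F_L$, which is an integer because $F_L$ has integral periods and $[\Sigma]$ pushes forward to an integral cycle. So the naive shift of $B_2$ by $\pi^*F_{L,2}$ is harmless. The real work is the cross-terms: the shift also moves $B_1$ by $d_M\tilde h^{(L)}_{\alpha\beta}+m^{(L)}_{\alpha\beta}(\cdots)$ and $B_0$ by $m^{(L)}_{\alpha\beta}$, exactly the pattern of Corollary~\ref{cor:0}/Corollary~\ref{cor:1}. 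When $B_0$ is globally well defined --- the hypothesis of the lemma --- the cocycle $m^{(L)}_{\alpha\beta}$ coming from $L$ must itself be a coboundary (otherwise $B_0$ would fail to be globally defined after the shift), so effectively only the $\tilde h_{\alpha\beta}$-type and $\tilde a_{\alpha\beta}$-type data get modified, i.e. $L$ restricts to a line bundle pulled back from data compatible with the double torus bundle structure. Then the change in the bracketed expression is $\int_\Sigma\bigl(\langle x^*F_{L,1},\Theta-\Lambda\rangle+\tfrac12\langle F_{L,0},(\Theta-\Lambda)\wedge(\Theta-\Lambda)\rangle\bigr)$ plus boundary/gluing terms, and I would absorb the $F_{L,1},F_{L,0}$ pieces by a corresponding integer shift of the Lagrange multiplier sector: the coupling $\langle d\tilde\theta, d\theta-\Lambda\rangle$ is exactly the term that pairs the dual coordinate with $\Theta-\Lambda$, and shifting $\tilde\theta\mapsto\tilde\theta+(\text{section determined by }L)$, which is legitimate because $\tilde\theta$ is a section of the bundle $\tilde X$ whose gluing data $\tilde h_{\alpha\beta}$ is precisely what the gauge transformation modifies, reproduces the unwanted terms with the opposite sign. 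Matching the $F_{L,0}$ quadratic term requires using $B_0$ globally defined so that the completion of the square $\tfrac12\langle B_0,(\Theta-\Lambda)^{\wedge 2}\rangle$ is itself globally defined and can be shifted consistently.

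The key steps, in order: (i) write down explicitly how a gerbe gauge transformation by $(L,\nabla)$ acts on the local data $(B_{2\alpha},B_{1\alpha},B_{0\alpha},A_{\alpha\beta},f_{\alpha\beta\gamma})$, using the torus-invariant decomposition of $F_L$; (ii) show the induced change of the holonomy factor in \eqref{proof1} is $\exp[2\pi i\int_\Sigma x^*F_L]$ rescaled by cross-terms linear and quadratic in $\Theta-\Lambda$; (iii) use $B_0$ globally defined to conclude the $m^{(L)}$-cocycle is trivial, hence $L$ is (up to a base line bundle contributing an integer) pulled back to the doubled geometry; (iv) absorb the residual cross-terms by an allowed shift of the Lagrange-multiplier section $\tilde\theta$ (and, if needed, of $\Lambda$ by a closed integral form), invoking that $\tilde\theta$ lives in the very bundle $\tilde X$ whose transition data is shifted; (v) conclude $\Psi(x)$ is literally invariant, not merely invariant up to a constant. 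The main obstacle I expect is step (iv): carefully tracking that the shift of $\tilde\theta$ one needs is globally consistent on all overlaps --- i.e. that the transition data of the new $\tilde X'$ differs from that of $\tilde X$ by exactly the transition data of the line bundle $L$ --- and checking that the quadratic-in-$(\Theta-\Lambda)$ terms cancel, which is where the global definedness of $B_0$ (the standing hypothesis) is genuinely used rather than just convenient; without it the naive shift of $\tilde\theta$ fails to close up and one is forced into the affine/non-geometric picture of Corollary~\ref{cor:2} instead.
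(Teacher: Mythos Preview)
Your proposal addresses a different question from the one the paper's proof actually settles. You check invariance of $\Psi(x)$ under the gerbe \emph{gauge group} (global shifts $B\mapsto B+F_L$ by the curvature of a line bundle $L$ on $X$). The paper instead checks \emph{patch-consistency}: it assumes $x(\Sigma)\subset M_{\alpha\beta}$ and compares the integrand of \eqref{proof1} written in the coordinates of patch $\alpha$ with the same integrand written in patch $\beta$, using the gluing rules \eqref{BBBgluing}. The two expressions differ by (a) a total $\Sigma$-derivative, which integrates to zero on the closed surface, and (b) a factor $\exp\bigl[2\pi i\int_\Sigma\langle\tilde\Theta_\beta-\tilde\Theta_\alpha,\Lambda\rangle\bigr]$, which is unity precisely when $\tilde\Theta$ is globally defined --- i.e.\ when $\tilde\theta$ has the gluing $\tilde\theta_\alpha-\tilde\theta_\beta=-\tilde h_{\alpha\beta}$ of Corollary~\ref{cor:0}. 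So the paper's argument simultaneously establishes well-definedness and \emph{determines} the bundle in which the Lagrange multiplier $\tilde\theta$ must live. Once the local expression is shown to patch into a genuine holonomy, invariance under the global gerbe gauge group is essentially automatic.

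There is also a confusion in your step (iii). You argue that $B_0$ being globally defined forces the cocycle $m^{(L)}_{\alpha\beta}$ ``coming from $L$'' to be a coboundary, on pain of $B_0$ failing to be global after the shift. But a gerbe gauge transformation shifts $B_\alpha$ by the restriction of the \emph{globally defined} closed $2$-form $F_L$; in the torus decomposition $F_L=\pi^*F_{L,2}+\langle\pi^*F_{L,1},\Theta\rangle+\tfrac12\langle\pi^*F_{L,0},\Theta\wedge\Theta\rangle$ the shift of $B_0$ is by the global function $F_{L,0}$ on $M$, so no overlap cocycle $m^{(L)}_{\alpha\beta}$ is produced at all. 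You are conflating the gerbe gauge group (global shifts) with the patch-to-patch transition data of the gerbe connection itself --- the latter is where the integers $m_{\alpha\beta}$ of \eqref{BBBgluing} live, and it is in that transition computation that the hypothesis ``$B_0$ globally defined'' (equivalently, $m_{\alpha\beta}$ a coboundary) genuinely enters. The paper's proof works directly with those transitions; your outline never engages with them.
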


\begin{proof}
Suppose that the image of $\Sigma$, $x(\Sigma)$, lies in the patch $M_{\alpha}$
then we can write \eqref{proof1} in two two different ways: using coordinates
in the patch $M_{\alpha}$ or in the patch $M_{\beta}$.
First notice that the Jacobian in change of measure between the patches $M_{\alpha}$
and $M_{\beta}$ is trivial, so
\begin{equation*}
\Ds x\Ds\theta_{\alpha}\Ds \tilde{\theta}_{\alpha}|_{M_{\alpha\beta}}
=\Ds x\Ds\theta_{\beta}\Ds \tilde{\theta}_{\beta}|_{M_{\alpha\beta}}.
\end{equation*}
The only nontrivial term we can obtain is from the second line in \eqref{proof1}.
So let us rewrite the second line of \eqref{proof1} written in patch $M_{\alpha}$ in terms
of the quantities defined in the patch $M_{\beta}$:
\begin{multline}
\left.\exp\left[2\pi i\int_{\Sigma}(x^*B_{2\alpha}+\langle x^*B_{1\alpha},\Theta-\Lambda\rangle
+\frac12\langle x^*B_{0\alpha},(\Theta-\Lambda)\wedge(\Theta-\Lambda)\rangle
+\langle d\tilde{\theta}_{\alpha},d\theta_{\alpha}-\Lambda\rangle)
\right]\right|_{M_{\alpha\beta}}
\\
=
e^{2\pi i\int_{\Sigma}(\dots)_{\beta}}\bigr|_{M_{\alpha\beta}}
\left.\exp\left[2\pi i\int_{\Sigma}\langle \tilde{\Theta}_{\beta}-\tilde{\Theta}_{\alpha},\Lambda\rangle\right]\right|_{M_{\alpha\beta}}
\times
\\
\exp\left[2\pi i\int_{\Sigma}d_{\Sigma}\,x^*(
\tilde{a}_{\alpha\beta}+\langle\tilde{h}_{\alpha\beta},\Theta\rangle
+\langle d\tilde{\theta}_{\beta}+\frac12d\tilde{h}_{\beta\alpha},\lambda_{\alpha\beta}\rangle
-\langle\tilde{h}_{\alpha\beta},d\theta_{\beta}+\frac12d\lambda_{\beta\alpha}\rangle
)\right]
\end{multline}
where $\tilde{\Theta}_{\alpha}=d\tilde{\theta}_{\alpha}+B_{1\alpha}$.
To cancel the second exponent one has to require that $\tilde{\Theta}$ is a globally
well defined $1$-form which means that $\tilde{\theta}_{\alpha}$ is a coordinate
on a principal torus bundle as in Corollary~\ref{cor:1}. The third exponent vanishes
by itself since it is an integral of a total derivative over the compact closed surface $\Sigma$.
\end{proof}

\subsection{$[H_1]_{dR}\ne 0$}

In this subsection we rederive the result  that it is impossible to construct the gauged sigma model
with extended target space when $H_1$ is not exact  \cite{Hull:1989jk, Hull:1990ms}.

The simplest way to see this is to
consider the 3-dimensional form of  the WZ term $\int x^*H$, integrated
over a  3-space whose boundary is the world-sheet,
The first step is minimal coupling, i.e.
the replacement  $x^*\Theta$ to $x^*\Theta-\Lambda$ where
$\Lambda$ is a globally defined $1$-from on $\Sigma$. The lagrangian
should be a closed 3-form.
 The minimally-coupled $3$-form
\begin{subequations}
\begin{equation}
x^*H_3+\langle x^*H_2,\Theta-\Lambda\rangle+\frac12\langle x^*H_1,(\Theta-\Lambda)^2\rangle
\end{equation}
is not closed. To make it closed we add two terms: one proportional to $d\Lambda$ and
another proportional to $\Lambda\wedge d\Lambda$,
\begin{equation}
\langle G+w\,\Lambda,d\Lambda\rangle
\end{equation}
\label{gaugedH}
\end{subequations}
where $G$ is a $1$-form on $X$ and $w^{IJ}$ is a function on $X$.
The closure of \eqref{gaugedH} yields:
\begin{equation}
dG^I=F^I_{\#},\quad w^{IJ}=-w^{JI},\quad H_1^{IJ}=dw^{IJ},\quad
\Lc(\tfrac{\pd}{\pd\theta_I})\,w=0.
\end{equation}
 The invariance with
respect to the shift $\theta\to \theta+\phi(\sigma)$ requires $G^I=G_1^I-w^{IJ} \Theta_J$
where $G_1$ is a pullback of  $1$-form from $M$. To make \eqref{gaugedH} globally defined
requires $w$ to be globally defined. So we conclude that $H_1$ is an exact form.
From the discussion in section~2 it follows that one can take $w=B_0$
and $G=\Theta_{\#}$.

If one continues the discussion of lemma~\ref{lemma:1} one obtains
that $\Psi(x)$ is not gauge invariant any more. It descends to a
section of a non-trivial line bundle over the space of gauge equivalence
classes of gerbe connections.

\section{T-duality as a symmetry of a loop space}
\label{sec:loop}\setcounter{equation}{0}
In this section we discuss $T$-duality in terms of the  canonical quantization of the phase space
of the sigma model.
For an earlier treatment in which $T$-duality is understood as a canonical transformation, see
\cite{Alvarez:1994wj}.
A bosonic string sigma model on $S^1\times \Rh$ with the target space $X$
has the configuration space $LX$ --- the loop space of $X$ -- and the phase space $T^*LX$.
We show that when $B_0=0$ (or more generally when $B_0$ is globally well defined),
$T$-duality is a symmetry of a total space
of a line bundle over the cotangent bundle to the loop space on $Y$. The symplectic form $\omega_Y$ has two different torus actions and the two corresponding hamiltonian reductions yield the two $T$-dual models.

When  $B_0 \neq 0$ and is topologically nontrivial, there is still a  symmetry
but it  is realized differently. There is a Hamiltonian action of one torus but the other has a non Hamiltonian action. The obstruction to having a Hamiltonian action is that $[H_1]_{dR}\ne 0$ (i.e. there can only be a Hamiltonian action if $[H_1]_{dR}= 0$).

In the following subsections we will review the construction of the sigma model phase space for a general
smooth manifold $X$ and then restrict to the case in which  $X$ is a principal torus bundle.

\subsection{Phase space of string sigma model}
Let $X$ be a compact smooth manifold. The phase space for the string sigma model on $S^1\times \Rh$
is $T^*LX$ --- the cotangent bundle to the loop space of $X$. $T^*LX$ is naturally
a symplectic space: given a loop $x:S^1\hookrightarrow X$, the symplectic form is
\begin{equation}
\omega=\oint_{S^1}d\sigma\,\delta p
=\oint_{S^1}d\sigma\,\delta p_{M}(\sigma)\wedge\delta x^{M}(\sigma).
\label{omega}
\end{equation}
One can think of the momentum $p=p_{M}(\sigma)\delta x^{M}(\sigma)$ as of a section of the
pullback of $T^*X$ to $S^1$. Here $\delta$ is the differential on the loop space and $x^M$ are coordinates on $X$.
To quantize the theory   we follow the standard procedure of geometrical quantization (see e.g. \cite{Axelrod:1989xt})  and
specify a hermitian line bundle over the phase space with a connection  that has curvature $\omega$.
Since $\omega$ is exact one can take a trivial line bundle and choose a connection
\begin{equation}
\vartheta=\delta z+\oint_{S^1}d\sigma\,p
\end{equation}
where $z\in\Ch$ is a coordinate on the fibre.
 The wave-functions are then sections of this bundle.

We are interested in   sigma models which are twisted by a $B$-field.
Mathematically, the $B$-field is a gerbe connection, and the relevance of this
will become clear shortly. The gerbe connection has a curvature $H$ --- a closed globally defined
smooth $3$-form on $X$ with integral periods. Using $H$ we can twist
the symplectic form \eqref{omegaX} to give
\begin{equation}
\omega_X=\oint_{S^1}d\sigma\,\bigl[\delta p
+\imath(\pd_{\sigma}x)H\bigr].
\label{omegaX}
\end{equation}
Here
\begin{equation}
\oint_{S^1}d\sigma\,\imath(\pd_{\sigma}x)H=\frac 12 \oint_{S^1}d\sigma\, \pd_{\sigma}x ^M(\sigma)H_{MNP}(x(\sigma)) \delta x^{N}(\sigma)\wedge \delta x^{P}(\sigma)
\end{equation}
is a $2$-form on loop-space.

To quantize this phase space we specify a hermitian line bundle
over $T^*LX$
with connection $\vartheta_X$ whose curvature is $\omega_X$ { (The space of $L^2$ sections
of this line bundle form a prequantum Hilbert space)}. Now the magic fact is
that a gerbe connection on $X$ defines a principal circle bundle over $LX$ with connection
whose curvature is exactly the second term in \eqref{omegaX} \cite{Hitchin:2005uu}.
Then the   pullback of this circle bundle to $T^*LX$ can be taken as the required line bundle.

\paragraph{Connection and circle action.}
The connection on the line bundle over $T^*LX$ can be written as (in the patch $X_{\alpha}$)
\begin{equation}
\vartheta_X=\delta z_{\alpha} +
\oint_{S^1}d\sigma\,\bigl[p-\imath(\pd_{\sigma}x)B_{\alpha}\bigr].
\label{loop:connection}
\end{equation}
with $\imath(\pd_{\sigma}x)B=  \pd_{\sigma}x^MB_{MN}\delta x^{N}(\sigma)$.
Recall that $B$ is not a globally defined $2$-form, rather $B_{\alpha}-B_{\beta}=dA_{\alpha\beta}$
on the twofold intersection $X_{\alpha\beta}$.
The momentum $p$ is nevertheless globally well defined: it is a section of $x^*(T^*X)$.
So one sees that $z_{\alpha}-z_{\beta}=\oint_{S^1} x^*A_{\alpha\beta}$.

The associated circle action is given by the group of line bundles with connection:
a line bundle $L\to X$ with connection $A$ acts on the $B$-field by the shift $B_{\alpha}\mapsto B_{\alpha}
+F|_{X_{\alpha}}$ where $F$ is the curvature of the connection on $L$.
The pullback $x^*L$ of the line bundle $L$ to the loop $S^1$ is necessarily  a trivial line
bundle with a flat connection:  the
second Cheeger-Simons cohomology
(essentially the space of connections modulo gauge transformations; see e.g. \cite{Freed:2006yc}) is
$\check{H}^2(S^1)\cong U(1)$.
In other words, a pullback of the gerbe to a loop is a principal homogenous space for
$U(1)$, so we have a principal circle bundle.
From \eqref{loop:connection} it is easy to see
that\footnote{Note that
$
\imath(\pd_{\sigma}x)\delta A=\Lc(\pd_\sigma x)(x^*A)-\delta(\imath(\pd_{\sigma}x A))
=\pd_{\sigma}[\imath(\pd_{\sigma}x)A]-\delta(\imath(\pd_{\sigma}x)\, A).
$}
the coordinate $z$ shifts $z\mapsto z+\oint_{S^1}x^*A$.

\paragraph{Sigma model on a principal torus bundle.}
Let $X$ be the principal torus bundle $\pi:X\to M$ which we described in section~\ref{sec:sigmaT}.
In order to define a sigma model on $X$
we have to specify a gerbe connection. We use coordinates $x^\mu$ on the base $M$ and fibre coordinates $\theta_I$, as before, so that $x^M=(x^\mu,\theta_I)$.

A connection \eqref{loop:connection} for a target $X$ which is a principal   torus bundle has the following form
\begin{equation}
\vartheta_X=\delta z+\oint_{S^1}d\sigma\,\bigl[
p_{\mu}\delta x^{\mu}+\langle p,\Theta\rangle -\imath(\pd_{\sigma}x+\nabla_{\sigma}\theta)B
\bigr]
\label{varthetaX}
\end{equation}
where $x:S^1\to M$ defines a loop on the base manifold $M$,
and $\theta \in\Gamma(x^*X)$ is section of the pullback torus bundle;
 $\imath(\pd_{\sigma}x)$ stands for
$\imath(\pd_{\sigma}x^{\mu}(\frac{\pd}{\pd x^{\mu}})_{\texttt{hor}})$ and
\begin{equation}
\nabla_{\sigma}\theta_I=\pd_{\sigma}\theta_I+\imath(\pd_{\sigma}x)A_{I}
\end{equation}
is the covariant derivative of $\theta$ with respect to the pullback connection.
Explicitly,   $\langle p,\Theta\rangle =
p^I(\sigma)[\delta\theta_I(\sigma) + A_{I\mu}(x(\sigma))\,\delta x^{\mu}(\sigma)]$ etc.

\subsection{Symmetry of a loop space}

Recall that, under the assumptions of section~\ref{sec:sigmaT}, the
generalized correspondence space $Y$ is a double torus bundle over $M$ (provided $B_0^{IJ}$ is
globally well defined). We shall first  discuss the case when $B_0^{IJ}=0$.

The cotangent bundle to the loop space of $Y$ is naturally a symplectic
manifold with $\omega_Y=\delta \vartheta_Y$. Here $\vartheta_Y$ is a connection
on the corresponding line bundle:
\begin{equation}
\vartheta_Y=\delta z_{\alpha}+\oint_{S^1}d\sigma\,\Bigl[
p_{\mu}\delta x^{\mu}+\langle p,\Theta\rangle+\langle{\tilde \Theta} ,{\tilde p}\rangle
-\imath(\pd_{\sigma}x+\nabla_{\sigma}\theta)\bigl(B_{\alpha}-\langle{\tilde \Theta},\Theta\rangle\bigr)
\Bigr].
\label{varthetaY}
\end{equation}
Note that there are two extra terms in this connection
  than were in  \eqref{varthetaX}: the meaning of the first one is obvious,
 while the second can be interpreted as the topologically trivial
gerbe connection comming from the Poincar\' e line bundle (See e.g. \cite{Bouwknegt:2003vb} for an  explanation of this and other relevant geometric structures.). The sympletic form is
\begin{multline}
\omega_Y=\oint_{S^1}d\sigma\,\Bigl[
\delta p_{\mu}\wedge \delta x^{\mu}+\langle \delta p,\Theta\rangle
+\langle p,F\rangle
+\langle H_2,{\tilde p}\rangle-\langle {\tilde \Theta},\delta {\tilde p}\rangle
\\
+\imath(\pd_{\sigma}x+\nabla_{\sigma}\theta)\bigl(H_3+\langle {\tilde \Theta},F\rangle\bigr)
\Bigr]
\label{omegaY}
\end{multline}
where we have used   $d{\tilde \Theta}^I=p^*H_2 = {\tilde \pi}^* {\tilde F}$ (and we are assuming $B_0^{IJ}=0$).
The main result of this subsection is the following:
\begin{thm}
The symplectic form $\omega_Y$ is invariant with respect to the $\mathbb{\tilde T}^n \times
\mathbb{T}^n$ action generated by the fundamental vector fields $\frac{\delta}{\delta {\tilde \theta}^I}$
and $\frac{\delta}{\delta\theta_I}$. Moreover this action is hamiltonian:
\begin{subequations}
\begin{align}
\imath(\tfrac{\delta}{\delta{\tilde \theta}^I})\omega_Y&=\delta(- {\tilde p}_I+\nabla_{\sigma}\theta_I);
\\
\imath(\tfrac{\delta}{\delta\theta_I})\omega_Y&=\delta(-p^I+\nabla_{\sigma}{\tilde \theta}^I).
\end{align}
\end{subequations}
The symplectic reduction with respect to ${ \mathbb{\tilde T}}^n$ or $\mathbb{T}^n$
yields the symplectic forms $\omega_X$ or $\omega_{\tilde X}$ respectively where
\begin{subequations} \label{thr-tilde}
\begin{align}
\omega_X&=\oint_{S^1}d\sigma\,\Bigl[
\delta p_{\mu}\wedge\delta x^{\mu}
+\langle\delta p,\Theta\rangle+\langle p,F\rangle
+\imath(\pd_{\sigma}x+\nabla_{\sigma}\theta)\bigl(H_3+\langle {\tilde F},\Theta\rangle\bigr)
\Bigr];
\\
\omega_{\tilde X}&=\oint_{S^1}d\sigma\,\Bigl[
\delta p_{\mu}\wedge\delta x^{\mu}
-\langle{\tilde \Theta},\delta  {\tilde p} \rangle+\langle  {\tilde F}, {\tilde p} \rangle
+\imath(\pd_{\sigma}x+\nabla_{\sigma}{\tilde \theta})\bigl(
H_3+\langle{\tilde \Theta},F\rangle\bigr)
\Bigr].
\end{align}
\end{subequations}
\end{thm}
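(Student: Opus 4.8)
The plan is to get invariance for free and concentrate on the moment maps and the two reductions. Since $\omega_Y=\delta\vartheta_Y$ is $\delta$-exact on $T^{*}LY$, it is $\delta$-closed, so Cartan's formula on the loop space gives $\Lc(\xi)\,\omega_Y=\imath(\xi)\,\delta\omega_Y+\delta\,\imath(\xi)\,\omega_Y=\delta\,\imath(\xi)\,\omega_Y$; hence, once I establish $\imath(\xi)\,\omega_Y=\delta\mu_{\xi}$ for $\xi=\tfrac{\delta}{\delta\tilde\theta^{I}}$ and for $\xi=\tfrac{\delta}{\delta\theta_{I}}$, the invariance $\Lc(\xi)\,\omega_Y=\delta\delta\mu_{\xi}=0$ is automatic. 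The two fundamental vector fields generate the \emph{rigid} shifts $\theta_{I}(\sigma)\mapsto\theta_{I}(\sigma)+c_{I}$ and $\tilde\theta^{I}(\sigma)\mapsto\tilde\theta^{I}(\sigma)+\tilde c^{I}$; they annihilate $x$, $p_{\mu}$, $p$, $\tilde p$, they satisfy $\imath(\tfrac{\delta}{\delta\theta_{I}})\delta\theta_{J}=\delta^{I}_{J}$ and $\imath(\tfrac{\delta}{\delta\tilde\theta^{I}})\delta\tilde\theta^{J}=\delta_{I}^{J}$ (constant in $\sigma$), and they kill every $d\sigma$-leg.

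\textbf{The moment maps.} I then contract term by term in \eqref{omegaY}. The only place in $\omega_Y$ where a $\delta\theta$ (resp.\ $\delta\tilde\theta$) can hide is inside $\Theta_{I}=d\theta_{I}+A_{I}$ (resp.\ $\tilde\Theta^{I}=d\tilde\theta^{I}+B_{1}^{I}$), since $A_{I}$, $B_{1}^{I}$, $F_{I}$, $H_{2}$, $H_{3}$ all descend to $M$ — here one uses $\Lc(\tfrac{\pd}{\pd\theta_{I}})B_{\alpha}=0$, so that $B_{1\alpha}$ is a base form, together with $B_{0}^{IJ}=0$. For $\tfrac{\delta}{\delta\tilde\theta^{I}}$ the term $-\langle\tilde\Theta,\delta\tilde p\rangle$ gives $-\delta\tilde p_{I}$, while the $\delta\tilde\theta$-part of $\imath(\pd_{\sigma}x+\nabla_{\sigma}\theta)\langle\tilde\Theta,F\rangle$ contracts to $\pm\oint d\sigma\,\imath(\pd_{\sigma}x)F_{I}$. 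Using the identity recorded in the footnote to \eqref{loop:connection}, $\oint d\sigma\,\delta(\imath(\pd_{\sigma}x)A_{I})=-\oint d\sigma\,\imath(\pd_{\sigma}x)\,dA_{I}=-\oint d\sigma\,\imath(\pd_{\sigma}x)F_{I}$, together with $\oint d\sigma\,\pd_{\sigma}\delta\theta_{I}=0$, this curvature term is $\pm\,\delta\oint d\sigma\,\nabla_{\sigma}\theta_{I}$; collecting gives $\imath(\tfrac{\delta}{\delta\tilde\theta^{I}})\omega_Y=\delta\oint d\sigma\,(-\tilde p_{I}+\nabla_{\sigma}\theta_{I})$ (the $\oint d\sigma$ being suppressed in the statement). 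The computation for $\tfrac{\delta}{\delta\theta_{I}}$ is the mirror image: $-\delta p^{I}$ comes from $\langle\delta p,\Theta\rangle$, and the curvature contribution now originates in the $\delta\theta$-part of $\langle\tilde\Theta,\Theta\rangle$ inside $\vartheta_Y$ in \eqref{varthetaY}; after applying $\delta$ and using $d\tilde\Theta^{I}=p^{*}H_{2}^{I}$ (i.e.\ \eqref{Hj} with $B_{0}=0$ and $\Lc(\tfrac{\pd}{\pd\theta_{I}})B_{\alpha}=0$) the same footnote identity turns it into $\pm\,\delta\oint d\sigma\,\nabla_{\sigma}\tilde\theta^{I}$, so that $\imath(\tfrac{\delta}{\delta\theta_{I}})\omega_Y=\delta\oint d\sigma\,(-p^{I}+\nabla_{\sigma}\tilde\theta^{I})$. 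The things to watch here are the signs coming from the $\mathfrak t/\mathfrak t^{*}$ pairing and wedge antisymmetry, and the order of $\delta$ and $\pd_{\sigma}$.

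\textbf{The reductions.} For the $\mathbb{\tilde T}^{n}$-reduction I pass to the level set $\{\tilde p_{I}=\nabla_{\sigma}\theta_{I}\}$ of the moment-map density and quotient by the $\tilde\theta$-shifts; the quotient is coordinatized by $(x,p_{\mu},\theta,p)$, so it is $T^{*}LX$. Substituting $\tilde p_{I}\mapsto\nabla_{\sigma}\theta_{I}$ in \eqref{omegaY}, the residual $\delta\tilde\theta$-terms combine — via $\delta\nabla_{\sigma}\theta_{I}=\pd_{\sigma}\delta\theta_{I}+\delta(\imath(\pd_{\sigma}x)A_{I})$ and the footnote identity — into $\sigma$-total derivatives that vanish under $\oint$, and in the remaining terms one uses $d\tilde\Theta^{I}=p^{*}H_{2}^{I}=\tilde\pi^{*}\tilde F^{I}$ and one integration by parts in $\sigma$ to trade $\imath(\pd_{\sigma}x+\nabla_{\sigma}\theta)\langle\tilde\Theta,F\rangle$ together with $\langle H_{2},\nabla_{\sigma}\theta\rangle$ for $\imath(\pd_{\sigma}x+\nabla_{\sigma}\theta)\langle\tilde F,\Theta\rangle$; what is left is exactly $\omega_X$ of \eqref{thr-tilde}. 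The $\mathbb T^{n}$-reduction is completely parallel: impose $p^{I}=\nabla_{\sigma}\tilde\theta^{I}$, quotient by the $\theta$-shifts, use $d\Theta_{I}=\pi^{*}F_{I}$, and integrate by parts to reach $\omega_{\tilde X}$. I expect this last identification to be the main obstacle: showing that after the substitution and the quotient the $\tilde\Theta$-, $\tilde p$- and $H_{2}$-terms reorganize \emph{precisely} into the $\langle\tilde F,\Theta\rangle$-term of $\omega_X$ (and dually into the $\langle\tilde\Theta,F\rangle$-term of $\omega_{\tilde X}$) requires performing the $\sigma$-integrations by parts consistently, carefully separating $\delta$-exact pieces from $\sigma$-total derivatives, and exploiting $d\tilde\Theta^{I}=p^{*}H_{2}^{I}$; the non-commutativity of $\delta$ and $\pd_{\sigma}$ on loop-evaluated forms is what makes the signs delicate throughout, and is also the reason that only the constant-mode tori $\mathbb{\tilde T}^{n}\times\mathbb T^{n}$ — and not their loop groups — act in a Hamiltonian fashion.
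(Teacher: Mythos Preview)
The paper does not supply a proof of this theorem: the result is stated immediately after the formula \eqref{omegaY} for $\omega_Y$, and the reader is expected to verify it by direct contraction and substitution. Your proposal provides exactly that verification, and the strategy---deduce invariance from Cartan's formula once the moment-map identities are in hand, then carry out the symplectic reduction by restricting to the zero level and checking that the residual vertical directions drop out after $\sigma$-integration by parts---is the natural one and is correct in outline.

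Two of your closing remarks are inaccurate, though they do not affect the main argument. First, $\delta$ and $\pd_{\sigma}$ \emph{commute} on loop space: $\delta(\pd_{\sigma}x^{M})=\pd_{\sigma}(\delta x^{M})$, since one is the exterior differential on $LX$ and the other differentiates the loop parameter; the footnote identity you invoke is just Cartan's formula $\imath(\pd_{\sigma}x)\,\delta A=\Lc(\pd_{\sigma}x)A-\delta\,\imath(\pd_{\sigma}x)A$ with $\Lc(\pd_{\sigma}x)=\pd_{\sigma}$, and no anticommutativity is involved. Second, it is not the case that only the constant-mode torus acts Hamiltonianly. The moment-map identities you derive hold $\sigma$-pointwise as densities, so the full loop group $L\tilde{\mathbb T}^{n}\times L\mathbb T^{n}$ acts with moment map $\sigma\mapsto(-\tilde p_{I}+\nabla_{\sigma}\theta_{I},\,-p^{I}+\nabla_{\sigma}\tilde\theta^{I})$; this is precisely how the paper uses them in \eqref{red-ha}--\eqref{non-ha} of the next subsection. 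For the theorem as literally stated (the finite-dimensional torus of rigid shifts) your integrated moment map is the right object, but the stronger pointwise statement is also true and is what makes the reduction genuinely local in $\sigma$.
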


\begin{cor}
Instead of doing symplectic reduction with respect to ${ \mathbb{\tilde T}}^n$ or $\mathbb{T}^n$
one can reduce with respect to some sub-torus inside $\mathbb{\tilde T}^n \times \mathbb{T}^n$.
The space of such sub-tori is an affine space with the group
of translations being given by $O(n,n;\Zh)$. This space encompasses all $T$-dual backgrounds.
\end{cor}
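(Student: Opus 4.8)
The plan is to read the $\mathbb{\tilde T}^n\times\mathbb{T}^n$ action of the preceding theorem as a single Hamiltonian action of a $2n$-torus on $(T^*LY,\omega_Y)$, and then to run Marsden--Weinstein reduction not only along the two coordinate factors but along an arbitrary $n$-dimensional sub-torus. The Lie algebra of $\mathbb{\tilde T}^n\times\mathbb{T}^n$ is $\Rh^{2n}$ with integral lattice $\Zh^{2n}$, so a sub-torus corresponds to a primitive sublattice $L\subset\Zh^{2n}$; I would take $L$ of rank $n$, so that reduction removes $n$ torus directions and their $n$ conjugates from the $4n$-dimensional fibre $T^*\mathbb{T}^{2n}$ and returns a $2n$-dimensional fibre, i.e.\ the phase space of a sigma model on a single $\mathbb{T}^n$-bundle. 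First I would package the two moment maps of the theorem into a single charge-covariant expression
\begin{equation*}
\mu_{(\tilde a,a)}=\oint_{S^1}d\sigma\,\bigl[\tilde a^I(-\tilde p_I+\nabla_\sigma\theta_I)+a_I(-p^I+\nabla_\sigma\tilde\theta^I)\bigr],\qquad (\tilde a,a)\in\Rh^{2n},
\end{equation*}
which is linear in the generator and manifestly transforms as a vector under $O(n,n;\Zh)$.

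The key structural input is that the fibre charge lattice $\Zh^{2n}=\tilde\Lambda\oplus\Lambda$ carries the natural symmetric $O(n,n)$ pairing $\langle(\tilde a,a),(\tilde a',a')\rangle=\tilde a^I a'_I+a_I\tilde a'^I$, and that this pairing is preserved by the affine gluing $\left(\begin{smallmatrix}\mathbbmss{1}&m_{\alpha\beta}\\0&\mathbbmss{1}\end{smallmatrix}\right)$ of Corollary~\ref{cor:2} precisely because $m_{\alpha\beta}$ is skew; hence the $O(n,n)$ form is globally defined on $Y$. I would then show that the condition for the reduced space $\mu_L^{-1}(c)/T_L$ to again be a cotangent bundle to a loop space --- a \emph{geometric} dual, with a clean split into positions and conjugate momenta --- is that $L$ be maximal isotropic (Lagrangian) for this $O(n,n)$ form. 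The two distinguished sublattices $\{(\tilde a,0)\}$ and $\{(0,a)\}$ are both maximal isotropic, which is why the theorem's reductions along $\mathbb{\tilde T}^n$ and $\mathbb{T}^n$ land on honest sigma models $\omega_X$ and $\omega_{\tilde X}$.

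Next I would carry out the reduction for a general maximal isotropic $L$: restrict to the level set where the $n$ moment maps spanning $L$ take fixed values (the analogue of imposing $\tilde p=\nabla_\sigma\theta$ or $p=\nabla_\sigma\tilde\theta$) and quotient by $T_L$. This computation is a linear-algebra repackaging of the two model computations already done in the theorem, and it should yield a symplectic form $\omega_{X_L}$ of exactly the same shape as \eqref{thr-tilde}, now built from the mixed connection and momentum selected by $L$; specializing $L$ to the two coordinate Lagrangians must reproduce \eqref{thr-tilde} verbatim as a consistency check. Since $O(n,n;\Zh)$ acts on the fibre preserving the form, it permutes the maximal isotropic sublattices, carrying $\omega_{X_L}$ to $\omega_{X_{L'}}$ and sending the charge vector $(F,\tilde F)$ to its $O(n,n;\Zh)$-image, matching the corollary following Theorem~\ref{thm:31}.

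Finally I would identify the set of reductions with the set of maximal isotropic primitive sublattices of $(\Zh^{2n},\langle\,,\rangle)$, on which $O(n,n;\Zh)$ acts transitively; this is the precise meaning of the space of sub-tori being an affine space with translation group $O(n,n;\Zh)$, and transitivity is what guarantees that every $T$-dual background is reached. I expect the main obstacle to be the third step: the moment maps are densities on $S^1$ and their Poisson brackets carry $\nabla_\sigma$ Schwinger (central) terms, so I must check carefully that these integrate to zero on the compact loop and that it is exactly the \emph{constant} $O(n,n)$ pairing --- not the skew pairing induced by $\omega_Y$, which is a total derivative and hence trivial on integrated charges --- that controls whether the reduced space is geometric. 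Equivalently, the delicate point is to prove that isotropy of $L$ is necessary and sufficient for the Poincar\'e-type term $\langle\tilde\Theta,\Theta\rangle$ in $\vartheta_Y$ to descend to a well-defined polarization after reduction, so that non-isotropic $L$ correspond to the non-geometric T-folds lying outside the present geometric framework.
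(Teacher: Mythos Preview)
The paper provides no proof of this corollary: it is stated as a one-sentence remark following the theorem and is not argued further. Your proposal therefore supplies considerably more than the paper does, and on its own terms the plan is sound. Packaging the two moment maps into a single $\Rh^{2n}$-valued map, identifying rank-$n$ sub-tori with primitive sublattices of $\Zh^{2n}$, and singling out the maximal isotropic ones as those whose reduction returns a sigma-model phase space is exactly the right conceptual framework; it matches the standard picture of geometric $T$-duals as choices of polarization in the $O(n,n)$ charge lattice, and your consistency check against the two coordinate Lagrangians reproducing \eqref{thr-tilde} is the correct sanity test.

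Two remarks. First, the phrase ``affine space with group of translations $O(n,n;\Zh)$'' in the statement is informal: the $O(n,n;\Zh)$-action on maximal isotropic primitive sublattices is transitive but not free (the stabilizer of a fixed Lagrangian contains $GL(n;\Zh)$ together with a unipotent block), so what you actually get is a homogeneous space, not a torsor. Your reading via transitivity is the intended content and is what guarantees that every $T$-dual background is reached. Second, your worry about Schwinger terms is well placed but in fact works in your favour: the current algebra in Appendix~\ref{app:reductionCourant} shows that the anomalous piece in the bracket of two currents is governed precisely by the symmetric $O(n,n)$ pairing $\imath(v_1)\rho_2+\imath(v_2)\rho_1$. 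For the \emph{integrated} charges (constant test functions) these terms vanish identically, so the full $2n$-torus action is genuinely Hamiltonian and Marsden--Weinstein reduction along any sub-torus is unobstructed; the isotropy of $L$ enters only at the next stage, as the condition that the reduced symplectic form again has the cotangent-bundle shape $\omega_{X_L}$ with a clean position/momentum split, which is the point you correctly isolate as the delicate one.
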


Clearly, the symmetry   in (\ref{thr-tilde}) corresponds to the $T$-duality  exchange that was discussed in section \ref{sec:stdTdual} (for $B_0^{IJ}=0$).

\subsection{Non-hamiltonian torus action}

As was discussed in section~\ref{sec:sigma},
if $B_0^{IJ}$ is non-zero the original torus action can in general
be only lifted to an  $\Rh^n$ action.  However we have also seen that if $B_0^{IJ}$ is globally
well-defined then  topologically $Y$ is a principal $2n$-torus bundle over $M$, and
there exists another lift which defines the torus action.  In other words, $B_0^{IJ} \neq 0$ is a geometrical obstruction for $Y$ being a principal torus bundle with connection over $M$, but when it is well defined  there exists a connection (\ref{tTh}),
$\tilde{\Theta}^I=\Theta_{\#}^I-B_0^{IJ}\Theta_J$, which respects the double-torus fibered structure of $Y$. We may extend the construction of the previous subsection to the case when $B_0^{IJ} \neq 0$ and is not necessarily globally defined.

\paragraph{Symplectic form on Y.}
A connection on a line bundle over the cotangent bundle to the loop space of $Y$
can be written in a form similar to \eqref{varthetaY}:
\begin{equation}
\vartheta_Y=\delta z+\oint_{S^1}d\sigma\,\Bigl[
p_{\mu}\delta x^{\mu}+\langle p,\Theta\rangle+\langle\Theta_{\#},p^{\#}\rangle
-\imath(\pd_{\sigma}x+\nabla_{\sigma}\theta)\bigl(B-\langle\Theta_{\#},\Theta\rangle\bigr)
\Bigr]
\label{varthetaYb}
\end{equation}
where $\nabla_{\sigma}\theta_I$ is as before but
\begin{equation}
\Theta^I_{\#}=\delta\theta^I_{\#}+B_1^I-B_0^{IJ}\Theta_J \, .
\end{equation}
Note that $\vartheta_Y$ is written in terms of globally well defined connection $\Theta_{\#}$.
One easily sees that if $B_0=0$ the equation \eqref{varthetaYb} reduces to
\eqref{varthetaY}.
The sympletic form is
\begin{multline}
\omega_Y=\oint_{S^1}d\sigma\,\Bigl[
\delta p_{\mu}\wedge \delta x^{\mu}+\langle \delta p,\Theta\rangle
+\langle p,F\rangle
+\langle F_{\#},p^{\#}\rangle-\langle \Theta_{\#},\delta p^{\#}\rangle
\\
+\imath(\pd_{\sigma}x+\nabla_{\sigma}\theta)H-\delta\langle\Theta_{\#},\nabla_{\sigma}\theta\rangle
\Bigr]
\label{omegaYB}
\end{multline}
where we have used   $d\Theta_{\#}^I=p^*F_{\#}^I$. Recall that $F_{\#}^I = \imath(\frac{\pd}{\pd\theta_I})H$
and thus it is globally well defined.

The action of $\mathbb{T}^n_{\#}$ on the symplectic form $\omega_Y$ is still hamiltonian:
\begin{equation}\label{red-ha}
\imath\bigl(\tfrac{\delta}{\delta\theta_{\#}^I}\bigr)\,\omega_Y=\delta(-p^{\#}_I+\nabla_{\sigma}\theta_I).
\end{equation}
So the hamiltonian reduction by $\mathbb{T}^n_{\#}$
yields a sigma model with the symplectic form $\omega_X = \delta \vartheta_X$ constructed from the connection (\ref{varthetaX}). The action of  $\mathbb{T}^n$ however is no longer hamiltonian:
\begin{equation}
\label{non-ha}
\imath\bigl(\tfrac{\delta}{\delta\theta_I}\bigr)_{\texttt{hor}} \omega_Y = \delta(-p^I+\nabla_{\sigma}\theta_{\#}^I) + H_1^{IJ} (-p_J^{\#}+\nabla_{\sigma}\theta_J)\, ,
\end{equation}
where $(\tfrac{\delta}{\delta\theta_I})_{\texttt{hor}}|_{M_{\alpha}}=\tfrac{\delta}{\delta\theta_{I\alpha}}
-B_{0\alpha}^{IJ}\frac{\delta}{\delta\theta_{\#\,\alpha}^J}$  denotes the horizontal lift
of the vector field $\frac{\delta}{\delta\theta}$ via the connection $\Theta_{\#}$.
This is explained by the failure of the $\mathbb{T}^n$ action  to lift to $Y$ and it is not surprising that  $H_1^{IJ}$ defines the obstruction to the symplectic reduction.\footnote{Actually this is a weaker condition - having a constant irrational $B_0^{IJ}$ is sufficient for the failure of the torus action to lift. Once more, (\ref{non-ha}) holds regardless  whether $B_0^{IJ}$ is globally defined or not.}

When  $B_0^{IJ}$ is well defined this situation can be fixed:
the vector fields $\frac{\pd}{\pd\theta_I}$ and $\frac{\pd}{\pd\theta^I_{\#}}$
are independently well defined and thus we can rewrite equation \eqref{non-ha} as
\begin{multline}
\label{ha-ha}
\imath\bigl(\tfrac{\delta}{\delta\theta_I}\bigr) \omega_Y
 -B_0^{IJ}\imath\bigl(\tfrac{\delta}{\delta\theta^J_{\#}}\bigr) \omega_Y
 = \delta(-p^I+\nabla_{\sigma}\theta_{\#}^I) + \delta B_0^{IJ} (-p_J^{\#}+\nabla_{\sigma}\theta_J)
 \\
\Rightarrow\quad\imath\bigl(\tfrac{\delta}{\delta\theta_I}\bigr)\omega_Y
= \delta\bigl(-p^I-B_0^{IJ}p_J^{\#}+\underbrace{\nabla_{\sigma}{ \theta_{\#}}^I + B_0^{IJ} \nabla_{\sigma}\theta_J}_{\nabla_{\sigma}\tilde{\theta}^I} \bigr) .
\end{multline}
Now on substituting $\Theta_{\#}^I=\tilde{\Theta}^I + B_0^{IJ}\Theta_J$ and redefining $(p^I- B^{IJ}_0 p^{\#}_J, p_I^{\#}) \mapsto (p^I , \tilde{p}_I)$ one finds that $\omega_Y$
can be written as
\begin{multline}
\omega_Y=\oint_{S^1}d\sigma\,\Bigl[
\delta p_{\mu}\wedge \delta x^{\mu}+\langle \delta p,\Theta\rangle
+\langle p,F\rangle
+\langle {\tilde F} ,{\tilde p}\rangle-\langle {\tilde \Theta},\delta {\tilde p}\rangle
\\
+\imath(\pd_{\sigma}x+\nabla_{\sigma}\theta)\bigl(H_3+\langle {\tilde \Theta},F\rangle\bigr)
\Bigr]
\label{omegaY-tilde}
\end{multline}
One can check that there are now two hamiltonian torus actions with respective reductions yielding symplectic forms $\omega_X$ and $\omega_{\tilde X}$ related via $(\Theta_I, p^I) \leftrightarrow ({\tilde \Theta}^I, {\tilde p}_I)$.

When $B_0^{IJ}$ is not globally well defined,  similar steps can be made but instead of
(\ref{omegaY-tilde}) one obtains:

\begin{thm}\label{thm:last}
The symplectic form $\omega_Y$ on $T^*LY$ can be written as
\begin{multline}
\omega_Y=\oint_{S^1}d\sigma\,\Bigl[
\delta p_{\mu}\wedge \delta x^{\mu}+\langle \delta p_{\alpha},\Theta\rangle
+\langle p_{\alpha},F\rangle
+\langle \tilde{F}_{\alpha} ,{\tilde p}\rangle
-\langle \tilde{\Theta}_{\alpha},\delta \tilde{p}\rangle
\\
+\imath(\pd_{\sigma}x+\nabla_{\sigma}\theta)\bigl(H_3+\langle \tilde{\Theta}_{\alpha},F\rangle\bigr)
\Bigr]
\label{omegaY-new}
\end{multline}
where the both the momenta $\tilde{p}_{\alpha},\,p_{\alpha}$ and connections $\tilde{\Theta}_{\alpha},\,
\Theta_{\alpha}$ are not globally defined. The gluing functions on twofold overlaps $M_{\alpha\beta}$
are
\begin{equation}
\begin{pmatrix}
\tilde{p}_{\alpha}
\\
p_{\alpha}
\end{pmatrix}
=
\begin{pmatrix}
\mathbbmss{1} & 0
\\
m_{\alpha\beta} & \mathbbmss{1}
\end{pmatrix}
\begin{pmatrix}
\tilde{p}_{\beta}
\\
p_{\beta}
\end{pmatrix}
\quad\text{and}\quad
\begin{pmatrix}
\tilde{\Theta}_{\alpha}
\\
\Theta_{\alpha}
\end{pmatrix}
=
\begin{pmatrix}
\mathbbmss{1} & m_{\alpha\beta}
\\
0 & \mathbbmss{1}
\end{pmatrix}
\begin{pmatrix}
\tilde{\Theta}_{\beta}
\\
\Theta_{\beta}
\end{pmatrix}.
\end{equation}
The expressions \eqref{omegaYB} and \eqref{omegaY-new}  are the same.
Note that although most of the terms in \eqref{omegaY-new} are not well defined --- their sum
is well defined, and thus can be integrated.
\end{thm}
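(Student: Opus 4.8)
The plan is to start from the already-established expression \eqref{omegaYB} for $\omega_Y$, which is written entirely in terms of the globally well-defined connection $\Theta_{\#}$ and the global momentum $p^{\#}$, and to repeat, patchwise, the algebraic manipulation that led from \eqref{omegaYB} to \eqref{omegaY-tilde} in the globally-defined case. Concretely, in each patch $M_{\alpha}$ I would substitute $\Theta_{\#}^I|_{Y_\alpha}=\tilde{\Theta}_{\alpha}^I+B_{0\alpha}^{IJ}\Theta_J$ (this is the local relation from Corollary \ref{cor:2}, where $\tilde{\Theta}_{\alpha}=d\tilde{\theta}_{\alpha}+B_{1\alpha}$), and perform the local field redefinition $(p^I-B_{0\alpha}^{IJ}p^{\#}_J,\,p^{\#}_I)\mapsto(p_{\alpha}^I,\tilde{p}_{\alpha\,I})$. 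The same computation as in \eqref{ha-ha}–\eqref{omegaY-tilde} then produces, in the patch $M_\alpha$, exactly the integrand of \eqref{omegaY-new} with all objects carrying an $\alpha$ label; the $\nabla_{\sigma}\theta_{\#}+B_{0\alpha}\nabla_{\sigma}\theta$ combination reassembles into $\nabla_{\sigma}\tilde{\theta}^I_{\alpha}$, and $d\tilde{\Theta}_{\alpha}^I=\pi^*(d_M B_{1\alpha}^I)$, which by the local expression \eqref{Hj} for $H_1$ and the relation $\tilde{F}=H_2+B_0 F$ equals (the pullback of) $\tilde{F}_{\alpha}^I=H_2^I+B_{0\alpha}^{IJ}F_J$, so the curvature term $\langle\tilde{F}_\alpha,\tilde p\rangle$ appears as written. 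This establishes the equality of \eqref{omegaYB} and the $\alpha$-patch form of \eqref{omegaY-new} and simultaneously proves the last sentence of the theorem: since the left-hand side \eqref{omegaYB} is manifestly global, the (individually patch-dependent) terms on the right must sum to a global $2$-form.

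The second task is to verify the stated gluing matrices. For the momenta, note that $p^{\#}$ is global, so $\tilde p_{\alpha}=p^{\#}=\tilde p_{\beta}$, giving the trivial top row of the momentum matrix; and $p_{\alpha}^I=p^I-B_{0\alpha}^{IJ}p^{\#}_J$, so using $B_{0\alpha}^{IJ}-B_{0\beta}^{IJ}=m_{\alpha\beta}^{IJ}$ from \eqref{BBBgluing} one gets $p_{\alpha}^I-p_{\beta}^I=-m_{\alpha\beta}^{IJ}p^{\#}_J=-m_{\alpha\beta}^{IJ}\tilde p_{\beta\,J}$, i.e. $p_{\alpha}=p_{\beta}-m_{\alpha\beta}\tilde p_{\beta}$; rearranged, this is precisely the lower-triangular matrix with entry $m_{\alpha\beta}$ in the lower-left slot acting on $(\tilde p_\beta,p_\beta)^{T}$ — here one must be careful that the convention is $\begin{pmatrix}\tilde p_\alpha\\ p_\alpha\end{pmatrix}=\begin{pmatrix}\mathbbmss{1}&0\\ m_{\alpha\beta}&\mathbbmss{1}\end{pmatrix}\begin{pmatrix}\tilde p_\beta\\ p_\beta\end{pmatrix}$, which forces a sign/ordering check of $m_{\alpha\beta}$ versus $m_{\beta\alpha}$. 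For the connection row, the gluing $\tilde{\Theta}_{\alpha}=\tilde{\Theta}_{\beta}+m_{\alpha\beta}\Theta_{\beta}$, $\Theta_{\alpha}=\Theta_{\beta}$ is exactly the content of Corollary \ref{cor:2}, so I would simply cite it; consistency of the two matrices is just the statement that the pairing $\langle\tilde{\Theta}_\alpha,\tilde p_\alpha\rangle+\langle p_\alpha,\Theta_\alpha\rangle$ (equivalently the symplectic pairing of the doubled momentum with the doubled connection) is invariant, i.e. the two matrices are inverse-transposes of one another, which is immediate.

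Finally I would check that \eqref{omegaY-new} is indeed well-defined, i.e. patch-independent, as an internal consistency statement: the sum $\langle\delta p_{\alpha},\Theta\rangle+\langle p_{\alpha},F\rangle+\langle\tilde F_{\alpha},\tilde p\rangle-\langle\tilde{\Theta}_{\alpha},\delta\tilde p\rangle+\imath(\partial_\sigma x+\nabla_\sigma\theta)\langle\tilde{\Theta}_{\alpha},F\rangle$ should be independent of $\alpha$; this follows by feeding the gluing matrices into each term and using $\delta m_{\alpha\beta}=0$, $d_M m_{\alpha\beta}=0$ (integrality/constancy of $m_{\alpha\beta}$ from the theorem of subsection \ref{sec:gerbe}) together with $\tilde F_\alpha - \tilde F_\beta = m_{\alpha\beta}F$, which I would read off from $B_{1\alpha}-B_{1\beta}$ in \eqref{BBBgluing} after taking $d_M$. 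I expect the main obstacle to be purely bookkeeping: tracking the half-integer shifts $\tfrac12\lambda_{\beta\alpha}$, $\tfrac12\tilde h_{\alpha\beta}$ that appear in \eqref{BBBgluing} and in Corollary \ref{cor:2} and confirming they drop out of the momentum/connection gluing (they affect $\theta$-coordinates but not the one-forms $\Theta_\alpha,\tilde\Theta_\alpha$ or the momenta), and keeping the index placement and the $m_{\alpha\beta}$ vs.\ $m_{\beta\alpha}$ skew-symmetry consistent throughout; none of it is conceptually deep, but it is where an error would most easily creep in. Once \eqref{omegaYB}$=$\eqref{omegaY-new} is verified patchwise, the equality of the two global forms \eqref{omegaYB} and \eqref{omegaY-new} follows, completing the proof.
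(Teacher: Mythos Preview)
Your proposal is correct and is precisely the argument the paper has in mind: the paper does not give an explicit proof of this theorem but simply says ``similar steps can be made'' (referring to the passage \eqref{ha-ha}--\eqref{omegaY-tilde}) and then states the result, so your patchwise repetition of the substitution $\Theta_{\#}=\tilde{\Theta}_{\alpha}+B_{0\alpha}\Theta$ and the local momentum redefinition, followed by reading off the gluing from $B_{0\alpha}-B_{0\beta}=m_{\alpha\beta}$ and Corollary~\ref{cor:2}, is exactly what is intended. Your caveat about the $m_{\alpha\beta}$ vs.\ $m_{\beta\alpha}$ sign is well placed --- the paper itself is not fully consistent between \eqref{tTh}, Corollary~\ref{cor:2}, and the redefinition preceding \eqref{omegaY-tilde}, so fixing one convention and carrying it through is indeed where the only real work lies.
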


 Thus the string sigma model can be consistently quantized  in the canonical approach with a phase space   constructed from  the
generalized correspondence space $Y$, which    an affine doubled torus
fibration over a base manifold $M$ even in the case in which the
$B_0^{IJ}$ component of $B$ is not globally well defined. This supports
the
view that passing to $Y$ is the correct way of dealing with sigma models
in situations in which the gauging and $T$-duality is obstructed. The
phase space on $T^*LY$ with symplectic form on $\omega_Y$ has a natural
$O(n,n)$ action and puts momentum and winding modes on an equal footing,
so that gluing functions mixing the two can be incorporated easily. It
seems that the phase space can then be defined in situations in which
there is no well defined dual configuration space.

\section*{Acknowledgments}
We would like to thank P.~Bouwknegt, M. Gra\~na, A.~ Lawrence, M.~Petrini, R.~Szabo, S.~Theisen, A.~Tseytlin and D.~Waldram for discussions.
We would like to acknowledge the hospitality of the
Albert Einstein Institute, Potsdam (D.B. and R.M.), Isaac Newton Institute, Cambridge (D.B. and C.H.) and Yukawa Institute, Kyoto (R.M.) during the course of this work.  D.B. was supported by PPARC and in part by RFBR grant 05-01-00758; R.M. is supported in part by RTN contracts  MRTN-CT-2004-005104 and  MRTN-CT-2004-512194 and by ANR grant BLAN06-3-137168. R.M. would like to thank the string theory group and the Institute for Mathematical Sciences at Imperial College for warm hospitality, and  EPSRC for support.

\appendix
\section{Wess-Zumino term and holonomy of the gerbe connection}
\label{app:WZ}\setcounter{equation}{0}
In this appendix we review a definition of Wess-Zumino term or logarithm
of a gerbe connection for topologically nontrivial $B$-field.

\paragraph{Holonomy of an abelian $1$-form gauge field.}
Given a contractible cover $\{M_{\alpha}\}$ of a manifold $M$
a $1$-form gauge field is specified by the following data
\begin{itemize}
\item a function  $\lambda_{\alpha\beta}$ on each twofold intersection $M_{\alpha\beta}$
 satisfying the cocycle condition that
 $\lambda_{\alpha\beta}+\lambda_{\beta\gamma}+\lambda_{\gamma\alpha}=0$ on threefold overlaps.

\item a $1$-form $A_{\alpha}$  on each  $M_{\alpha}$  such
that on twofold overlaps $M_{\alpha\beta}$:
 $A_{\alpha}-A_{\beta}|_{M_{\alpha\beta}}=d\lambda_{\alpha\beta}$.
\end{itemize}

A loop $\gamma:S^1\to M$ does not necessarily lies within one patch.
We break the loop into segments $\{\gamma_{\alpha}\subset  M_{\alpha}\}$ and
denote by $\gamma_{\alpha\beta}\in M_{\alpha\beta}$ a point where the
segments $\gamma_{\alpha}$ and $\gamma_{\beta}$ intersect. Then the
logarithm of the holonomy of the gauge field $A$ is defined as the following sum
\begin{equation}
\frac{1}{2\pi i}\log\mathrm{Hol}(A,\gamma)=\sum_{\{\gamma_{\alpha}\}}\int_{\gamma_{\alpha}}A_{\alpha}
+\sum_{\gamma_{\alpha\beta}}\lambda_{\alpha\beta}(\gamma_{\alpha\beta}).
\end{equation}
One can easily verify that this sum does not depend on a particular choice of
the partitioning $\{\gamma_{\alpha},\gamma_{\alpha\beta}\}$ of the loop $\gamma$.

\paragraph{Holonomy of a gerbe connection.}
A gerbe connection is defined by a set of $2$-forms $B_{\alpha}$, $1$-forms $\{A_{\alpha\beta}\}$
on twofold intersections and functions $f_{\alpha\beta\gamma}:M_{\alpha\beta\gamma}\to U(1)$
on threefold intersections (see section~2.2 for details). Given a $2$-cycle $\Sigma$
we can partition it as shown in the picture: $\Sigma_{\alpha}$ are
\begin{wrapfigure}{l}{100pt}
\includegraphics[width=90pt]{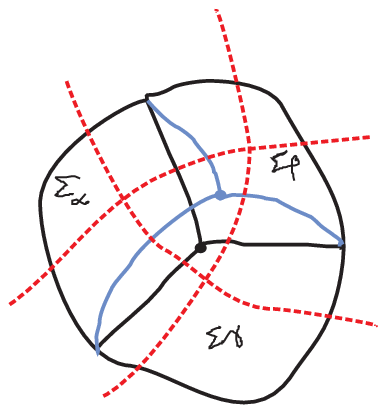}
\end{wrapfigure}
surfaces, $\Sigma_{\alpha\beta}$
is the common boundary of $\Sigma_{\alpha}$ and $\Sigma_{\beta}$ and $\Sigma_{\alpha\beta\gamma}$
is the intersection of segments $\Sigma_{\alpha\beta}$, $\Sigma_{\beta\gamma}$
and $\Sigma_{\gamma\alpha}$. The holonomy of the gerbe connection is defined
by the following sum (the orientation is important)
\begin{equation}
\frac{1}{2\pi i}\log \mathrm{Hol}(B,\Sigma)
=\sum_{\{\Sigma_{\alpha}\}}\int_{\Sigma_{\alpha}}B_{\alpha}
+\sum_{\{\Sigma_{\alpha\beta}\}}\int_{\Sigma_{\alpha\beta}}A_{\alpha\beta}
+\sum_{\{\Sigma_{\alpha\beta\gamma}\}}\frac{1}{2\pi i}\log
f_{\alpha\beta\gamma}(\Sigma_{\alpha\beta\gamma}).
\end{equation}
It requires a little bit more work to verify that this sum does not
depend on a particular choice of partitioning of the $2$-cycle $\Sigma$.

\section{Torus action on the gerbe connection}
\label{app:t-gerbe}
To discuss a group action on a sigma model one has to specify how it acts
on the  target space, metric and on any additional structure involved.
In this paper we have assumed that the space $X$ is a principal torus bundle ($\mathbb{T}^n$
acts freely on it) and that the metric is $\mathbb{T}^n$-invariant. We now discuss
how the torus group acts on a  gerbe connection with $\mathbb{T}^n$-invariant curvature.

In this appendix we choose \textit{a contractible} covering $\{\mathcal{U}_{\alpha}\}$ of the target space $X$. The main result can be summarized by the following
\begin{thm}
Let $\{\mathcal{U}_{\alpha}\}$ be a contractible covering of the space $X$. The
action of the torus group on a gerbe connection  with $\mathbb{T}^n$-invariant curvature is specified by
a $1$-forms $\{w_{\alpha}^I\}$ in every path $\mathcal{U}_{\alpha}$,
a function $u_{\alpha\beta}^I$ in each twofold overlap $\mathcal{U}_{\alpha\beta}$
and a constant $c_{\alpha\beta\gamma}$ is each threefold overlap $\mathcal{U}_{\alpha\beta\gamma}$ such
that they satisfy the following conditions
\begin{subequations}
\begin{align}
\Lc(\tfrac{\pd}{\pd\theta_I})B_{\alpha}&=dw_{\alpha}^I;
\label{a}
\\
(w_{\alpha}^I-w_{\beta}^I)\bigr|_{\mathcal{U}_{\alpha\beta}}
&=\Lc(\tfrac{\pd}{\pd\theta_I})A_{\alpha\beta}+du_{\alpha\beta}^I;
\label{b}
\\
(u_{\alpha\beta}^I+u_{\beta\gamma}^I+u_{\gamma\alpha}^I)\bigr|_{\mathcal{U}_{\alpha\beta\gamma}}
&=c_{\alpha\beta\gamma}^I
-\frac{1}{2\pi i}\,\Lc(\tfrac{\pd}{\pd\theta_I})\log f_{\alpha\beta\gamma};
\label{c}
\\
(c_{\alpha\beta\gamma}^I-c_{\beta\gamma\delta}^I+c_{\gamma\delta\alpha}^I
-c_{\delta\alpha\beta}^I)\bigr|_{\mathcal{U}_{\alpha\beta\gamma\delta}}&=m_{\alpha\beta\gamma\delta}^I\in\Zh.
\label{d}
\end{align}
\end{subequations}
\label{thm:B}
\end{thm}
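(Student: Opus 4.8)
The plan is to treat a torus action on the gerbe connection as a lift of the fundamental $\mathbb{T}^n$-action on $X$ to the local data $(B_\alpha,A_{\alpha\beta},f_{\alpha\beta\gamma})$, and to read off its infinitesimal generator by a descent over the cover: act with each generator, written for brevity $\Lc_I:=\Lc(\tfrac{\pd}{\pd\theta_I})$, on the defining relations of the gerbe connection, and at every stage use that the chosen cover $\{\mathcal{U}_\alpha\}$ and its finite overlaps are contractible, so that a closed form is a differential of a form of one lower degree (and a closed $0$-form is a constant on a connected overlap). Throughout I would fix the convention that $\Lc_I\log f_{\alpha\beta\gamma}$ means the single-valued function $\imath(\tfrac{\pd}{\pd\theta_I})(f_{\alpha\beta\gamma}^{-1}df_{\alpha\beta\gamma})$, so that no branch-of-logarithm ambiguity enters the bookkeeping.

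First, on the contractible patch $\mathcal{U}_\alpha$ one has $H=dB_\alpha$, so $\Lc_I H=0$ forces $\Lc_I B_\alpha$ to be closed, and the Poincar\'e lemma produces a $1$-form $w_\alpha^I$ with $\Lc_I B_\alpha=dw_\alpha^I$, which is \eqref{a}; here $w_\alpha^I$ is defined only up to a closed form. Applying $\Lc_I$ to $B_\alpha-B_\beta=dA_{\alpha\beta}$ and using \eqref{a}, the form $w_\alpha^I-w_\beta^I-\Lc_I A_{\alpha\beta}$ is closed on the contractible $\mathcal{U}_{\alpha\beta}$, hence equals $du_{\alpha\beta}^I$ for a function $u_{\alpha\beta}^I$ which I would take skew in $\alpha,\beta$; this is \eqref{b}. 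Then applying $\Lc_I$ to \eqref{equivariance} and substituting \eqref{b} for each of the three terms, the $w$-contributions telescope to zero, leaving $d\bigl(u_{\alpha\beta}^I+u_{\beta\gamma}^I+u_{\gamma\alpha}^I+\tfrac{1}{2\pi i}\Lc_I\log f_{\alpha\beta\gamma}\bigr)=0$; since $\mathcal{U}_{\alpha\beta\gamma}$ is connected this bracket is a constant, which I call $c_{\alpha\beta\gamma}^I$, giving \eqref{c}.

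Finally, take the coboundary of \eqref{c} over a fourfold overlap. The coboundary of the $u$-part is $\delta$ applied to $\delta$ of a $1$-cochain and vanishes identically, leaving $c_{\alpha\beta\gamma}^I-c_{\beta\gamma\delta}^I+c_{\gamma\delta\alpha}^I-c_{\delta\alpha\beta}^I=\tfrac{1}{2\pi i}\Lc_I\log\bigl(f_{\alpha\beta\gamma}f_{\beta\gamma\delta}^{-1}f_{\gamma\delta\alpha}f_{\delta\alpha\beta}^{-1}\bigr)$, whose argument is $\delta f_{\alpha\beta\gamma\delta}$ and hence a locally constant element of $U(1)$ with logarithm in $2\pi i\,\Zh$; acting with $\Lc_I$ and dividing by $2\pi i$ leaves a locally constant integer $m_{\alpha\beta\gamma\delta}^I\in\Zh$, which is \eqref{d}. (On this good cover one can choose the branches of $\log f$ coherently so that the right-hand side is $\Lc_I$ of a constant and $m^I$ vanishes; it is nonetheless natural to record it as an integer-valued $3$-cochain, since, as in the main text, it is the quantized object measuring whether the infinitesimal action integrates to $\mathbb{T}^n$ or only to its cover $\Rh^n$.) Running these identities in reverse shows, conversely, that any $(w_\alpha^I,u_{\alpha\beta}^I,c_{\alpha\beta\gamma}^I)$ obeying \eqref{a}--\eqref{d} assembles into such an action.

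I expect the main obstacle to be conceptual bookkeeping rather than any single computation: making precise what ``a torus action on a gerbe connection with invariant curvature'' is at the level of local data, so that $(w,u,c)$ genuinely are its components; handling the multivaluedness of the logarithms so that \eqref{c} and especially the integrality in \eqref{d} are unambiguous; checking skew-symmetry in the patch labels; and verifying that the construction is insensitive to the residual freedom $w_\alpha^I\mapsto w_\alpha^I+(\text{closed})$, $u_{\alpha\beta}^I\mapsto u_{\alpha\beta}^I+(\text{const})$ in the choice of primitives, which only reshuffles the $c_{\alpha\beta\gamma}^I$ by a coboundary of constants and leaves $m^I$ unchanged.
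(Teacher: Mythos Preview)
Your proposal is correct and follows essentially the same descent argument as the paper: apply $\Lc_I$ to the gerbe data and use the Poincar\'e lemma on the contractible patches and overlaps at each stage, obtaining $w$ from $\Lc_I H=0$, $u$ from the gluing $B_\alpha-B_\beta=dA_{\alpha\beta}$, $c$ from the equivariance condition \eqref{equivariance}, and \eqref{d} from the cocycle condition on $f$. Your write-up is in fact considerably more explicit than the paper's terse proof, and your care with the branch ambiguity of $\log f$ and the residual freedom in the primitives is appropriate (indeed, with your convention $\Lc_I\log f:=\imath_I(f^{-1}df)$ the derived $m^I$ comes out zero, as your parenthetical correctly notes; the integer in \eqref{d} is recording the a priori ambiguity rather than a nontrivial value forced by the descent).
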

\begin{proof}
The invariance of the curvature
$H$ of the gerbe connection implies $\Lc(\frac{\pd}{\pd\theta_I})H=0$. From this   it follows
that $\Lc(\frac{\pd}{\pd\theta_I})B_{\alpha}$ is  a closed form. Since the patch $\mathcal{U}_{\alpha}$
is contractible this closed form is exact, so we denote it by $dw_{\alpha}$. From
the gluing conditions for $B_{\alpha}$ we obtain \eqref{b} for some $u_{\alpha\beta}^I$, from the gluing
condition for $A_{\alpha\beta}$ one obtains \eqref{c}. Finally equation \eqref{d} comes
from the cocycle condition on $f_{\alpha\beta\gamma}$.
\end{proof}

Note that THE simplest   solution to  these equations is that in which $B_{\alpha}$ is invariant
with respect to the torus action in each patch.

Given a structure specified in the Theorem~\eqref{thm:B} we can construct a connection
$\Theta_{\#}$ and check whether or not we have a principal torus bundle.
 The curvature $F_{\#}=d \Theta_{\#}$ is
\begin{equation}
F_{\#}^I|_{\mathcal{U}_{\alpha}}=d\{w_{\alpha}^I-\imath(\tfrac{\pd}{\pd\theta_I})B_{\alpha}\}.
\end{equation}
The gluing conditions for these $1$-forms  are
\begin{equation}
\{w_{\alpha}^I-\imath(\tfrac{\pd}{\pd\theta_I})B_{\alpha}\}
-\{w_{\beta}^I-\imath(\tfrac{\pd}{\pd\theta_I})B_{\beta}\}
=d\{\imath(\tfrac{\pd}{\pd\theta_I})A_{\alpha\beta}+u_{\alpha\beta}^I\}.
\end{equation}
On threefold overlap the functions defined on the right hand side satisfy
\begin{equation}
\{\imath(\tfrac{\pd}{\pd\theta_I})A_{\alpha\beta}+u_{\alpha\beta}^I\}\bigr|_{\mathcal{U}_{\alpha\beta\gamma}}
+\dots=c_{\alpha\beta\gamma}.
\end{equation}
Thus $\{c_{\alpha\beta\gamma}\}$ is an obstruction to the geometrization of  $F_{\#}^I$:
if it does not vanish then one does not have a principal torus bundle with connection over $X$
whose curvature is $F_{\#}^I$.

\section{Reduction of a $U(1)$ bundle}
\label{app:u1}\setcounter{equation}{0}

We consider here a toy example of a reduction of a principal circle bundle
$L\stackrel{p}{\longrightarrow} X$ onto the torus fibration
$\mathbb{T}^n\hookrightarrow X\stackrel{\pi}{\longrightarrow}M$, specified
in subsection \ref{sec:sigmaT}. This example captures the essential
features of the affine bundle appearing in the gerbe reduction given in
section \ref{sec:gerbe}, but is considerably simpler.

We denote by $\Theta_{\#}$ a connection $1$-form on the total space
$L$ of the circle bundle. Locally it can be written as
\begin{equation}
\Theta_{\#}|_{X_{\alpha}}=d\psi_{\alpha}+p^*B_{\alpha}
\end{equation}
where $\psi_{\alpha}$ ($0\leqslant \psi_{\alpha\,I}<1$) is a coordinate on the circle in the patch $X_{\alpha}$ and $\{B_{\alpha}\}$  is a $1$-form.
We denote by $H\in\Omega^2_{\Zh}(X)$ the curvature of this connection, and assume
that both $H$ and $\{B_{\alpha}\}$ are invariant with respect to the torus action:
\begin{equation*}
\Lc\bigl(\tfrac{\pd}{\pd\theta_I}\bigr)H = 0
 \quad\text{and}\quad
\Lc\bigl(\tfrac{\pd}{\pd\theta_I}\bigr)B_{\alpha}=0.
\end{equation*}
The gluing conditions on two-fold overlaps $\{X_{\alpha\beta}\}$ are
\begin{equation}
\psi_{\alpha}|_{X_{\alpha\beta}}-\psi_{\beta}|_{X_{\alpha\beta}}=-\sigma_{\alpha\beta} \, , \qquad
B_{\alpha}\bigr|_{X_{\alpha\beta}}-B_{\beta}\bigr|_{X_{\alpha\beta}}=d\sigma_{\alpha\beta}
\label{psigluing}
\end{equation}
where $\{\sigma_{\alpha\beta}\}$ are functions on twofold overlaps satisfying the cocycle condition on threefold overlaps $X_{\alpha \beta \gamma}$:  $\sigma_{\alpha\beta}
+\sigma_{\beta\gamma}+\sigma_{\gamma\alpha}=0$.

The invariant $2$-form $H$ can be decomposed in horizontal forms:
\begin{equation}
H=\pi^*H_2+\langle \pi^*H_1,\Theta\rangle+\frac12\langle\pi^*H_0,\Theta\wedge\Theta\rangle,
\end{equation}
where $H_j$, $j=2,1,0$, are $j$-forms on the base manifold $M$. The assumption
of the invariance of the $1$-forms $B_{\alpha}$ with respect to the torus action
yields $H_0^{IJ}=0$.

In each coordinate patch we can decompose the $1$-form $B_{\alpha}$
into vertical and horizontal forms:
\begin{equation}
B_{\alpha}=B_{1\alpha}+\langle B_{0\alpha},\Theta\rangle,
\label{B1}
\end{equation}
where $B_{1\alpha}$ and $B_{0\alpha}$ are horizontal $1$- and $0$-forms respectively.
The gluing conditions take the form:
\begin{equation}
B_{1\alpha}\bigl|_{M_{\alpha\beta}}-B_{1\beta}\bigl|_{M_{\alpha\beta}}=(\pi^*d_M)\, \sigma_{\alpha\beta}\, , \qquad
B_{0\alpha}^I\bigl|_{M_{\alpha\beta}}-B_{0\beta}^I\bigl|_{M_{\alpha\beta}}= \Lc(\tfrac{\pd}{\pd\theta_I})\, \sigma_{\alpha\beta}.
\label{app:BB1gluing}
\end{equation}

The invariance of $\{B_{\alpha}\}$ with respect to the torus action yields
restrictions on a possible dependence of the gluing functions on the torus coordinates:
the right hand side of \eqref{app:BB1gluing} must be a pullback from the base manifold.
The most general solution of this conditions is
\begin{equation}
\sigma_{\alpha\beta}(\theta_{\beta})=\pi^*\tilde{\sigma}_{\alpha\beta}
+ \langle m_{\alpha\beta},\,\theta_{\beta}+\tfrac12\lambda_{\beta\alpha}\rangle ,
\label{sigma}
\end{equation}
where $m_{\alpha\beta}^{I}$ is an integral valued vector
and $\tilde{\sigma}_{\alpha\beta}$ is a function on the twofold overlap $M_{\alpha\beta}$
of the base manifold.
On the threefold overlaps  $\{M_{\alpha\beta\gamma} \}$ they satisfy the following conditions:
\begin{subequations}
\begin{align}
m_{\alpha\beta}^I + m_{\beta \gamma}^I + m_{\gamma \alpha}^I &=0 \, , \\
\tilde{\sigma}_{\alpha\beta} + \tilde{\sigma}_{\beta \gamma}+ \tilde{\sigma}_{\gamma \alpha} &= \frac{1}{2} \bigl(\langle m_{\alpha\beta},  \lambda_{\beta \gamma}\rangle
- \langle m_{\gamma \beta},  \lambda_{\beta \alpha}\rangle\bigr)
\end{align}
\label{newco}
\end{subequations}
where $\{\lambda_{\alpha\beta\,I}\}$ are gluing functions of the principal torus bundle $X$ (see
section~\ref{sec:sigma}).
The gluing conditions for $B_1$ and $B_0$ become
\begin{subequations}
\begin{align}
B_{1\alpha}\bigl|_{M_{\alpha\beta}}-B_{1\beta}\bigl|_{M_{\alpha\beta}}&=
d_M\tilde{\sigma}_{\alpha\beta} + \langle m_{\alpha\beta}, A_{\beta} - \tfrac12d_M\lambda_{\beta\alpha}\rangle \, ,
\\
B_{0\alpha}^I\bigl|_{M_{\alpha\beta}}-B_{0\beta}^I\bigl|_{M_{\alpha\beta}} &= m_{\alpha\beta}^{I}
\end{align}
\label{BB11gluing}
\end{subequations}
It is now not hard to see that the equation (\ref{sigma}) and the first of (\ref{BB11gluing}) define an affine
 $S^1\times\mathbb{T}^n$-torus bundle over $M$. On twofold overlaps ${M_{\alpha\beta}}$,
the gluing conditions for coordinates $\psi_{\alpha}$ and $\theta_{\alpha}$  and the affine connection
$\Theta_{\alpha}=d\theta_{\alpha}+A_{\alpha}$ and
$\tilde{\Theta}_{\alpha}=d\psi_{\alpha}+B_{1\alpha}$ are given by:
\begin{equation}
\begin{pmatrix}
\psi_{\alpha}+\frac12\,{\tilde \sigma}_{\alpha\beta}
\\
\theta_{\alpha}+\frac12\lambda_{\alpha\beta}
\end{pmatrix}
=
\begin{pmatrix}
1 & m_{\alpha\beta}
\\
0 & \mathbbmss{1}
\end{pmatrix}
\begin{pmatrix}
\psi_{\beta}+\frac12\,{\tilde \sigma}_{\beta\alpha}
\\
\theta_{\beta}+\frac12\lambda_{\beta\alpha}
\end{pmatrix} ,
\qquad
\begin{pmatrix}
\tilde{\Theta}_{\alpha}
\\
\Theta_{\alpha}
\end{pmatrix}
=
\begin{pmatrix}
1 & m_{\alpha\beta}
\\
0 & \mathbbmss{1}
\end{pmatrix}
\begin{pmatrix}
\tilde{\Theta}_{\beta}
\\
\Theta_{\beta}
\end{pmatrix}.
\end{equation}

Only when the gluing function $\sigma_{\alpha\beta}$ does not depend on torus coordinates, i.e.  $m_{\alpha\beta}^{I}  =0$, the reduction of the $U(1)$ bundle yields a $1$-form on $M$ and $n$ scalar fields. When the gluing function does not respect the torus action, the result of the reduction is given by an affine  $S^1\times\mathbb{T}^n$ fibration over $M$ and $n$ line bundles.

\section{Reduction of the current algebra}
\label{app:reductionCourant}\setcounter{equation}{0}

\paragraph{Current algebra.} Given a section $(v,\rho)$ of $TX\oplus T^*X$ one
can construct a current
\begin{equation}
J_{\epsilon}{(v,\rho)}=\oint_{S^1}d\sigma\,\epsilon(\sigma)
\bigl[\imath(v)p+\imath(\pd_{\sigma}x)\,\rho\bigr]
\end{equation}
where $\epsilon(\sigma)$ is a smooth (test) function on the circle.
From \eqref{omega} it follows that the Poisson bracket of two such currents is
\cite{Alekseev:2004np,Zabzine:2006uz,Guttenberg:2006zi}
\begin{equation}
\{J_{\epsilon_1}(v_1,\rho_1),\,J_{\epsilon_2}(v_2,\rho_2)\}
=J_{\epsilon_1\epsilon_2}\bigl([(v_1,\rho_1),(v_2,\rho_2)]_H\bigr)
-\frac12\oint_{S^1}d\sigma\,(\epsilon_1\pd_{\sigma}\epsilon_2-\epsilon_2\pd_{\sigma}\epsilon_1)
\bigl[\imath(v_1)\rho_2+\imath(v_2)\rho_1\bigr]
\label{JJ}
\end{equation}
where $[\cdot,\cdot]_H$ is the twisted Courant bracket. The twisted Courant
bracket is defined by
\begin{equation}
[(v_1,\rho_1),(v_2,\rho_2)]_H=[v_1,v_2]
+\Bigl\{\Lc({v_1})\rho_2-\Lc({v_2})\rho_1
-\frac12\,d(\imath({v_1})\rho_2-\imath({v_2})\rho_1)
+\imath({v_1})\imath({v_2}) H
\Bigr\}
\end{equation}
where $[\cdot,\cdot]$ denotes the commutator of vector fields.
Note that one can rewrite the Poisson bracket above in a slightly
different form: as a twisted Courant bracket on the $T\oplus T^*$  bundle over  $X\times S^1$
(see equation (30) in \cite{Alekseev:2004np}).

\paragraph{Reduction of the Courant bracket.}

Taking  $X$ to be  a principal torus bundle,  we can study the reduction
of the twisted current algebra to the base $M$.

We start by decomposing the sections of  $TX\oplus T^*X$ into horizontal and vertical components.  Any vector $v$ and one-form $\rho$ can be written as
\begin{align}
v &= v_M + \langle K, f\rangle  \nonumber \\
\rho &= \rho_M + \langle \phi, \Theta   \rangle \nonumber
\end{align}
Demanding that both ${\cal L}_K v = 0$ and  ${\cal L}_K \rho = 0$, implies in particular  $f \in \Omega^0(M,\mathfrak{t})$ and $\phi \in \Omega^0(M,  \mathfrak{t}^*)$.
In other words, a $\mathbb{T}^n$-invariant section of $TX$ can be written as an element $(v_M, f) \in
TM \oplus \mathfrak{t}$, while a  $\mathbb{T}^n$-invariant section of $TX^*$ can be written as $(\rho_M, \phi) \in T^*M\oplus  \mathfrak{t}^*$. Given these elements, we can introduce some basic operations replacing the contractions,  Lie brackets and  Lie derivatives:
\begin{align}
\imath({(v_M,f)}) (\lambda_M, \omega) &= \imath({v_M}) \lambda_M + \langle \omega, f \rangle  \nonumber \\
d(\lambda_M, \omega) &= (d\lambda_M + \langle  \omega, F \rangle, -d \omega)  \nonumber \\
{\cal L}_{(v_M,f)} (\lambda_M, \omega) &= ({\cal L}_{v_M} \lambda_M + \langle \omega , \imath({v_M}) F + df    \rangle, \, {\cal L}_{v_M}  \omega)  \nonumber \\
[ (v_M,f), (w_M, g)] &= ([v_m, w_M] ,  \imath({v_M})  \imath({w_M}) F + {\cal L}_{v_M} g - {\cal L}_{w_M} f )
\nonumber
\end{align}
In this notation, a contraction of the element  $(v_M, f)$ with a  $p$-form in $\Omega^p_{\Zh}(X)$ can be thought of as a collection of  forms in  $\Omega^i_{\Zh}(M, \Lambda^{p-i-1}\mathfrak{t})$ for $i=0,..., p-1$. In particular, for
 $H\in \Omega^3_{\Zh}(X)$,
\begin{equation*}
\imath({(v_M,f)}) H = \Bigl( (\imath({v_M})  H_3 + \langle H_2, f \rangle) , (\imath({v_M})  H_2  - \langle  H_1 , f \rangle) , (\imath({v_M})  H_1 + \langle  H_0, f \rangle) \Bigr)
\end{equation*}
We can now write down the reduction of the twisted Courant algebra to the base $M$ in a compact form.

\begin{thm}
\label{thm:A1}
The space of $\mathbb{T}^n$-invariant sections of $TX\oplus T^*X$
is isomorphic to $\Gamma(TM\oplus T^*M\oplus \mathfrak{t}\oplus \mathfrak{t}^*)$.
The Courant bracket on $TX\oplus T^*X$ yields the following
bracket on $\mathbb{T}^n$-invariant sections
\begin{align}
[&(v_M,f ; \, \rho_M, \phi),  \,(w_M, g; \lambda_M, \omega)]_H=
\Bigl( [ (v_M,f), (w_M, g)] ;
\nonumber \\
&{\cal L}_{(v_M,f)} (\lambda_M, \omega) - {\cal L}_{(w_M,g)} (\rho_M, \phi)
+ \frac{1}{2} (\imath({(v_M,f)}) (\lambda_M, \omega) - \imath({(w_M,g)}) (\rho_M, \phi) )
+ \imath({(v_M,f)}) \imath({(w_M,g)}) H \Bigr) \, .\nonumber
\end{align}
\end{thm}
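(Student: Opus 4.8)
The plan is to fix once and for all the connection $\Theta$ on $X$ from subsection~\ref{sec:sigmaT}, use it to split $TX\oplus T^{*}X$ into horizontal and vertical parts, and then compute the effect of the twisted Courant bracket on the invariant pieces. For the first assertion: any vector field on $X$ can be written $v=v_{M}+\langle K,f\rangle$, where $v_{M}$ is the horizontal lift (via $\Theta$) of a vector field on the base and $f=\imath(\Theta)v$, and any one-form as $\rho=\rho_{M}+\langle\phi,\Theta\rangle$ with $\rho_{M}$ horizontal and $\phi=\imath(\tfrac{\pd}{\pd\theta})\rho$. Imposing $\Lc(K)v=0$ forces $v_{M}$ to descend to $TM$ and $f$ to be fibrewise constant, i.e.\ $f\in\Omega^{0}(M;\mathfrak{t})$; likewise $\Lc(K)\rho=0$ makes $\rho_{M}$ a pullback $\pi^{*}\rho_{M}$ and $\phi\in\Omega^{0}(M;\mathfrak{t}^{*})$. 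The map $(v,\rho)\mapsto(v_{M},f;\rho_{M},\phi)$ is linear, and invertible because $\Theta$ furnishes an explicit inverse, so it is the claimed isomorphism onto $\Gamma(TM\oplus T^{*}M\oplus\mathfrak{t}\oplus\mathfrak{t}^{*})$.

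The substance of the proof is to check that each elementary operation entering the twisted Courant bracket --- the Lie bracket of vector fields, the Lie derivative $\Lc(v)\rho$, the contraction $\imath(v)\rho$, the exterior derivative $d\rho$, and the twisting term $\imath(v_{1})\imath(v_{2})H$ --- when restricted to invariant sections and pushed through the isomorphism, reproduces exactly the ``base'' operation displayed just above the statement of Theorem~\ref{thm:A1}. I would treat these one at a time. The only tools needed are: the decomposition \eqref{hor_diff} of $d$ into a horizontal part $\pi^{*}d_{M}$ and a vertical part, together with the observation that the vertical piece $\Theta_I\wedge\Lc(\tfrac{\pd}{\pd\theta_I})$ annihilates invariant forms; the relation $d\Theta=\pi^{*}F$; Cartan's formula $\Lc(v)=d\,\imath(v)+\imath(v)\,d$; and the fact that the vertical component of the bracket of two horizontal lifts is controlled by the curvature $F$ (this is what produces the $\imath(v_{M})\imath(w_{M})F$ term in the reduced Lie bracket and the $\langle\omega,\imath(v_{M})F\rangle$ term in the reduced Lie derivative). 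Substituting the decomposition \eqref{iso-H} of $H$ into the double contraction yields the three-component expression for $\imath((v_{M},f))H$ quoted before the theorem, hence the twisting term in the reduced bracket.

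With those identities in hand the conclusion is immediate: the twisted Courant bracket is a fixed algebraic combination of precisely these operations, so replacing each by its reduced form gives the formula asserted in Theorem~\ref{thm:A1}; in particular the $d$ appearing inside the bracket is automatically the twisted differential $d(\lambda_{M},\omega)=(d\lambda_{M}+\langle\omega,F\rangle,-d\omega)$, since that is how $d$ on $X$ acts on an invariant one-form. The only real work --- and the place where care is required --- is in the previous paragraph: getting all the curvature insertions ($\langle\omega,F\rangle$, $\imath(v_{M})F$, $df$) and their signs correct in the Lie-derivative and exterior-derivative reductions, and keeping the index-placement and wedge-ordering conventions for $\langle\cdot,\cdot\rangle$ consistent throughout. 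None of it is conceptually deep, but it is easy to drop a sign.
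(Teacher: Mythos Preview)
Your proposal is correct and follows essentially the same route as the paper: the appendix does not give a separate proof but simply records, just before the theorem, the reduced contraction, exterior derivative, Lie derivative and Lie bracket on $(v_M,f)$ and $(\lambda_M,\omega)$, and then states the theorem as the evident assembly of these pieces into the Courant bracket --- exactly as you outline. One small point: you speak of ``the $d$ appearing inside the bracket'', whereas the displayed formula in the theorem has $+\tfrac12\bigl(\imath((v_M,f))(\lambda_M,\omega)-\imath((w_M,g))(\rho_M,\phi)\bigr)$ with no $d$ and a plus sign; comparing with the paper's own definition of the Courant bracket a few lines earlier, this is a typographical slip in the theorem statement (it should read $-\tfrac12\,d(\cdots)$), and your computation would of course produce the corrected version.
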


\vspace{2mm}
\noindent
The reduced Courant bracket in Theorem~\ref{thm:A1} can be cast as
\begin{align}
[(v_M,f ;  & \rho_M, \phi),   \,(w_M, g;   \lambda_M, \omega)]_H=
[(v_M;  \rho_M ),  \,(w_M; \lambda_M)]_{H_3} + \nonumber \\
& \Bigl(0,{\cal L}_{v_M} g - {\cal L}_{w_M} f ; \langle \omega, df\rangle -  \langle  \phi, dg \rangle  +\frac{1}{2}d( \langle \omega, f \rangle  -  \langle  \phi, g \rangle  ), {\cal L}_{v_M} \omega - {\cal L}_{w_M} \phi   \Bigr) + \nonumber \\
&\Bigl( 0,  \imath({v_M})  \imath({w_M}) F  ;   \langle  \omega ,  \imath({v_M}) F \rangle + \langle  \imath({v_M}) F_{\#} , g\rangle  -  \langle   \imath({w_M}) F_{\#} , f\rangle  - \langle  \phi , \imath({w_M}) F  \rangle ,   \imath({v_M})  \imath({w_M})   F_{\#}  \Bigr) - \nonumber \\
&\Bigl(  0,0; \langle H_1, [f,g]  \rangle,  \langle H_0, [f,g] \rangle \Bigr)  \, .\nonumber
\end{align}
The rhs of the first line is the Courant bracket on the base $M$  twisted by  a $3$-form $H_3$, which in general will not be cl osed. Adding the second  line  amounts to extending the bracket  to  $M \times \mathbb{T}^n$ (or $M \times \mathbb{R}^n$)  \cite{CourantRed}. On the third line we recover  $F_{\#}^I=H_2^I-H_1^{IJ}\wedge \Theta_J
+\frac12\,H_0^{IJK}\Theta_J\wedge\Theta_K$; in the absence of nontrivial $B_1$ and $B_0$ it
displays an explicit $O(n,n, \mathbb{Z})$ symmetry, which exchanges the terms containing $F_I$ and $H^I_2$,  reflecting the fact that there are two independent principal tori  on $M$ (and thus two choices to which of the two forms corresponds to the curvature of the connection, and which to twisting).
  It may appear strange that in the general case the natural $O(n,n, \mathbb{Z})$ action is on $2$-forms $F_{\#}^I$ rather than ${\tilde F}^I$ preferred by the sigma model. However nontrivial $B^{IJ}_0$ and $H^{IJK}_0$ have  contributions that spoil this symmetry of the Courant bracket.\footnote{This fact has been noticed  in particular by P.~Bouwknegt.} These are collected in the last line (with $[.,.]$ denoting antisymmetrization in $I,J$ indices).

{\small

}

\end{document}